\def\dOi{10(4:17)2014}
\let\savecc\cc
\let\cc\relax
\let\cc\savecc
\begin{document}

\title[Quantified CTL:  expressiveness and complexity]{Quantified CTL:
  expressiveness and complexity\rsuper*}

\titlecomment{{\lsuper*}This is a long version of paper~\cite{DLM12}, which appeared in CONCUR'12.}

\author[F.~Laroussinie]{Fran\c cois~Laroussinie\rsuper a}
\address{{\lsuper a}LIAFA -- Universit\'e Paris Diderot \& CNRS}
\email{francoisl@liafa.univ-paris-diderot.fr}

\author[N.~Markey]{Nicolas Markey\rsuper b}
\address{{\lsuper b}LSV -- ENS Cachan \& CNRS}
\email{markey@lsv.ens-cachan.fr}
\thanks{{\lsuper b}This work benefited from the support of the ERC Starting Grant
  EQualIS and of the EU-FP7 project Cassting.}

\keywords{Temporal logics; model checking; expressiveness; tree automata.}

\begin{abstract}
  While it was defined long ago, the extension of \CTL with quantification
  over atomic propositions has never been studied extensively. Considering two
  different semantics (depending whether propositional quantification refers
  to the Kripke structure or to its unwinding tree), we~study its
  expressiveness (showing in particular that \QCTL coincides with Monadic
  Second-Order Logic for both semantics) and characterise the complexity of
  its model-checking and satisfiability problems, depending on the number of nested propositional
  quantifiers (showing that the structure semantics populates the polynomial
  hierarchy while the tree semantics populates the exponential hierarchy).
\end{abstract}

\maketitle

\section{Introduction}

\paragraph*{Temporal logics.}
Temporal logics extend propositional logics with modalities for specifying
constraints on the order of events in time. Since~\cite{Pnu77,CE82,QS82a}, they
have received much attention from the computer-aided-verification community,
since they fit particularly well for expressing and automatically verifying
(\emph{model checking}) properties of reactive systems.

Two important families of temporal logics have been considered: linear-time
temporal logics (\eg~\LTL~\cite{Pnu77}) can be used to express properties of
one single execution of the system under study, while branching-time temporal
logics (\eg~\CTL~\cite{CE82,QS82a} and \CTL*~\cite{EH86}) consider the
execution tree. Since the 90s, many extensions of these logics have been 
introduced, of which alternating-time temporal logics
(such~as \ATL, \ATL*~\cite{AHK97}) extend \CTL towards the study of
open systems (involving several agents).

In this landscape of temporal logics, both \CTL and \ATL enjoy the nice
property of having polynomial-time model-checking algorithms. In~return
for this, both logics have quite limited expressiveness. 
Several extensions have been defined in order to increase this limited expressive power.

\paragraph*{Our contributions.}
We~are interested in the present paper in the extension of \CTL (and~\CTL*)
with \emph{propositional quantification}~\cite{Sis83,ES84a}. 
In~that setting, propositional quantification can take different meaning,
depending whether the extra propositions label the Kripke structure under
study (\emph{structure semantics}) or its execution tree (\emph{tree semantics}).
While these extensions of \CTL with propositional quantification have been in
the air for thirty years, they have not been extensively studied yet:
some complexity results have been published for existential
quantification~\cite{Kup95a}, for the two-alternation fragment~\cite{KMTV00}
and for the full extension~\cite{Fre01}; but expressiveness issues, as well as
a complete study of model checking and satisfiability for the whole hierarchy, have been mostly
overlooked. 

We~answer these questions in the present paper: in~terms of
expressiveness, we prove that \QCTL and \QCTL* are equally expressive, and coincide with Monadic
Second-Order Logic\footnote{This claim assumes a special notion of equivalence
  between formulas, since \MSO is evaluated \emph{globally} on a structure
  while \QCTL formulas are evaluated at the initial state. This will be made
  clear in the paper.}. 
As~regards satisfiability and model-checking, we~characterise the complexity
of these problems depending on the quantifier alternation: under the structure
semantics, the model-checking problem populates the polynomial-time hierarchy
(and satisfiability is undecidable); for the tree semantics, the
model-checking problem  populates the
exponential-time hierarchy (and so does the satisfiability problem).
Finally, we~also characterise the model- and
formula-complexities of our problems, when one of the inputs to the
model-checking problem is fixed. All these results are summarized in
Tables~\ref{tab:results-structure}
and~\ref{tab:results-tree}, which are displayed in the conclusion of this paper.

\paragraph*{Applications to alternating-time temporal logics.}
Our initial motivation for this work comes from  alternating-time temporal logics.
Indeed \ATL also has several flaws in terms of expressiveness: namely, it~can
only focus on (some) zero-sum properties, \ie, on purely antagonist
games, in which two coalitions fight with opposite objectives. In~many
situations,
games are not purely antagonist, but involve several independent
systems, each having its own objective. 
Recently, several extensions of \ATL have
been defined to express properties of such non-zero-sum games. Among those, our logic
\ATLsc~\cite{DLM10} 
extends \ATL with \emph{strategy contexts}, which provides a way of
expressing interactions between strategies. Other similar approaches
include Strategy Logics~(\SL)~\cite{CHP07b,MMV10a}, 
(Basic) Strategy-Interaction Logic (\textsf{(B)SIL})~\cite{WHY11}, or
Temporal Cooperation Logic~(\textsf{TCL})~\cite{HSW13}.

\smallskip
Designing decision procedures for these
extensions is much more difficult than for the standard \ATL fragment. 
Interestingly, \QCTL appears to be a convenient, uniform intermediary
logic in order to obtain algorithms for \ATLsc, \SL and related
formalisms.  Indeed, strategies of the players can be
represented\footnote{Notice that the link between strategy quantification and
  propositional quantification already emerges in \Qdmu~\cite{Pin07a},
  which extends the \emph{decision \mucalculus} with some flavour of
  propositional quantification. Also, the main motivation
  of~\cite{KMTV00} for studying the two-alternation fragment of \QCTL
  is a hardness result for the control and synthesis of open systems.}
by a finite set of atomic propositions labelling the execution tree of
the game structure under study. Strategy quantification is then
naturally expressed as propositional quantification; since the
resulting labelling is \emph{persistent}, it~can encode interactions
between strategies.  Notice that while the \emph{tree semantics} of
\QCTL encodes plain strategies, the \emph{structure semantics} also
finds a meaning in that translation, as it may correspond to
\emph{memoryless strategies}~\cite{DLM12}.

Using such a translation, any instance of the model-checking problem
for \ATLsc (or~\SL) can be translated into an instance of the
model-checking problem for \QCTL. The algorithms proposed in this
paper then yield algorithms for the former problems, which can be
proved to have optimal complexity. Unfortunately, the satisfiability
problem cannot follow the same reduction scheme: indeed, when
translating an \ATLsc formula into a \QCTL~one, we~need to know the
set of agents and their allowed moves. It~turns out that
satisfiability is undecidable for \ATLsc and \SL (while we prove~it
decidable for \QCTL in the tree semantics). Interestingly, when
restricting satisfiability checking to \emph{turn-based} game
structures, an alternative translation into \QCTL can be used to
obtain decidability of the problem.

Because they involve a lot of new definitions and technical proofs, we
do not develop these questions here, and refer the interested reader
to~\cite{rr-ATLsc} for full details.

\paragraph*{Related works.}

Extending modal logics with quantification dates back to early works
of Kripke~\cite{Kri59}) and Fine~\cite{Fin70}.  We~refer
to~\cite{FM98,AP06,tC06} for more details.

(Propositional) quantification was first used in temporal logics by
Sistla and others, both for linear-time~\cite{Sis83,SVW87} and
branching-time temporal logics~\cite{ES84a}, mainly with the aim of
augmenting the expressiveness of the classical logics. In the
linear-time setting, the model-checking problem for the
$k$-alternation fragment was
shown \EXPSPACE[k]-complete~\cite{Sis83,SVW87}. 
The stutter-invariant fragment of \QLTL, with a restricted notion of
propositional quantification, was developed in~\cite{Ete99}.
Proof systems for \QLTL were developed in the linear-time setting, both with and without
past-time modalities~\cite{KP02a,FR03}.

As regards branching time, the extension of \CTL* with external
existential quantification (hereafter called \EQnCTL1*) was proved as
expressive as parity tree automata over binary
tree~\cite{ES84a}.  The existential logics \EQnCTL1 and \EQnCTL1* were
further studied in~\cite{Kup95a}, both in the structure- and in the
tree semantics; model checking \EQnCTL1 and \EQnCTL1* are shown \NP-
and \PSPACE-complete respectively (for the structure semantics) and
\EXPTIME- and \EXPTIME[2]-complete respectively (for the tree semantics).  The
extensions of those logics with past-time modalities were studied
in~\cite{KP95b}. The extensions with arbitrary quantification were
studied in~\cite{Kai97,Fre01} (in slightly different settings):
satisfiability of \QCTL* was proven undecidable in the structure semantics,
and decidable in the tree semantics~\cite{Fre01}. 

Several alternative semantics were proposed for quantification: the
\emph{amorphous} semantics defined in~\cite{Fre01} allows to take a
bisimilar structure before labelling~it. In~\cite{RP03a},
quantification is expressed as taking a synchronized product with a
labelling automaton. Finally, quantification over states (rather than
over atomic propositions) is studied in~\cite{PBDDC02,CDC04}, where
model checking is proved \PSPACE-complete (both for branching-time and
for linear-time).

\smallskip

Finally, quantified temporal logics have found applications in model
checking and control: \AQnLTL1 and \AQnCTL1* have been used to reason about
\emph{vacuity detection} (checking whether a formula is satisfied
``\emph{too~easily}'')~\cite{AFFGPTV03,GC04,GC12}. The one-alternation
fragments (which we call \EQnCTL2 and \EQnCTL2* hereafter) have been used
in~\cite{KMTV00} to prove hardness results for the control problem with \CTL
and \CTL* objectives. The linear-time logic \EQLTL~was used as the specification language for
supervisory control of Petri nets in~\cite{Mar10}. To~conclude, propositional
quantification was considered in the setting of \emph{timed} temporal logics
in~\cite{HRS98}.

\section{Preliminaries}

\subsection{Kripke structures and trees}
We fix once and for all a set~$\AP$ of atomic propositions.

\begin{definition}
  A \emph{Kripke structure}~$\calS$ is a $3$-tuple $\tuple{Q,R,\ell}$
  where $Q$~is a countable set of states, $R \subseteq Q^2$ is a
  total\footnote{\emph{I.e.}, for all~$q\in Q$, there exists~$q'\in Q$
    s.t. $(q,q')\in R$.}  relation and $\ell\colon Q \rightarrow
  2^\AP$ is a labelling function.
  The size of~$\calS$, denoted with~$\size\calS$, is the size of~$Q$ (which
  can be infinite).
\end{definition}

Let $\calS$ be a Kripke structure $\tuple{Q,R,\ell}$.  In the
following, we always assume that the set of states~$\calS$ is equipped with a
total linear order~$\preceq$.
We~use $\Succ_\calS(q)$ to denote the ordered list
$\tuple{q'_0,\ldots,q'_k}$ of successors of~$q$ in~$\calS$ (\ie, such that
$(q,q'_i)\in R$ for any ${0\leq i \leq k}$, and such that $q'_i\preceq q'_j$
if, and only~if, $i\leq j$). We~write $\deg_\calS(q)$ for the
degree of~$q\in Q$, \ie, the size of~$\Succ_\calS(q)$. Finally 
$\Succ_\calS(q,i)$ denotes the $i$-th successor of~$q$ in~$\calS$
for ${0\leq i < \deg_\calS(q)}$, and this notation is extended to finite words
over~$\Nat^*$ as follows: $\Succ_\calS(q,\varepsilon) = q$ and 
$\Succ_\calS(q,w\cdot i) = \Succ_\calS(\Succ_\calS(q,w),i)$ when $q'=
\Succ_\calS(q,w)$ is well defined and ${0\leq i < \deg_\calS(q')}$.

An execution (or path) in~$\calS$ is an infinite sequence $\rho =
(q_i)_{i\in\bbN}$ s.t. ${(q_i,q_{i+1})\in R}$ for all~$i\in\bbN$. We~use
$\Exec(q)$ to denote the set of executions issued from~$q$ and
$\Execf(q)$ for the set of all \emph{finite} prefixes of executions
of~$\Exec(q)$. 
%
Given~$\rho\in\Exec(q)$ and $i\in\bbN$, we~write $\rho^i$ for the
path~$(q_{i+k})_{k\in\bbN}$ of~$\Exec(q_i)$ (the~$i$-th suffix
of~$\rho$), $\rho_i$~for the finite prefix~$(q_k)_{k\leq i}$
(the~$i$-th prefix), and $\rho(i)$~for the $i$-th state~$q_i$.
Given a path~$\rho=(q_i)_{i\in\bbN}$, we~write $\ell(\rho)$ for the sequence
$(\ell(q_i))_{i\in\bbN}$, and $\Inf(l(\rho))$ for the set of letters
in~$\Sigma$ that appear infinitely many times along~$\ell(\rho)$.

\begin{definition}
Let~$\Sigma$ be a finite set.
A \emph{$\Sigma$-labelled tree} is
a pair~$\calT = \tuple{T, l}$, where
\begin{itemize}
\item $T\subseteq \Nat^*$ is a non-empty set of finite words on~$\Nat$
  satisfying the following constraints: for~any non-empty
  word~$x=y\cdot c$ in~$T$ with~$y\in \Nat^*$ and~$c\in \Nat$, the
  word~$y$ is in~$T$ and every word $y\cdot c'$ with $0\leq c' < c$ is
  also in $T$;
\item $l\colon T \to \Sigma$ is a labelling function.
\end{itemize}
\end{definition}

Let $\calT=\tuple{T,l}$ be a $\Sigma$-labelled tree. The elements of~$T$ are
the \emph{nodes} of $\calT$ and the empty word~$\varepsilon$ is the root
of~$\calT$.
Such a tree can be seen as a Kripke structure, with~$T$ as set
of states, and transitions from any node~$x\in T$ to any node of the form~$x\cdot
c\in T$, for~$c\in\bbN$. The size of~$\calT$, and the notions 
of \emph{successors} of a node~$x$ (written~$\Succ_\calT(x)$), 
of~\emph{degree} of a node~$x$ (written $\deg_\calT(x)$), 
of~\emph{path} issued from the root (whose set is written $\Exec_\calT$),
follow from this correspondence.

A~tree has \emph{bounded branching} if the degree of all its nodes is bounded.
Given a finite set of integers $\calD \subseteq \Nat$,
a~$\tuple{\Sigma,\calD}$-tree is a $\Sigma$-labelled tree $\tuple{T,l}$ whose
nodes have their degrees in~$\calD$ (\ie, for any $x\in T$, it~holds
$\deg_\calT(x) \in \calD$).
Given a node $x\in T$, we~denote with~$\calT_x$ the (sub)tree
$\tuple{T_x,l_x}$ rooted at~$x$, defined by $T_x= \{y \in T \mid
  \exists z \in T \mbox{ s.t. } z = x\cdot y\}$.

\begin{definition}
Given a finite-state Kripke structure $\calS = \tuple{Q,R,\ell}$ and
a state ${q\in Q}$, the \newdef{unwinding of~$\calS$ from~$q$} is the
(bounded-degree) $2^\AP$-labelled tree $\calT_\calS(q) =
\tuple{T_{\calS,q},\ell_\calT}$ defined as follows: 
\begin{enumerate}
\item $T_{\calS,q}$
contains exactly all nodes $x \in \Nat^*$ such that $\Succ_\calS(q,x)$ is well-defined 
\item  $\ell_\calT(x) = \ell(\Succ_\calS(q,x))$. 
\end{enumerate}
\end{definition}
If $\calD = \bigcup_{q\in Q} \{\deg_\calS(q)\}$, then $\calT_\calS(q)$ clearly
is a $\tuple{2^\AP,\calD}$-tree. Note also that any $2^\AP$-labelled tree can
be seen as an infinite-state Kripke structure.

\medskip
For a 
function $\ell\colon Q\to 2^\AP$ and $P\subseteq \AP$,
we write $\ell\cap P$ for the 
function defined as
$(\ell\cap P)(q)=\ell(q)\cap P$ for all $q\in Q$.

\begin{definition}
  For $P\subseteq \AP$, two (possibly infinite-state) Kripke
  structures $\calS=\tuple{Q,R,\ell}$ and $\calS'=\tuple{Q',R',\ell'}$
  are \emph{$P$-equivalent} (denoted by $\calS \equiv_P \calS'$) if
  $Q=Q'$, $R=R'$ and $\ell\cap P = \ell'\cap P$.
\end{definition}

\begin{figure}[!ht]
\centering
\begin{tikzpicture}[inner sep=0pt]
\begin{scope}
\path (0,0) node[above left=4mm] {$\calS_0$};
\draw (0,0) node[rond] (A) {$r$} node[below left=2.5mm] {$\scriptstyle q_0$};
\draw (0,-2) node[rond] (B) {} node[below left=2.5mm] {$\scriptstyle q_1$};
\draw[-latex'] (A) .. controls +(-115:8mm) and +(115:8mm) .. (B);
\draw[-latex'] (B) .. controls +(65:8mm) and +(-65:8mm) .. (A);
\draw[-latex'] (A) .. controls +(-30:10mm) and +(30:10mm) .. (A);
\draw[-latex'] (B) .. controls +(-30:10mm) and +(30:10mm) .. (B);
\end{scope}
\begin{scope}[xshift=3cm]
\path (0,0) node[above left=4mm] {$\calS_1$};
\draw (0,0) node[rond] (A) {$p,r$} node[below left=2.5mm] {$\scriptstyle q_0$};
\draw (0,-2) node[rond] (B) {} node[below left=2.5mm] {$\scriptstyle q_1$};
\draw[-latex'] (A) .. controls +(-115:8mm) and +(115:8mm) .. (B);
\draw[-latex'] (B) .. controls +(65:8mm) and +(-65:8mm) .. (A);
\draw[-latex'] (A) .. controls +(-30:10mm) and +(30:10mm) .. (A);
\draw[-latex'] (B) .. controls +(-30:10mm) and +(30:10mm) .. (B);
\end{scope}
\begin{scope}[xshift=6cm]
\path (0,0) node[above left=4mm] {$\calS_2$};
\draw (0,0) node[rond] (A) {$r$} node[below left=2.5mm] {$\scriptstyle q_0$};
\draw (0,-2) node[rond] (B) {$p$} node[below left=2.5mm] {$\scriptstyle q_1$};
\draw[-latex'] (A) .. controls +(-115:8mm) and +(115:8mm) .. (B);
\draw[-latex'] (B) .. controls +(65:8mm) and +(-65:8mm) .. (A);
\draw[-latex'] (A) .. controls +(-30:10mm) and +(30:10mm) .. (A);
\draw[-latex'] (B) .. controls +(-30:10mm) and +(30:10mm) .. (B);
\end{scope}
\begin{scope}[xshift=9cm]
\path (0,0) node[above left=4mm] {$\calS_3$};
\draw (0,0) node[rond] (A) {$p,r$} node[below left=2.5mm] {$\scriptstyle q_0$};
\draw (0,-2) node[rond] (B) {$p$} node[below left=2.5mm] {$\scriptstyle q_1$};
\draw[-latex'] (A) .. controls +(-115:8mm) and +(115:8mm) .. (B);
\draw[-latex'] (B) .. controls +(65:8mm) and +(-65:8mm) .. (A);
\draw[-latex'] (A) .. controls +(-30:10mm) and +(30:10mm) .. (A);
\draw[-latex'] (B) .. controls +(-30:10mm) and +(30:10mm) .. (B);
\end{scope}
\end{tikzpicture}
\caption{Four $\{r\}$-equivalent Kripke structures}\label{fig-peq}
\end{figure}
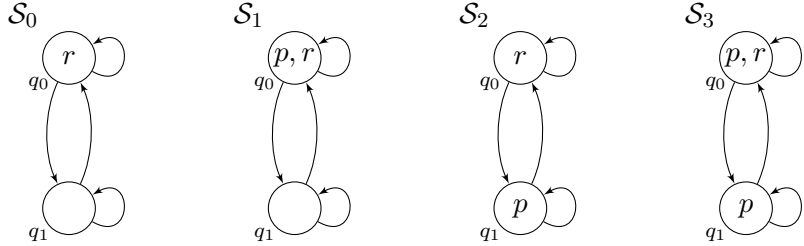

\subsection{\CTL and quantified extensions}

\begin{definition}
\label{def-QCTLs}
The syntax of \QCTL* is defined by the following grammar:
\begin{xalignat*}1
\phis,\psis \coloncolonequals\, & p \mid \non\phis \mid \phis\ou\psis
 \mid  \Ex \phip   \mid  \exists p.\;\phis   \\
 \phip,\psip \coloncolonequals\, &  \phis \mid \non\phip \mid \phip \ou \psip 
\mid \X\phip\mid  \phip\Until\psip 
\end{xalignat*}
where $p$ ranges over~$\AP$. Formulas defined as~$\phis$ are called
\emph{state-formulas}, while $\phip$ defines \emph{path-formulas}. Only state
formulas are \QCTL* formulas.
\end{definition}

\begin{figure}[!ht]
\centering
\begin{tikzpicture}
\begin{scope}[yscale=.75]
\draw (0,0) node[draw,circle,minimum height=5mm] (All) {} node {$\forall z$};
\draw (0,-1) node[draw,circle,minimum height=5mm] (ou) {} node {$\ou$};
\draw (-1,-2) node[draw,circle,minimum height=5mm] (non) {} node {$\non$};
\draw (-1,-3) node[draw,circle,minimum height=5mm] (z1) {} node {$z$};
\draw (1,-2) node[draw,circle,minimum height=5mm] (Ex) {} node {$\Ex$};
\draw (1,-3) node[draw,circle,minimum height=5mm] (X) {} node {$\X$};
\draw (1,-4) node[draw,circle,minimum height=5mm] (z2) {} node {$z$};
\path[-latex'] (All) edge (ou) (ou) edge (non) (non) edge (z1);
\path[-latex'] (ou) edge (Ex) (Ex) edge (X) (X) edge (z2);
\end{scope}
\begin{scope}[yscale=.75,xshift=4cm]
\draw (0,0) node[draw,circle,minimum height=5mm] (All) {} node {$\forall z$};
\draw (0,-1) node[draw,circle,minimum height=5mm] (ou) {} node {$\ou$};
\draw (-1,-2) node[draw,circle,minimum height=5mm] (non) {} node {$\non$};
\draw (0,-4) node[draw,circle,minimum height=5mm] (z) {} node {$z$};
\draw (1,-2) node[draw,circle,minimum height=5mm] (Ex) {} node {$\Ex$};
\draw (1,-3) node[draw,circle,minimum height=5mm] (X) {} node {$\X$};
\path[-latex'] (All) edge (ou) (ou) edge (non) (non) edge (z);
\path[-latex'] (ou) edge (Ex) (Ex) edge (X) (X) edge (z);
\end{scope}
\end{tikzpicture}
\caption{Two representations of $\forall z. (\non z \ou \Ex \X z)$}
\label{fig-sizephi}
\end{figure}
Naturally, any \QCTL* formula~$\phi$ can be represented as a finite
tree~$\calT_\phi$, in which each node represents a subformula 
(see Fig.~\ref{fig-sizephi}).
Alternatively, formula~$\phi$ can be seen as a finite acyclic
Kripke structure~$\calS_\phi$
whose unwinding
is~$\calT_\phi$. The \emph{size} of~$\phi$ is the size
of~$\calT_\phi$, and its \emph{DAG-size} (for \emph{directed-acyclic-graph
  size}) is the size of the smallest Kripke structure~$\calS_\phi$ whose
unwinding is~$\calT_\phi$. Obviously, when sharing large subformulas, the
size of a formula can be significantly larger than its DAG-size.

We use standard abbreviations as: $\top \eqdef p\ou\non p$, $\bot
\eqdef \non\top$,  
$\F\phi \eqdef \top \Until \phi$, $\G\phi \eqdef
\non\F\non\phi$, 
$\All\phi \eqdef \non\Ex\non\phi$, and
$\forall p.\; \phi \eqdef \non\exists p.\; \non\phi$. 
%
The logic \QCTL is a fragment of \QCTL* where temporal modalities are
under the immediate scope of path quantifiers:
\begin{definition}
\label{def-QCTL}
The syntax of \QCTL is defined by the following grammar:
\begin{multline*}
\phis,\psis  \coloncolonequals  p \mid \non\phis \mid \phis\ou\psis  \mid  \exists p.\;\phis
    \mid \\ \Ex \phis \Until \psis  \mid \All \phis
  \Until \psis \mid  \Ex \X \phis  \mid  \All \X \phis.
\end{multline*}
\end{definition}

Standard definition of \CTL* and \CTL are obtained by removing the use
of quantification over atomic proposition ($\exists p. \phi$) in the
formulas. In the following, $\exists$~and~$\forall$ are called
\emph{(proposition) quantifiers}, while $\Ex$ and~$\All$ are
\emph{path quantifiers}.

Given \QCTL* (state) formulas~$\phi$ and~$(\psi_i)_i$ and atomic propositions~$(p_i)_i$
appearing free in~$\phi$ (\ie, not appearing as quantified propositions),
we~write $\phi[(p_i\to \psi_i)_i]$ (or~$\phi[(\psi_i)_i]$ when $(p_i)_i$ are 
understood from the context) 
for the formula obtained from~$\phi$ by replacing
each occurrence of~$p_i$ with~$\psi_i$. Given two sublogics~$L_1$ and~$L_2$
of~\QCTL*, we~write $L_1[L_2]=\{\phi[(\psi_i)_i] \mid \phi\in L_1,\ (\psi_i)_i\in L_2\}$.

\subsection{Structure- and tree semantics}
Formulas of the form $\exists p. \phi$ can be interpreted in different
manners (see~\cite{Kup95a,Fre01,RP03a}). Here we consider two
semantics: the~\emph{structure semantics} and the \emph{tree
  semantics}. 

\subsubsection{Structure semantics.}
Given a \QCTL* state formula~$\phi$, a~(possibly infinite-state)
Kripke structure~$\calS=\tuple{Q,R,\ell}$ and a state~$q\in Q$, we
write $\calS, q \models_s \phi$ to denote that $\phi$ holds
at~$q$ under the structure semantics.  
We~extend the notation
to~$\calS, \rho\models_s \phi$ when $\phi$ is a path formula
and~$\rho$ is a path in~$\calS$.
This~is defined as follows: 

\begin{xalignat*}1
\calS, q \sat_{s} p & \ \text{iff}\   p\in\ell(q) \\
\calS, q \sat_{s} \non\phis & \ \text{iff}\   \calS,  q \not\sat_{s} \phis \\
\calS, q \sat_{s} \phis \ou \psis & \ \text{iff}\   \calS,  q
\sat_{s} \phis \text{ or }  \calS,  q \sat_{s} \psis \\
\noalign{\pagebreak[2]}
\calS, q \sat_{s} \Ex \phip & \ \text{iff}\  \exists \rho\in
\Exec(q)  \text{ s.t. } \calS,  \rho  \sat_{s} \phip  \\
\noalign{\pagebreak[2]}
\calS, q \sat_{s}  \exists p. \phis & \ \text{iff}\  \exists
\calS' \equiv_{\AP\backslash\{p\}} \calS \:\text{s.t.}\: \calS',q  \sat_{s} \phis  \\
\noalign{\pagebreak[2]}
\calS, \rho \sat_{s} \phis  & \ \text{iff}\    \calS, \rho(0) 
\sat_{s} \phis \\
\calS, \rho \sat_{s} \non\phip & \ \text{iff}\   \calS,  \rho \not\sat_{s} \phip \\
\calS, \rho \sat_{s} \phip \ou \psip & \ \text{iff}\   \calS,  \rho
\sat_{s} \phip \:\text{or} \: \calS,  \rho \sat_{s} \psip \\
\noalign{\pagebreak[2]}
\calS, \rho \sat_{s} \X \phip  & \ \text{iff}\    \calS, \rho^1
\sat_{s} \phip \\
\calS, \rho \sat_{s} \phip \Until \psip & \ \text{iff}\  
 \exists i\geq 0.\
 \calS, \rho^i \sat_{s} \psip \ \text{and} \: \forall 0\leq j < i.\ \calS,  \rho^j \sat_{s}\phip
\end{xalignat*}

\begin{example}
As an example, consider the formula
$
  \selfloop \eqdef  \forall z.(z \impl \EX z)
$.
If~a~state~$q$ in~$\calS$ satisfies this formula, then the particular
labelling in which only~$q$ is labelled with~$z$ implies that~$q$ has to carry
a self-loop. Conversely, any state that carries a self-loop satisfies this
formula (for the structure semantics).

Let~$\phi$ be a \QCTL* formula, and consider now the formula
\[
  \uniq(\phi) \eqdef  \EF (\phi)  \et  \forall z.\Bigl(\EF(\phi \et z) \impl \AG(\phi \impl z)\Bigr).
\]
In~order to satisfy such a formula, at least one $\phi$-state must be
reachable. Assume now that two different such states~$q$ and~$q'$ are
reachable: then for the particular labelling where only~$q$ is labelled
with~$z$, the second part of the formula fails to hold. Hence $\uniq(\phi)$
holds in a state (under the structure semantics) if, and only~if, exactly one
reachable state satisfies~$\phi$.
Similarly, we can count the number of successors satisfying a given formula:
\begin{xalignat*}1
\Ex\X[1]\phi &\eqdef \Ex\X\phi \et \forall z.\Bigl(\Ex\X(\phi\et z) \impl
\All\X(\phi\impl z)\Bigr) \\
\Ex\X[\geq k]\phi &\eqdef\exists p_1,...,p_k.\ \Biggl[\All\X\Bigl(\ET_{i\not= j} \non
p_i \ou \non p_j\Bigr) \et \ET_{1\leq i\leq k} \Ex\X (p_i \et \phi)\Biggr]
\end{xalignat*}

\smallskip
As another example, let us mention that propositional quantification can be
used to flatten ``until'' formulas:
\begin{equation}
\Ex\phi_1\Until \phi_2 \equiv \exists z_1,z_2.\ 
  \Ex z_1\Until z_2 \et
  \All\G\left[
  z_1\thn \phi_1 \;\et\; z_2\thn \phi_2
  \right]
\label{eq-until}
\end{equation}
Actually, ``until'' can be expressed using only ``next'' and ``always''.
This is easily achieved in \QCTL*, where we would write e.g.
\begin{multline*}
\Ex \phi_1\Until \phi_2 \equiv 
  \exists z_1,z_2. \ \biggl[
    \Ex\Bigl(\bigl[z_2 \ou (z_1 \et \F z_2)\bigr]
    \et \\
    \G\bigl[z_1 \thn \X(z_1\ou z_2)\bigr] \Bigr)
  \et\All\G(\bigl[z_1\thn \phi_1 \;\et\; z_2\thn \phi_2\bigr]\biggr].
\end{multline*}

The expression in \QCTL is more involved. We~rely on a more general
translation through the $\mu$-calculus~\cite{Koz83}: in~this formalism, we can
express the ``until'' modality as a fixpoint:
\[
\Ex \alpha \Until \beta \equiv \mu T.\ (\beta \ou \Ex\X(\alpha \et T)).
\]
Now, a least-fixpoint formula $\mu T.\ \phi(T)$ (where we assume that
$\phi(T_1)\subseteq \phi(T_2)$ whenever $T_1\subseteq T_2$) can be
expressed\footnote{We have to be careful here with the exact notion of
  equivalence. We~keep it imprecise in this example, and develop the
  technical details in Section~\ref{sec-MSO}, where we prove that
  \QCTL without ``until'' can actually express the whole Monadic
  Second-Order Logic.}  in \QCTL as follows:
\[
\mu T.\ \phi(T) \equiv \exists t.\ \Bigl[\All\G(t \Leftrightarrow \phi(t)) \et
  \forall u.\ \bigl[
    \All\G(u \Leftrightarrow \phi(u)) \Rightarrow \All\G (t \thn u)
  \bigr]\Bigr]
\]
The first part of the formula (before quantifying over~$u$) states that the
labelling with~$t$ is a fixpoint. The second part enforces that it~precisely
corresponds to the least one. 
\end{example}


\subsubsection{Tree semantics.}

The tree-semantics is obtained from the structure semantics by seeing the
execution tree as an infinite-state Kripke structure.
We~write $\calS, q \models_t \phi$ 
to denote that formula~$\phi$ holds at~$q$ 
under the tree semantics.  Formally, seeing~$\calT_S(q)$ as an
infinite-state Kripke structure, we~define:
\begin{xalignat*}1
\calS, q\models_t \phi &\quad\text{iff}\quad 
  \calT_\calS(q),q\models_s \phi
\end{xalignat*}
Clearly enough, $\selfloop$ is always false under the tree semantics, while
$\uniq(\phi)$ holds if, and only if, $\phi$~holds at only one node of the
execution tree.

\begin{example}
Formula
$\displaystyle  \acyclic \eqdef  \All\G \bigl(
    \exists z.\ (z\et \uniq(z) \et \penalty0 \All\X\All\G\non z)
  \bigr)$
expresses that all infinite paths (starting from the current state)
are acyclic, which for \emph{finite} Kripke structures is always false
under the structure semantics and always true under the tree
semantics.
\end{example}

\subsubsection{Equivalences between \QCTL* formulas.}
We consider two kinds of equivalences depending on the semantics we
use. Two state formulas $\phi$ and~$\psi$ are said 
$s$-equivalent (resp.\ $t$-equivalent),  written $\phi\equiv_s \psi$
(resp.\ written $\phi\equiv_t \psi$)  if for any finite-state
Kripke structure~$\calS$ and any state~$q$ of~$\calS$, it~holds
$\calS, q \sat_s \phi$ iff $\calS, q \sat_s \psi$
(resp.\ $\calS, q \sat_t \phi$ iff $\calS, q \sat_t \psi$). 
We~write $\phi \equiv_{s,t} \psi$ when the
equivalence holds for both $\equiv_s$ and $\equiv_t$.

Note that both equivalences $\equiv_s$ and~$\equiv_t$ are
\emph{substitutive}, \ie,~a~subformula~$\psi$ can be replaced with any
equivalent formula~$\psi'$ without changing the truth value of the global
formula. Formally, if $\psi\equiv_s \psi'$ (resp.~$\psi\equiv_t
\psi'$), we have $\Phi[\psi] \equiv_s \Phi[\psi']$ (resp.~$\Phi[\psi]
\equiv_t \Phi[\psi']$) for any \QCTL* formula~$\Phi$.

\subsection{Fragments of \QCTL*.}\label{ssec-qnctl}

In the sequel, besides \QCTL and \QCTL*, we study several interesting fragments.
The first one is the fragment of \QCTL in \emph{prenex normal form}, \ie, in
which propositional quantification must be external to the \CTL formula.
We~write \EQCTL and \EQCTL* for the corresponding logics\footnote{Notice that
  the logics named \EQCTL and \EQCTL* in~\cite{Kup95a} are restrictions of
  our prenex-normal-form logics where only existential quantification is
  allowed. They correspond to our fragments \EQnCTL1 and \EQnCTL1*.}

We~also study the fragments of these logics 
with limited quantification.
For prenex-normal-form formulas, the fragments are defined as
follows:
\begin{itemize}
\item for any $\phi\in\CTL$ and any $p\in\AP$, $\exists p.\phi$ is an \EQnCTL1
  formula, and $\forall p.\phi$ is in \AQnCTL 1;
\item for any $\phi\in \EQnCTL k$ and any~$p\in \AP$, $\exists p.\phi$ is in
  \EQnCTL k and $\forall p.\phi$ is in \AQnCTL{k+1}. Symmetrically, if
  $\phi\in\AQnCTL k$, then $\exists p.\phi$ is in \EQnCTL {k+1} while $\forall
  p.\phi$ remains in \AQnCTL{k}.
\end{itemize}
Using similar ideas, we define fragments of \QCTL and \QCTL*. 
Again, the definition is inductive: 
\QnCTL1 is the logic $\CTL[\EQnCTL 1]$,
and 
$\QnCTL{k+1} = \QnCTL1[\QnCTL k]$. 
Notice that a more refined definition of \QnCTL k could be given,
where the index~$k$ would count quantifier alternation (in a way
similar to \EQnCTL k) instead of the mere quantifier depth that we use here.  This
however requires taking care of the number of negations between two
quantifiers, where ``negation'' here also includes hidden negations
(\eg a quantifier nested on the left-hand side of an ``until'' formula
should be considered negated). Our results would carry on to this
variant of \QnCTL k.

The corresponding extensions of \CTL*, which we respectively denote
with \EQnCTL k*, \AQnCTL k* and \QnCTL k*, are defined in a similar way.

\begin{remark}
Notice that \EQnCTL k and \AQnCTL k are (syntactically) included in \QnCTL k,
and \EQnCTL k* and \AQnCTL k* are fragments of \QnCTL k*.
\end{remark}

\section{Expressiveness}
\label{sec-express}

As a preliminary remark, let us mention that propositional
quantification increases the expressive power of~\CTL. For example, it
is easy to see that the formula $\uniq(P)$ defined in the previous
section allows us to distinguish between two bisimilar structures;
therefore such a formula cannot be expressed in~\CTL*. Note also that
it makes \QCTL (and~\QCTL*) to not be \emph{bisimilar invariant}. This
observation motivated an alternative semantics, called the
\emph{amorphous semantics}, for the propositional quantifications, in
order to make \QCTL (and~\QCTL*) bisimilar invariant. We~do not develop
this semantics further, and refer the reader to~\cite{Fre01} for more
details.

In this section we present several results about the expressiveness of
our logics for both the structure- and the tree semantics. We show
that \QCTL, \QCTL* and Monadic Second-Order Logic are equally
expressive. First we show that any \QCTL formula is equivalent to
a formula in prenex normal form (which extends to \QCTL* thanks to
Proposition~\ref{qctl-qctls}).

\subsection{Prenex normal form}

By translating path quantification into propositional
quantification, we can extract propositional quan\-tification out of
purely temporal formulas:\break for instance, $\EX  (\Q.\phi)$ where $\Q$ is
some propositional quantification is equivalent to\break 
$\exists z.\Q.  \Bigl( \uniq (z) \et \EX(z\et \phi) \Bigr)$. This
generalizes to full \QCTL under both semantics:

\begin{restatable}{proposition}{proprenex}\label{prop-fnp}
 In both semantics, \EQCTL and \QCTL are equally expressive.
\end{restatable}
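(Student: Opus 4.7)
The inclusion $\EQCTL \subseteq \QCTL$ is purely syntactic, so the only work is the converse: every $\phi \in \QCTL$ must be rewritten, preserving both $\equiv_s$ and $\equiv_t$, into a formula of the shape $Q_1 p_1 \cdots Q_n p_n.\, \chi$ with $\chi \in \CTL$. I~would proceed by structural induction on~$\phi$. The atomic case is immediate and the propositional-quantifier case $\exists p.\,\psi$ follows directly from the induction hypothesis applied to~$\psi$. For negation, I~flip every~$Q_i$ and push the resulting negation through the purely temporal kernel using the standard CTL dualities. For disjunction (and, dually, conjunction), I~first rename the bound variables in the two inductively-obtained prefixes so that they are fresh for each other and for the other subformula, then concatenate the prefixes in front of the Boolean combination; substitutivity of~$\equiv_s$ and~$\equiv_t$ justifies these manipulations.

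The substantive step is pulling a prenex prefix out of a CTL temporal connective. I~would rely on two key equivalences, in which $z$ is a fresh proposition and $\mathsf{Q}$ denotes an arbitrary prenex quantifier block:
\begin{gather*}
\Ex\X(\mathsf{Q}.\,\chi) \equiv_{s,t} \exists z.\, \mathsf{Q}.\, \bigl(\uniq(z) \et \Ex\X(z \et \chi)\bigr), \\
\All\G(\mathsf{Q}.\,\chi) \equiv_{s,t} \forall z.\, \mathsf{Q}.\, \bigl(\uniq(z) \thn \All\G(z \thn \chi)\bigr).
\end{gather*}
The first is the observation made informally just before the proposition; the second is its ``global'' counterpart, and symmetric rewritings handle $\All\X$ and $\Ex\G$. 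For the until modalities I~first flatten using~(\ref{eq-until}) and its $\All\Until$-analogue
\begin{equation*}
\All\phi_1\Until \phi_2 \equiv_{s,t} \exists z_1, z_2.\, \All z_1\Until z_2 \et \All\G\bigl((z_1\thn \phi_1) \et (z_2\thn \phi_2)\bigr),
\end{equation*}
which reduces the problem to the $\All\G$ case above applied to the rewritten conjuncts after invoking the induction hypothesis on~$\phi_1$ and~$\phi_2$. Iterating these rewritings drives every propositional quantifier to the front.

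The main obstacle, and the reason $\uniq$ is needed, is that a naive swap $\All\G(\exists p.\,\chi) \rightsquigarrow \exists p.\,\All\G\,\chi$ is unsound, since distinct reachable states (or distinct nodes of the unwinding) may require different $p$-witnesses whereas a single propositional quantifier supplies only one global relabelling. The $\uniq$ reformulation bypasses this by replacing the implicit universal quantification over reachable points with an explicit propositional $\forall z$ ranging over single-point markings: once $z$ pins down one point, $\All\G(z \thn \chi)$ constrains $\chi$ only at that point, for which a lone global witness really does suffice, so $\mathsf{Q}$ may then be commuted past both $\All\G(z\thn \cdot)$ and the implication $\uniq(z)\thn\cdots$ (as $z$ is fresh for~$\mathsf{Q}$). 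All the rewritings above are phrased purely in terms of propositional relabellings of either a Kripke structure or its unwinding, so they hold verbatim under both semantics, and the resulting prenex formula witnesses $\equiv_s$ and $\equiv_t$ simultaneously.
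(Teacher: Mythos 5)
Your proposal is correct and follows the paper's overall strategy---an inductive, bottom-up extraction of quantifier blocks using a fresh marker proposition~$z$ together with $\uniq(z)$, with the same $\Ex\X$ rule, the same $\All\G$ rule, the same Boolean rules, and the same flattening~\eqref{eq-until} for $\Ex\Until$---but it diverges on the one case that constitutes the bulk of the paper's proof. The paper fixes $\EX$, $\EU$ and $\EG$ as its adequate set of modalities, so $\All\phi_1\Until\phi_2$ is rewritten through $\non\EG\non\phi_2\et\non\Ex\non\phi_2\Until(\non\phi_1\et\non\phi_2)$, and the authors must then produce a prenex rule for $\Ex\G(\Q.\phi)$; this is by far the most delicate equivalence in their proof, requiring an existentially quantified~$z$ that labels a lasso-shaped path kept deterministic by $\AG(z\impl\Ex\X[1]z)$, plus a second, universally quantified marker~$z'$ ranging over the positions of that path. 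You instead keep $\All\Until$ as a primitive and flatten it directly via the analogue of~\eqref{eq-until} (which is indeed correct, by the same fresh-proposition argument), reducing it to the easy $\All\G$ rule. Since $\Ex\X$, $\All\X$, $\Ex\Until$ and $\All\Until$ are exactly the primitives of \QCTL (Definition~\ref{def-QCTL}), your rule set is complete and the $\Ex\G$ case never arises; this genuinely sidesteps the hardest part of the paper's argument, at the mild cost of one extra flattening equivalence to verify.

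Two caveats. First, your parenthetical claim that a ``symmetric rewriting'' handles $\Ex\G$ is false as stated: dualizing the $\All\G$ rule yields a rule for $\Ex\F$, not $\Ex\G$, and the naive swap $\Ex\G(\Q.\chi)\equiv\exists z.\,\Q.(\uniq(z)\et\Ex\G(z\et\chi))$ is unsound, since $\uniq(z)$ forces any path satisfying $\G z$ to stay forever in a single state. Your proof survives because it never actually needs $\Ex\G$, but had you relied on that remark the argument would break; an explicit $\Ex\G$ rule requires something like the paper's lasso construction. Second, you state the key equivalences with plausible informal justifications but do not prove them; the paper does this via a characterization of $\calS,q\sat_s\Q.\phi$ in terms of $(\Q,\calS)$-compatible families of relabelled structures (Lemma~\ref{lemma-charac}), and some such device is needed to argue rigorously about commuting a whole alternating block~$\Q$ past the marker~$z$.
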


\begin{proof}
  We prove the result for structure equivalence, turning a given a
  \QCTL formula~$\phi$ into prenex normal form. The transformation being
  correct also for infinite-state Kripke structures, the result
  for tree-equivalence follows.

\medskip

 In the following, we assume w.l.o.g. no atomic proposition is quantified twice. 
 We use $\Q$ to denote a sequence of quantifications, and write~$\bar{\Q}$ for the 
 dual sequence. 
  Our translation is defined as a sequence of rewriting rules that are to be
  applied in a bottom-up manner, first replacing innermost subformulas with
  $s$-equivalent ones. As~for~\CTL, we~only consider the temporal
  modalities \EX, \EU and \EG (which is sufficient since $\AX\phi \equiv_{s,t} \non
  \EX \non\phi$ and $\All \phi \Until \psi \equiv_{s,t}  \non \EG \non
  \psi \et  \non \Ex \non \psi \Until (\non \phi \et \non \psi)$).

\smallskip

For propositional and Boolean subformulas, we~have:
\begin{xalignat*}2
 \non \Q.\phi &\equiv_s \bar{\Q}\non\phi  &
 \Q_1.\phi_1 \ou \Q_2\phi_2 &\equiv_s \Q_1.\Q_2. (\phi_1 \ou  \phi_2) 
\end{xalignat*}

We now present the transformation for all three temporal modalities.
Extracting a bloc of quantifiers out of an~\EX operator can be done as follows:
\[
\Ex\X  \Q.\phi \equiv_s  \exists z.\Q.  \Bigl( \uniq (z) \et \Ex\X(z\et \phi) \Bigr)
\]
Here variable~$z$ (which is assumed to not appear in~$\Q.\phi$) is used to
mark an immediate successor that satisfies~$\Q.\phi$. We~require~$z$ to be
unique: allowing more than one successor would make the equivalence to be
wrong, as can be seen on the Kripke structures~$\calS_0$ of Fig.~\ref{fig-peq} using
formula $\Ex\X(\forall p.\;[(\Ex\F p) \thn p])$ (this formula is false, while formula $\exists
z.\; \forall p.\; \Ex\X(z \et [(\Ex\F p) \thn p])$ is true by labelling both states with~$z$).

Note that the right-hand-side formula is not yet in prenex form,
because $\uniq(z)$ involves a universal quantifier under a Boolean operator;
applying the above rules for Boolean subformulas concludes the translation for
this case.

\smallskip
For $\Ex\G(\Q.\phi)$, the idea again is to
label a short (lasso-shaped) path with~$z$, ensuring that $\Q.\phi$
always holds along that path:
\[
\Ex\G (\Q. \phi) \equiv_s  \exists z.  \forall z'.\Q. \Big(z \et \AG(z \impl
\Ex\X[1] z) \et (\uniq(z') \impl \AG((z\et z') \impl \phi))\Big).
\]

\smallskip
Finally, $\Ex (\Q_1.\phi_1) \Until (\Q_2.\phi_2)$ is handled by first
rewriting~it as
\[
\exists z_1,z_2.\ 
  \Ex z_1\Until z_2 \et
  \All\G\left[
  z_1\thn \Q_1.\phi_1 \;\et\; z_2\thn \Q_2.\phi_2
  \right]
\]
using Equivalence~\eqref{eq-until}, and by noticing that
$\AG (\Q.\phi) \equiv_s \forall z.\Q.(\uniq(z) \impl \AG(z \impl \phi))$.

\medskip
Before we prove correctness of the above equivalences, we~introduce a
useful lemma: 
\begin{lemma}\label{lemma-charac}
  Consider a Kripke structure $\calS=\tuple{Q,R,\ell}$, a~state~$q$ and
  a~\QCTL formula $\Q.\phi$ with $\Q= \Q_1 z_1\cdots \Q_k z_k$ and
  $\phi\in\CTL$. We~have $\calS, q \sat_s \Q.\phi$ if, and only~if, there is a non-empty
  family~$\xi$ of Kripke structures such that
\begin{enumerate} 
\item\label{eq-charac1} each $\calS'\in \xi$ is of the form $\tuple{Q,R,\ell'}$ where $\ell'$
  and $\ell$ coincide over $\AP\setminus \{z_1,\ldots,z_k\}$;
\item\label{eq-charac2} for any~$\calS'=\tuple{Q,R,\ell'}$ in~$\xi$, 
  any~$i$ with~$\Q_i=\forall$, and any ${\lab_{z_i}\colon Q \to
  2^{\{z_i\}}}$, there exists $\tuple{Q,R,\ell''} \in \xi$ such that
  $\ell'' \cap \{z_i\}=\lab_{z_i}$, and $\ell''$ and~$\ell'$ coincide over
  $\AP \setminus\{z_{i}\cdots z_k\}$; 
\item\label{eq-charac3} for all~$\calS'\in\xi$, it~holds $\calS', q \sat_s \phi$.
\end{enumerate}
\end{lemma}
A non-empty set $\xi$ satisfying the first two properties of
Lemma~\ref{lemma-charac} is said to be $(\Q,\calS)$-compatible.

\begin{proof}
The proof proceeds by induction on the number of quantifiers in~$\calQ$. The
equivalence is trivial when there is no quantification. Now assume that the
equivalence holds for some quantification~$\calQ$. 

\smallskip
We~first consider formula $\exists z.\calQ.\phi$. 
Assume $\calS, q\models_s \exists z.\calQ.\phi$. Then there exists a
structure~$\calS'=\tuple{Q,R,\ell'}$, with $\ell'$ coincides with~$\ell$
over~$\AP\setminus\{z\}$, such that $\calS',q\models_s \calQ.\phi$. Applying
the induction hypothesis to~$\calS'$, we~obtain a family of structures
satisfying conditions~\eqref{eq-charac1} to~\eqref{eq-charac3}
for~$\calS'$ and~$\calQ.\phi$. One easily checks that the very same family also fulfills
all three conditions for~$\calS$ and $\exists z.\calQ.\phi$.

Conversely, if there is a family of structures satisfying all three
conditions for~$\calS$ and $(\exists z.\calQ).\phi$.  Pick any
structure~$\calS'=\tuple{Q,R,\ell'}$ in that family. It~holds
$\calS',q\models_s \exists z.\calQ.\phi$, and moreover $\ell$
and~$\ell'$ coincide over $\AP\setminus\{z,z_1,\cdots,z_k\}$, where
$\{z_1,\cdots,z_k\}$ are the variables appearing in~$\calQ$. Hence
also $\calS, q\models_s \exists z.\calQ.\phi$.

\smallskip

Now consider formula $\forall z.\calQ.\phi$, and assume $\calS,
q\models_s \forall z.\calQ.\phi$. Then for any
$\calS'=\tuple{Q,R,\ell'}$ where $\ell$ and $\ell'$ coincide over
$\AP\setminus\{z\}$, we~have $\calS',q\models_s \calQ.\phi$. Applying
the induction hypothesis, for each such~$\calS'$, we~get a family of
Kripke structures satisfying all three conditions for $\calS'$ and
$\calQ.\phi$. Let~$\xi$ be the union of all those families. Then
$\xi$~clearly fulfills conditions~\eqref{eq-charac1}
and~\eqref{eq-charac3}.  Condition~\eqref{eq-charac2} for universal
quantifiers in~$\calQ$ follows from the induction hypothesis. For the
universal quantifier on~$z$, pick $\calS'=\tuple{Q,R,\ell'}$ and
$\lab_z$. Then by construction, $\xi$~contains a
structure~$\calS''=\tuple{Q,R,\ell''}$ where
$\ell''\cap\{z\}=\lab_z$. By~construction, $\xi$~contains
a family of structures satisfying all three conditions for~$\calS''$
and~$\calQ.\phi$, which entails the result.

If conversely there is a family~$\xi$ of structures satisfying the
conditions of the lemma, then for each $\lab_z$, this family contains a
structure~$\calS'=\tuple{Q,R,\ell'}$ with $\ell'\cap\{z\}=\lab_z$ and
satisfying~$\calQ.\phi$, which entails the result. 
\end{proof}

\smallskip 
We now proceed to the proof of the previous equivalences. 
We~omit the easy cases of propositional and Boolean formulas,
and focus on \EX and \EG:
\begin{itemize}
\item $\Ex\X(\Q.\phi)$: Assume $\calS, q \sat_s \Ex\X (\Q.\phi)$ with
  $\calS=\tuple{Q,R,\ell}$. Then there exists $(q,q')\in R$ such that $\calS, q'
  \sat_s \Q.\phi$. Therefore there exists a set~$\xi$ of Kripke structures that is
  $(\Q,\calS)$-compatible and such that $\calS',q' \sat_s \phi$ for every $\calS' \in
  \xi$. Now consider the set~$\xi'$ defined as follows:
\[
   \xi' \eqdef  \Bigl\{ \calS' =\tuple{Q,R,\ell'} \:\Bigm|\: \exists
   \tuple{Q,R,\ell''} \in \xi \text{ s.t. }  
  \ell' \eqdef \ell'' \oplus \{q'\mapsto z\} \Bigr\} 
\]
with:
\[
(\ell \oplus \{q\mapsto x\})(r)  \eqdef  
 \begin{cases} 
   \ell(r) \cup \{x\} & \text{ if } r=q \\ 
   \ell(r)\setminus \{x\} & \text{ otherwise} 
  \end{cases}  
\]

Then $\xi'$ is $(\exists z.\Q,\calS)$-compatible, and for
every Kripke structure~$\calS'\in\xi'$, we have: $\calS',q \sat_s
\uniq (z) \et \Ex\X(z\et \phi)$.  It~follows $\calS,q \sat_s
\exists z.\Q.( \uniq (z) \et \Ex\X(z\et \phi) )$.

\smallskip

Now assume $\calS, q \sat_s \exists z.\Q. \bigl( \uniq (z) \et
\Ex\X(z\et \phi) \bigr)$.  Then there exists a Kripke
structure~$\calS' \equiv_{\AP\setminus\{z\}} \calS$ such that
$\calS',q\sat_s \Q. \bigl( \uniq (z) \et \Ex\X(z\et \phi)
\bigr)$. In~particular, only one state~$q'$ of~$\calS'$ is labelled
with~$z$, and $q'$ is a successor of~$q$.  Moreover, there exists a
$(\Q, \calS')$-compatible set~$\xi$ such that for any~$\calS''\in\xi$,
it~holds $\calS'',q\sat_s \Ex\X(z\et\phi)$. Since only~$q'$ is
labelled with~$z$, we~have $\calS'',q'\sat_s \phi$, for
all~$\calS''\in\xi$. Hence $\calS',q'\sat_s \Q.\phi$, and
$\calS',q\sat_s \Ex\X(\Q.\phi)$. Finally, the formula is independent
of~$z$, so that also $\calS,q\models_s \Ex\X(\Q.\phi)$.

\smallskip
\item $\Ex\G (\Q. \phi)$: Assume $\calS,q \sat_s \Ex\G
  (\Q.\phi)$. There must exist a lasso-shape path $\rho = q_0q_1q_2\ldots
  (q_i\ldots  q_j)^\omega$,
  with $q_0=q$, along which $\Q.\phi$ always
  holds.  We~can also assume that $\rho$~is a \emph{direct}
  path, \ie, that $\calS$~does not contain a 
  transition~$(q_k,q_l)$ unless $l=k+1$ (otherwise a simpler witnessing path
  would exist). 
  Thus labeling all states of~$\rho$ with~$z$ makes the
  formula $(z\et \AG (z \impl \EX[1] z)$ hold at~$q$. Moreover,
  for every $k<j$, we~have $\calS,q_k\sat_s \Q.\phi$, so that there exists a
  set $\xi_k$ of Kripke structures that are $(\Q,\calS)$-compatible and such that 
  $\calS',q_k\sat_s \phi$ for every $\calS' \in \xi_k$. Now, let $\xi$
  be the following set of Kripke structures:
\begin{multline*}
  \xi \eqdef  \Bigl\{ \calS' =\tuple{Q,R,\ell'} \:\Bigm|\: \exists
  k<j.\ \exists \tuple{Q,R,\ell''} \in\xi_k  \mbox{ s.t. }   \\
  \ell' \eqdef \ell'' \oplus \{q_l\mapsto z\}_{l=0,\ldots,j-1}\oplus \{
  q_k \mapsto z' \} \Bigr\}.
\end{multline*}
For every $\calS'\in\xi$, we have 
\[
\calS',q \sat_s z \et \AG(z \impl
\EX[1] z) \et (\uniq(z') \impl \AG ((z \et z'\impl \phi))).  
\]
But the set $\xi$ is not $(\exists z. \forall z'. \Q,\calS)$-compatible:
it~only contains Kripke structures in which $z'$ labels a single state
of~$\rho$, while condition~\eqref{eq-charac2} requires that we consider all
labellings. It~suffices to extend~$\xi$ with arbitrary Kripke structures
involving all other forms of $z'$-labellings to obtain a compatible set~$\widehat{\xi}$.
Note that the additional Kripke structures still satisfy $(\uniq(z') \impl \AG
((z\et z'\impl \phi)))$. Applying Lemma~\ref{lemma-charac},
\[
\calS,q\models_s  \exists z.  \forall z'.\Q. \Big(z \et \AG(z \impl
\Ex\X[1] z) \et (\uniq(z') \impl \AG((z\et z') \impl \phi))\Big).
\]

Conversely, assume that this formula holds true at~$q$ in~$\calS$.
Accordingly, let $\calS'\equiv_{\AP\setminus\{z\}} \calS$ be the structure obtained
from~$\calS$ by extending its labelling with~$z$ in such a way that
\begin{enumerate}
\item $\calS',q \sat_s z \et \AG(z \impl\EX[1] z)$ 
\item $\calS',q \sat_s \forall z'.\Q (\uniq(z') \impl \AG((z\et z')\impl \phi))$. 
\end{enumerate}
The first property ensures that the $z$-labelling describes a lasso-shape
path starting from~$q$. The second one entails that there exists a
$(\forall z'\Q,\calS')$-compatible set~$\xi$ s.t. for every
$\calS''\in\xi$, we~have $\calS'',q \sat_s \uniq(z') \impl \AG((z\et
z')\impl \phi)$. This entails that for any position $k$ along
the $z$-path, there exists a $(\Q,\calS'')$-compatible set in which
$\calS,q_k\sat_s \phi$, which entails the result.\qed
\end{itemize}
\let\qed\relax
\end{proof}

\noindent Let us briefly measure the size and alternation depth of the resulting
formula: in terms of its size, the transformation never duplicates
subformulas of the initial formula, so that the final size is linear
in the size of the original formula. Regarding proposition
quantifiers, it can be checked that the alternation depth of the
resulting formula is at most two plus the sum of the number of nested quantifiers in
the original formula. In the end, the number of quantifier alternations of
the resulting formula is linear in the number of quantifiers in the
original formula.

\begin{remark}
  The translation used in the proof above to transform any \QCTL
formula into an equivalent formula in prenex normal form has been
defined for the structure semantics. It~is still correct when
considering the tree semantics but in this framework, we could define a
simpler transformation (in~particular, we can get rid of the
$\dpath(z_0,z_2)$ formula).
\end{remark}


\subsection{\QCTL and Monadic Second-Order Logic}
\label{sec-MSO}

We~briefly review Monadic Second-Order Logic (\MSO) over trees and
over finite Kripke structures (\ie, labelled finite graphs).  In both
cases, we~use constant monadic predicates~$\Pred{a}$ for ${a\in \AP}$ and
a relation~$\Edg$ either for the immediate successor relation in a
$2^\AP$-labelled 
tree $\tuple{T,l}$
or for the relation~$R$ in a finite KS $\tuple{Q,R,\ell}$.

\MSO is built with first-order (or individual) variables for vertices
(denoted with lowercase letters~$x,y,...$),  
monadic second-order variables for sets of vertices (denoted with
uppercase letters~$X,Y,...$).  Atomic formulas are of the form
$x=y$, $\Edg(x,y)$, $x\in X$, and~$\Pred{a}(x)$. Formulas are constructed
from atomic formulas using the boolean connectives and the first- and
second-order quantifier~$\exists$.  We~write
$\phi(x_1,...,x_n,X_1,...,X_k)$ to state that $x_1,...,x_n$
and $X_1,...,X_k$ may appear free (\ie not within the scope of a
quantifier) in~$\phi$. A~closed formula contains no free
variable. We~use the standard semantics for \MSO, writing
$\calM,s_1,...,s_n,S_1,...,S_k \sat
\phi(x_1,...,x_n,X_1,...,X_k)$ when $\phi$ holds on~$\calM$ when
$s_i$ (resp.~$S_j$) is assigned to the variable $x_i$ (resp.~$X_j$)
for ${i=1,...,n}$ (resp. ${j=1,...,k}$).

In the following, we compare the expressiveness of \QCTL with
\MSO over the finite Kripke structures (the structure semantics) and
the execution trees corresponding to a finite Kripke structure (tree
semantics). First note that \MSO formulas may express properties
on the whole trees or graphs, while our logics are interpreted over
\emph{states} of these structures. Therefore we use \MSO formulas with
one free variable~$x$, which represents the state where the
formula is evaluated.  Moreover, we~restrict the evaluation of \MSO
formulas to the \emph{reachable} part of the model from the given
state. This last requirement makes an important difference for the
structure semantics, since \MSO can express \eg that a graph is not
connected while \QCTL can only deal with what is reachable from the
initial state.

Formally, for the tree semantics, we say that $\phi(x)\in\MSO$ is
$t$-equivalent to some \QCTL* formula~$\psi$ (written $\phi(x) \equiv_t \psi$)
when for any finite Kripke structure~$\calS$ and any state $q\in \calT_\calS$,
it~holds $\calT_\calS(q),q\sat \phi(x)$ iff $\calT_\calS(q),q\sat_s \psi$.
Similarly, for the structure semantics: $\phi(x)$ is $s$-equivalent to~$\psi$
(written $\phi(x) \equiv_s \psi$) iff for any finite Kripke structure~$\calS$
and any state $q\in \calS$, it~holds $\calS_q,q\sat \phi(x)$ iff
$\calS_q,q\sat_s \psi$, where $\calS_q$~is the reachable part of~$\calS$
from~$q$.
For these definitions, we~have:
\begin{restatable}{proposition}{propmso}
\label{prop-qctl-mso}
Under both semantics, \MSO and \QCTL are equally expressive.
\end{restatable}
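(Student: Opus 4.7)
The plan is to prove both inclusions $\QCTL \leq \MSO$ and $\MSO \leq \QCTL$ simultaneously under the structure and the tree semantics. Since $\calS,q \models_t \phi$ is by definition $\calT_\calS(q),q \models_s \phi$, and since the MSO equivalences $\equiv_s$ and $\equiv_t$ are taken over the reachable substructure $\calS_q$ and the unwinding $\calT_\calS(q)$ respectively, a~single translation scheme interpreted on the appropriate Kripke structure suffices in each direction.

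For $\QCTL \leq \MSO$, I~would proceed by induction on the formula. Atomic propositions~$p$ at the state variable~$x$ become $\Pred{p}(x)$; Boolean connectives translate directly; and a propositional quantifier $\exists p.\ \phi$ becomes the second-order quantifier $\exists P.\ \widehat\phi$, in which every atomic occurrence of~$p$ at some state variable~$y$ is replaced by $y\in P$. Temporal modalities are handled with their standard MSO encodings: $\EX\phi$ via a first-order successor, $\Ex\phi\Until\psi$ via a finite witness chain (an $\Edg$-linearly ordered set of nodes with suitable endpoint predicates), and $\EG\phi$ via the existence of a reachable lasso whose states all satisfy~$\phi$, which is MSO-definable on finite models.

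For $\MSO \leq \QCTL$, the idea is to encode each MSO variable as a fresh atomic proposition. Assuming w.l.o.g. that bound variables are pairwise distinct, I~introduce $p_Y$ for each second-order variable~$Y$ and $p_y$ for each first-order variable~$y$, with the intended meaning that $p_Y$ labels the states in~$Y$ while $p_y$ labels the unique state assigned to~$y$. The translation is then defined inductively by
\begin{align*}
\Pred{a}(y) &\mapsto \AG(p_y \thn a), & y = z &\mapsto \AG(p_y \Leftrightarrow p_z), \\
y \in Y &\mapsto \AG(p_y \thn p_Y), & \Edg(y,z) &\mapsto \AG(p_y \thn \EX p_z), \\
\exists Y.\ \psi &\mapsto \exists p_Y.\ \widehat\psi, & \exists y.\ \psi &\mapsto \exists p_y.\bigl(\uniq(p_y) \et \widehat\psi\bigr),
\end{align*}
with Boolean connectives commuting with the translation. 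The distinguished free variable~$x$, at which $\phi(x)$ is evaluated, is handled by prefixing the overall translation with $\exists p_x.\ \bigl(p_x \et \uniq(p_x) \et \cdots\bigr)$, so that $p_x$ marks precisely the current state.

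The main obstacle lies in the correctness of the $\Edg(y,z)$ case: the formula $\AG(p_y \thn \EX p_z)$ correctly captures ``there is an edge from the $y$-state to the $z$-state'' only because $\uniq$ guarantees that $p_y$ and $p_z$ each label a single state (equivalently, a~single node under the tree semantics). This singleton property, combined with the fact that $\AG$ ranges over exactly the states (resp.\ nodes) that MSO quantifies over---namely those of~$\calS_q$ under the structure semantics and those of $\calT_\calS(q)$ under the tree semantics---renders the same translation uniformly correct in both settings. Correctness then follows by a routine induction on the MSO formula, matching each MSO assignment to an appropriate labelling produced by the propositional quantifiers.
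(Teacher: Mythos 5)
Your proof follows essentially the same route as the paper's: the substantive direction (from \MSO to \QCTL) uses the same device of a fresh atomic proposition per \MSO variable, forced to be a singleton via $\uniq$ for first-order variables, with propositional quantification simulating second-order quantification and an outer prefix $\exists p_x.\,(p_x \et \uniq(p_x) \et \cdots)$ marking the evaluation point; your atomic clauses of the form $\AG(p_y \thn \cdots)$ are, given uniqueness, equivalent to the paper's $\EF(p_y \et \cdots)$, and your uniform treatment of the distinguished variable correctly renders $\Edg(x_i,x)$ false on trees without a special case. The one slip is in the easy direction: encoding $\EG\phi$ by a reachable lasso is sound only on finite models, so for the tree semantics (infinite trees) you should instead use the standard \MSO encoding of an infinite $\phi$-path, e.g.\ a set containing $x$ all of whose elements satisfy $\phi$ and have a successor inside the set, which works uniformly in both settings.
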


\begin{proof}
The translation from \QCTL to \MSO is easy: translating \CTL into \MSO
is standard and adding propositional quantifications can be managed with
second-order quantifications.

Now we consider the translation from \MSO to \QCTL.  This translation
(which is valid for both semantics) is defined inductively with a set
of rewriting rules. Given $\phi(x)\in\MSO$, we define
$\widehat{\phi}\in\QCTL$ as follows:
%
\begin{xalignat*}2
 \widehat{\non \phi} & \eqdef  \non \widehat{\phi} 
& 
  \widehat{\phi \et \psi} & \eqdef \widehat{\phi} \et \widehat{\psi}  
\\
   \widehat{\Pred{a}(x) } & \eqdef   a  
&
  \widehat{\Pred{a}(x_i) } & \eqdef   \EF (\pa_{x_i} \et a) 
\\
\noalign{\pagebreak[2]}
\widehat{x =x_i}  &\eqdef  \pa_{x_i} 
& 
\widehat{x_i=x_j} &\eqdef \EF (\pa_{x_i} \et \pa_{x_j}) 
\\
\noalign{\pagebreak[2]}
  \widehat{x \in X_i} & \eqdef    \pa_{X_i} 
&
\widehat{x_i \in X_j} & \eqdef    \EF (\pa_{x_i} \et \pa_{X_j}) 
\\
\noalign{\pagebreak[2]}
  \widehat{\Edg(x,x_i)}  & \eqdef   \EX \pa_{x_i} 
& 
  \widehat{\Edg(x_i,x_j)} & \eqdef  \EF(\pa_{x_i} \et \EX \pa_{x_j})   
\\
    \widehat{\exists X_i.\phi} & \eqdef  \exists \pa_{X_i}.\widehat{\phi} 
&
   \widehat{\exists x_i.\phi} & \eqdef  \exists \pa_{x_i}. \uniq(\pa_{x_i}) \et \widehat{\phi}  
\end{xalignat*}
%
The last rule not listed above concerns $\widehat{\Edg(x_i,x)}$, and
depends on the semantics: in the tree semantics, there is no edges
coming back to the root and the formula is then equivalent to false;
in the structure semantics, we have to mark the root with an
individual variable and use the same kind of rule as above:
\[
\widehat{\Edg(x_i,x)} \quad \eqdef \quad \begin{cases} \bot &
  \quad\mbox{in the tree semantics.} \\
  \EF(\pa_{x_i} \et \EX \pa_{x}) & \quad\mbox{in the structure
    semantics.}  \end{cases}
\]

The correctness of the translation w.r.t.\ both semantics is stated in
the two following Lemmas, whose inductive proofs are straightforward:
\begin{lemma}
  \label{lem-msoqctl-corr-tree}
For any $\phi(x,x_1,...,x_n,X_1,...,X_k)\in\MSO$, any finite
Kripke structure~$\calS$ and any state~$q$, we~have:
\[
\calT_\calS(q),q,s_1,...,s_n,S_1,...,S_k
\sat_s \phi(x,x_1,...,x_n,X_1,...,X_k) 
\quad \mbox{iff} \quad      
\calT'_\calS(q),q \sat_s \widehat{\phi}
\]
where $\calT_\calS$ and $\calT'_\calS$ only differ in the labelling of
propositions $\pa_{x_i}$ and~$\pa_{X_i}$: in~$\calT_\calS$, no~state
is labelled with these propositions, while in~$\calT'_\calS$, we~have
(1)~$\pa_{x_i} \in \ell(s)$ iff $s=s_i$ and (2)~$\pa_{X_i}\in\ell(s)$
iff ${s \in S_i}$.
\end{lemma}
As~a special case, we~get that $\calT_\calS(q),q\models \phi(x)$ if,
and only if, $\calT_\calS(q),q\models_s \widehat{\phi}$, which
entails $\phi(x) \equiv_t \widehat{\phi}$.

\smallskip
As regards the structure semantics, using similar ideas, we have:
\begin{lemma}
  \label{lem-msoqctl-corr-struct}
  For any $\phi(x,x_1,...,x_n,X_1,...,X_k)\in\MSO$, any finite Kripke
  structure~$\calS$ and and state~$q$, we~have:
\[
\calS_q,q,s_1,...,s_n,S_1,...,S_k
\sat \phi(x,x_1,...,x_n,X_1,...,X_k) 
\quad \mbox{iff} \quad      
\calS'_q,q \sat_s \widehat{\phi}
\]
where $\calS_q$ and $\calS'_q$ only differ in the labelling of
propositions $\pa_{x_i}$ and~$\pa_{X_i}$: in~$\calS_q$, no state is
labelled with these propositions, while in~$\calS'_q$ we have
(1)~$\pa_x \in \ell(s)$ iff $s=q$, (2)~$\pa_{x_i} \in \ell(s)$ iff
$s=s_i$ and (3)~$\pa_{X_i}\in\ell(s)$ iff $s \in S_i$.
\end{lemma}
In the end, after labelling state~$q$ with~$\pa_x$, we obtain
$\calS_q,q \sat \phi(x)$ if, and only if, $\calS'_q,q \sat_s
\widehat{\phi}$, where $\calS'_q$ only differs from~$\calS_q$ by the
fact that $q$ is labelled with~$p_x$.
It~follows that
$\phi(x) \equiv_s  \exists \pa_{x}. (\pa_{x} \et \uniq(\pa_x) \et  \widehat{\phi})$.
\end{proof}

\begin{remark}
\label{rem-muc}
One can also notice that it is easy to express fixpoint operators with
\QCTL in both semantics, thus \mucalculus can be translated into
\QCTL. 
For instance, 
the least fixpoint equation $\mu T. [b \ou (a \et \Ex\X T)]$ would be written as
\[
\exists T.\ \Bigl[ \AG \Big(
T \Leftrightarrow [b \ou (a \et \Ex\X T)]\Big) 
\;\et\;
  \forall U.\ \bigl\{ \AG (U \Leftrightarrow [b \ou (a \et \Ex\X U)])
   \Rightarrow  \AG (T \Rightarrow U) \bigr\}
\Bigr].
\]
Such a formula says that there is a fixpoint~$T$ such that for any
fixpoint~$U$, $T$~is included in~$U$; this precisely characterises
least fixpoints. 
Since the \mucalculus extended with counting capabilities has the same
expressiveness as~\MSO on trees~\cite{MR03}, we~get another evidence
that \QCTL can express all \MSO properties when interpreted over trees.
\end{remark}

\subsection{\QCTL and \QCTL*}

Finally, we show that \QCTL* and \QCTL are equally expressive for both
semantics. The main idea of the proof is an inductive replacement of
quantified subformulas with extra atomic propositions. Indeed note
that for any \CTL* state formula $\Phi$ and any \QCTL* 
state formula~$\psi$, we~have $\Phi[\psi] \mathrel{\equiv_{s,t}} \exists
p_\psi.\bigl( \Phi[p_\psi] \et \AG(p_\psi \Leftrightarrow \psi) \bigr)$
where $p_\psi$ is a fresh atomic proposition.
We have:

\begin{restatable}{proposition}{propnostar}
\label{qctl-qctls}
  Under both semantics, \QCTL* and \QCTL are equally expressive.
\end{restatable}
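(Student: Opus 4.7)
The inclusion \QCTL $\leq$ \QCTL* is immediate from the syntactic containment, so my plan is to translate any \QCTL* state formula $\Phi$ into an $\equiv_{s,t}$-equivalent \QCTL formula $\tau(\Phi)$ by structural induction, using the stated hint equivalence as the main tool.

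The atomic, Boolean, and propositional-quantifier cases are direct; in particular, $\tau(\exists p.\,\phi) = \exists p.\,\tau(\phi)$, which lies in \QCTL by the inductive hypothesis. The only delicate case is a path-quantified subformula $\Ex\alpha$ (and dually $\All\alpha$), where $\alpha$ is a \QCTL* path formula whose maximal state subformulas $\psi_1,\ldots,\psi_n$ may themselves be in \QCTL*. I first translate each $\psi_i$ inductively into a \QCTL formula $\tau(\psi_i)$, and then substitute fresh atomic propositions $p_i$ for the $\psi_i$ inside $\alpha$, obtaining a pure \CTL* path formula $\alpha'[p_1,\ldots,p_n]$. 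Iterating the hint equivalence --- which is legitimate because the surrounding context is now a pure \CTL* state context --- then yields
\[
\Ex\alpha \;\equiv_{s,t}\; \exists p_1 \cdots p_n.\ \Bigl(\Ex\alpha'[p_1,\ldots,p_n] \;\et\; \ET_{i=1}^{n} \AG\bigl(p_i \Leftrightarrow \tau(\psi_i)\bigr)\Bigr),
\]
so what remains is to express the pure \CTL* state formula $\Ex\alpha'[p_1,\ldots,p_n]$ (over atomic propositions only) as a \QCTL formula.

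This \CTL*-to-\QCTL step is the principal obstacle, since the hint equation by itself only commutes propositional quantifiers past \CTL* contexts and does not flatten the \CTL*-like nesting of temporal modalities. Given the results already established in the paper, the cleanest route is to observe that \CTL* trivially embeds into \MSO and to invoke Proposition~\ref{prop-qctl-mso} for the embedding $\MSO \leq \QCTL$. A more self-contained alternative is to compose the classical \CTL*-to-\mucalculus translation with Remark~\ref{rem-muc}. Either way, once every pure \CTL* piece has been converted into \QCTL, the hint equivalence does the remaining bookkeeping of pulling propositional quantifiers out of \CTL* contexts, and the structural induction goes through, giving \QCTL* $\leq$ \QCTL.
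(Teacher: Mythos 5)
Your proposal is correct and follows essentially the same route as the paper: both isolate the maximal \QCTL state subformulas of a non-\QCTL piece, replace them by fresh propositions via the equivalence $\Phi[\psi] \equiv_{s,t} \exists p_\psi.(\Phi[p_\psi]\et\AG(p_\psi\Leftrightarrow\psi))$, and then convert the residual pure \CTL* formula into \QCTL through the \mucalculus translation of Remark~\ref{rem-muc} (the paper's induction on the number of non-\QCTL subformulas is just a different bookkeeping of your bottom-up structural induction). Your alternative detour through \MSO and Proposition~\ref{prop-qctl-mso} is also sound, since that proposition is established independently of this one.
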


\begin{proof}
The result for the tree semantic has been shown in~\cite{Fre01}. Here
we give a different translation, which is correct for both semantics.  
Consider a \QCTL* formula~$\Phi$. The proof is
  by induction over the number~$k$ of  subformulas of~$\Phi$
  that  are not in~\QCTL.
  If $k=0$, $\Phi$~already belongs to~\QCTL.  Otherwise let~$\psi$ be
  one of the smallest $\Phi$-subformulas in
  $\QCTL*\setminus\QCTL$. Let $\alpha_i$s with $i=1,\ldots,m$ be the
  largest $\psi$-subformulas belonging to \QCTL (these are state
  formulas). 
Then $\psi[(\alpha_i \leftarrow p_i)_{i=1,\ldots,m}]$
is a \CTL* formula: every subformula of the form $\exists
p.\ \xi$ in~$\psi$ belongs to some \QCTL formula~$\alpha_i$, since
$\psi$~is one of the smallest $\QCTL*\setminus\QCTL$ subformula. 
Therefore $\psi$~is equivalent (w.r.t. both semantics)~to:
\[
\exists p_1 \ldots                 \exists
p_m.\       \Bigl(     \psi[(\alpha_i \leftarrow p_i)_{i=1,\ldots,m}] \et
\ET_{i=1,\ldots,m} \AG (p_i \iff \alpha_i) \Bigr)
\]
Since \CTL* can be translated into the \mucalculus~\cite{Dam94b}, 
and the \mucalculus can in turn be translated into \QCTL(see Remark.~\ref{rem-muc}), 
we~get that $\psi[(\alpha_i \leftarrow p_i)_{i=1,\ldots,m}]$ is equivalent to some \QCTL
formula~$\Omega$.  Hence
\[
\psi \equiv_{s,t} \exists p_1 \ldots \exists p_m.  \Bigl( \Omega \et
\ET_{i=1,\ldots,m} \AG (p_i \iff  \alpha_i) \Bigr)
\]
Now, consider the formula obtained from~$\Phi$ by replacing~$\psi$ 
with the right-hand-side formula above.
This formula is equivalent to~$\Phi$ and has at most $k-1$ subformulas in
$\QCTL*\setminus\QCTL$, so that the induction hypothesis applies.
\end{proof}

From Propositions~\ref{prop-fnp}, \ref{prop-qctl-mso} 
and~\ref{qctl-qctls}, we~get: 
\begin{corollary}
  Under both semantics, the four logics \EQCTL, \QCTL and \QCTL* and
  \MSO are equally expressive.
\end{corollary}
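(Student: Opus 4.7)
The plan is essentially to observe that the corollary is an immediate chaining of the three propositions just established, with \QCTL acting as the central hub among the four logics.

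First I would note the syntactic inclusions: \EQCTL $\subseteq$ \QCTL $\subseteq$ \QCTL*, so any formula in \EQCTL or \QCTL trivially has an equivalent formulation in \QCTL* (and the inclusion \EQCTL $\subseteq$ \QCTL is equally trivial). This gives one direction of each of the relevant equivalences for free.

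Next I would invoke Proposition~\ref{qctl-qctls} to translate any \QCTL* formula into an equivalent \QCTL formula (under both semantics), and then apply Proposition~\ref{prop-fnp} to put the resulting \QCTL formula into prenex normal form, i.e., into \EQCTL. Composing these two translations shows that every \QCTL* formula is equivalent to some \EQCTL formula, which together with the syntactic inclusions yields \EQCTL $\equiv$ \QCTL $\equiv$ \QCTL* under both semantics.

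Finally I would close the loop with \MSO by invoking Proposition~\ref{prop-qctl-mso}, which directly provides effective translations in both directions between \QCTL and \MSO (under both the structure and tree semantics). Since \QCTL is equivalent to each of \EQCTL and \QCTL* as shown above, \MSO is equivalent to all three. There is no genuine obstacle here: the entire argument is a transitive chain, and the only point requiring a moment of care is confirming that each of the cited propositions was proved for \emph{both} semantics, which is indeed explicitly stated in their statements, so the conclusion carries over uniformly.
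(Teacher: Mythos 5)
Your proposal is correct and matches the paper exactly: the corollary is stated there as an immediate consequence of Propositions~\ref{prop-fnp}, \ref{prop-qctl-mso} and~\ref{qctl-qctls}, with no further argument, which is precisely the transitive chain you describe. Your added care about both semantics being covered by each proposition is sound and consistent with how those propositions are stated.
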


\medskip

\begin{remark}
\label{express-fqctls}
In~\cite{Fre01}, Tim~French considers a variant of \QCTL* (which we
call \FQCTL*), with  propositional quantification
within path formulas: $\exists p.\ \phipnormal$ is added in the definition
of path formulas. The semantics is defined as      follows: 
\[
\calS, \rho \sat_{s} \exists p. \phipnormal \quad\text{ iff }\quad 
\exists \calS' \equiv_{\AP\setminus\{p\}} \calS \text{ s.t. }\calS', \rho
\sat_{s} \phipnormal.
\]
It appears that this logic is not very different from~\QCTL* under the
tree semantics: French showed that \QCTL is as expressive as~\FQCTL*. 
Things are different in the structure-semantics
setting, where we now show that \FQCTL* is strictly more expressive than~\MSO. 
To begin with, consider the following formula:
\[
\Ex\G\bigl(\exists z. \forall z'. [\uniq(z) \et
\uniq(z') \et z \et \non z'] \impl  \X(\non z \Until z')\bigr).
\]
  This formula
  expresses the existence of an (infinite) path along which, between
  any two occurrences of the same state, all the other reachable states will be visited.
  This precisely corresponds to the existence of a Hamilton cycle, which is known not to
  be expressible in $\MSO$~\cite[Cor.~6.3.5]{EF95}. 
  However, note that the existence of a Hamilton cycle can be expressed in
  Guarded Second Order Logic \GSO{}\footnote{This logic is called \MSO[2]
    in~\cite{CE11}.}, in which quantification over sets of
  \emph{edges} is allowed (in~addition to quantification over sets of states).
  Still, \FQCTL* is strictly more expressive than~\GSO, as it~is easy to 
  modify the above formula  to express the existence
  of \emph{Euler} cycles: 
\begin{multline*}
  \Ex\G
  \Bigl(
  \exists x. \exists y. \forall x'. \forall y'. 
  \Bigl[ \step(x,y) \et \step(x',y') \et 
  \nstep(x,y) \et \non\nstep(x', y') \Bigr] \\
    \impl \X(\non \nstep(x, y) \Until \nstep(x', y'))
  \Bigr)
\end{multline*}
where $\step(x,y) \eqdef \uniq(x) \et \uniq(y) \et \Ex\F(x \et \X y)$
states that $x$ and~$y$ mark the source and target of a reachable transition, 
and $\nstep(x,y) \eqdef x \et \X y$ states that the next transition along the
current path jumps from~$x$ to~$y$. 
This can be seen to express the existence of an Euler cycle, which cannot be
expressed in~\GSO (otherwise evenness could also be expressed).

\begin{proposition}
  Under the structure semantics,   \FQCTL* is more expressive than
  \QCTL* and \MSO.  
\end{proposition}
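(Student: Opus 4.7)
The plan is to establish the containment in both directions of a strict inequality. First, I observe that $\QCTL*$ is syntactically a fragment of $\FQCTL*$ (the latter only adds the possibility of propositional quantification nested inside path formulas), so $\FQCTL*$ is at least as expressive as $\QCTL*$; combined with the corollary stating that $\QCTL* \equiv \MSO$ under the structure semantics, this also shows that $\FQCTL*$ is at least as expressive as $\MSO$. So the task reduces to producing a property expressible in $\FQCTL*$ but not in $\MSO$ (which will give strictness over both logics at once).

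For the separating example, I would take the Hamilton-cycle formula displayed above the statement, namely
\[
\phi_{\mathrm{Ham}} \;\eqdef\; \Ex\G\bigl(\exists z. \forall z'.\ [\uniq(z) \et \uniq(z') \et z \et \non z'] \impl \X(\non z \Until z')\bigr).
\]
The key verification step is to argue that $\calS,q \sat_s \phi_{\mathrm{Ham}}$ if and only if $\calS$ (reachable from~$q$) admits a Hamilton cycle through~$q$. For the ``if'' direction, an exhibited Hamilton cycle, extended into an infinite periodic path, witnesses the outer $\Ex\G$: at every visited state, label that state with $z$ (making $\uniq(z)$ hold), then for any other reachable state labelled uniquely with $z'$, the next portion of the path reaches the $z'$-state without revisiting the $z$-state, which is exactly $\X(\non z \Until z')$. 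For the ``only if'' direction, the witnessing path must be lasso-shaped (since the structure is finite), and the inner formula forces every reachable state to be visited strictly between any two consecutive visits to the current state, yielding a Hamilton cycle.

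The main obstacle is the lower bound: showing that the Hamilton-cycle property is not expressible in $\MSO$. Here I would just cite \cite[Cor.~6.3.5]{EF95}, which gives precisely this, and conclude that $\phi_{\mathrm{Ham}}$ has no $\MSO$-equivalent (hence no $\QCTL*$-equivalent under the structure semantics). Since $\MSO$ does not even distinguish isomorphic finite structures differing only by their edge relation in the relevant way, this separation is robust. If one wants a stronger statement separating $\FQCTL*$ from Guarded Second-Order Logic (as alluded to in the preceding paragraph), the same argument can be rerun with the Euler-cycle formula in place of $\phi_{\mathrm{Ham}}$, invoking the standard fact that Euler-cycle existence is not $\GSO$-definable (else parity of edge-sets would be $\GSO$-definable, which it is not). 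In either form, the separation completes the proof.
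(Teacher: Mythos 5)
Your proposal is correct and takes essentially the same route as the paper: the syntactic inclusion of \QCTL* in \FQCTL* plus the Hamilton-cycle formula as the separating example, with \cite[Cor.~6.3.5]{EF95} supplying \MSO-inexpressibility and the Euler-cycle variant handling the \GSO separation. The only difference is that you spell out the semantic correctness of the Hamilton-cycle formula, which the paper merely asserts.
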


\medskip
Nevertheless \FQCTL* model checking (see next section) is decidable:
for the tree semantics, it~suffices to translate \FQCTL* to \QCTL, as
proposed by~French~\cite{Fre01}.  The problem in the structure
semantics can then be encoded in the tree semantics:
for this we first need to extend the labelling of the Kripke
structure~$\calS$ with fresh propositions, one per state (\eg\ assume
that state~$q$ is labeled by~$\mathsf{p}_q$). Let~$\calS'$ be such
an extension (notice that the existence of an Euler path in such a
Kripke structure can be expressed in~\CTL). Then any quantification
$\exists P.\ \phi$ in some \FQCTL* formula~$\Phi$ (for the structure
semantics) is considered in the tree semantics. For this to be correct, 
we~augment~$\Phi$ with the extra requirement
that any two copies of the same state receive the same labelling.
We~thus build a formula $\widehat{\Phi}^{\calS'}$, by
replacing every subformula $\exists P.\ \psi$ in~$\Phi$ with 
\[
\exists P.\ \ET_{q\in Q} \Bigl( 
\EF\mathsf{p}_q \thn \bigl( \EF (\mathsf{p}_q \et P) \iff
\non\EF (\mathsf{p}_q \et \non P)\bigr)
\Bigr) \et \widehat{\psi}^{\calS'}.
\]
We then have: $\calS,q \sat_s \Phi$ iff $\calS',q \sat_t
\widehat{\Phi}^{\calS'}$.
\end{remark}

\section{Model checking}
We now consider the model-checking problem for \QCTL* and its
fragments under both semantics: given a finite Kripke
structure~$\calS$, a~state~$q$ and a formula~$\phi$, is~$\phi$ satisfied in
state~$q$ in~$\calS$ under the structure (resp.~tree) semantics? 
In this section, we~characterise the complexity of this problem. 
A~few results already exist, \eg for \EQnCTL1~ and \EQnCTL1* under both
semantics~\cite{Kup95a}. Hardness results for \EQnCTL2 and \EQnCTL2*
under the tree semantics can be found in~\cite{KMTV00}. Here we extend
these results to all the fragments of \QCTL* we have defined. We~also
characterize the program- and formula-complexities~\cite{Var82} of
model-checking for these fragments: the formula complexity (resp.\
program complexity) consists in evaluating the complexity of the
problem $\calS \sat \phi$ when the model $\calS$ (resp.~formula~$\phi$) is
assumed to be fixed. 
Except for Theorem~\ref{thm-form-qkctl-s}, our results hold true irrespective
of the notion of size (classical size of DAG-size) we use for \QCTL* formulas. 
Appendix~\ref{app-complex} proposes a short introduction to the complexity
classes used in the rest of the paper (especially the polynomial-time and
exponential hierarchies).

\subsection{Model checking for the structure semantics}
\label{sec-kripke}

\subsubsection{Fragments of \QCTL.}
First we consider the fragments of \QCTL with limited
quantifications: \EQnCTL k, \AQnCTL k, and \QnCTL k.
Prenex-normal-form formulas are (technically) easy to handle inductively:
a~formula in~\EQnCTL k can be checked by  
non-deterministically guessing a labelling and applying a
model-checking procedure for \AQnCTL{k-1}.  
We prove that the model-checking problems for these fragments populate the
polynomial-time hierarchy~\cite{Sto76}:

\begin{restatable}{theorem}{thmEQkCTLs}
\label{thm-eqkctl-s}
Under the structure semantics, model checking \EQnCTL k is \SSP
k-complete and model checking \AQnCTL k is \PPP k-complete.
\end{restatable}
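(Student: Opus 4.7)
The plan is to establish matching upper and lower bounds for both \EQnCTL k and \AQnCTL k by induction on~$k$, matching level $k$ of the polynomial hierarchy.

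For the upper bound, I~would proceed by induction on~$k$. The base case $k=0$ reduces to model checking~\CTL, which is in~$\P = \SSP 0 = \PPP 0$. For the inductive step, consider a formula $\exists p.\psi \in \EQnCTL k$ with $\psi \in \AQnCTL{k-1}$. By the structure semantics, $\calS, q \sat_s \exists p.\psi$ iff there exists $\calS' = \tuple{Q,R,\ell'}$ with $\calS' \equiv_{\AP\setminus\{p\}} \calS$ such that $\calS', q \sat_s \psi$. Such a relabelling is specified by a subset $L \subseteq Q$ of states labelled with~$p$, encoded in size polynomial in~$\size\calS$. The $\SSP k$ algorithm nondeterministically guesses~$L$, builds~$\calS'$, and invokes a $\PPP{k-1}$ oracle to check $\calS', q \sat_s \psi$ (which exists by induction hypothesis). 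The argument for \AQnCTL k is completely symmetric, yielding $\PPP k$.

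For the lower bound, I~would reduce from the canonical $\SSP k$-complete problem~$\mathrm{QBF}_k$: deciding the truth of a quantified Boolean formula $\Psi = \exists X_1.\ \forall X_2.\ \exists X_3 \cdots Q_k X_k.\ \phi(X_1,\ldots,X_k)$, where each $X_i$ is a block of Boolean variables and $\phi$ is a propositional formula. The reduction uses a fixed \emph{single-state} Kripke structure $\calS = \tuple{\{q\},\{(q,q)\},\ell}$ with $\ell(q) = \emptyset$ (the self-loop is needed so that $R$ is total), together with one atomic proposition~$p_x$ per Boolean variable~$x$. The translated formula is
\[
\widehat\Psi \;\eqdef\; \exists \bar p_{X_1}.\ \forall \bar p_{X_2}.\ \cdots Q_k \bar p_{X_k}.\ \widehat\phi
\]
where $\widehat\phi$ is obtained from~$\phi$ by substituting each variable~$x$ with the atomic proposition~$p_x$; $\widehat\phi$ is a propositional (hence~\CTL) state formula, so $\widehat\Psi \in \EQnCTL k$. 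Since $\calS$ has a single state, the only two $P$-equivalent structures (for any singleton $P = \{p_x\}$) correspond to labelling~$q$ with~$p_x$ or~not, which precisely matches the two Boolean values for~$x$. A routine induction on the quantifier prefix gives $\calS, q \sat_s \widehat\Psi$ iff $\Psi$ is true. The \AQnCTL k case is dual, reducing from QBF starting with~$\forall$.

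I~do not expect any step to be a serious obstacle: the upper bound is a direct alternating-oracle argument using polynomiality of \CTL model checking, and the lower bound uses a particularly trivial model so that the structure semantics of propositional quantification degenerates exactly to Boolean quantification. The only mild care needed is in verifying that the translated formula indeed lies in \EQnCTL k (grouping quantifiers by block), and that the totality requirement on~$R$ is met by the self-loop.
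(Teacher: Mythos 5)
Your proposal is correct and follows essentially the same route as the paper: membership by inductively guessing the labelling for the outermost existential block and delegating the remaining $\AQnCTL{k-1}$ formula to the lower level of the hierarchy, and hardness by observing that a $\SSP k$-QBF instance is directly an \EQnCTL k model-checking instance over a one-state Kripke structure with a self-loop. The only cosmetic difference is that the paper dispatches the \AQnCTL k case by noting such formulas are negations of \EQnCTL k formulas rather than rerunning the symmetric argument.
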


\begin{proof}
We begin with noticing that an \AQnCTL k formula is nothing but the negation
of an \EQnCTL k formula. Hence it suffices to prove the result for \EQnCTL k.
The case where~$k=0$ corresponds to \CTL model-checking, which is
\PTIME-complete. 
For~$k>0$, hardness is easy, as \EQnCTL k model checking subsumes
the following problem, which is known to be \SSP k-complete~\cite{Pap94}:

\noindent\problem{\SSP k{}\textsf{SAT}}
  {$k$ families of variables $U_i = \{ u_1^i,
   \ldots,u_n^i\}$, and a propositional formula $\Phi(U_1,\ldots,U_k)$
   over $\bigcup_i U_i$}
  {is the 
   quantified Boolean 
   formula $\Q_1 U_1 \Q_2 U_2
   \ldots \Q_k U_k. \Phi(U_1,\ldots,U_k)$ true, where $\Q_i$ is $\exists$
   (resp.~$\forall$) when $i$ is~odd (resp.~even)?}
 Membership in \SSP k is proved inductively: an~\EQnCTL 1 instance
 $\exists u^1_1\ldots \exists u^1_k.\ \phi$  can be
 solved in \NP=\SSP1 by non-deterministically picking a labelling of the
 Kripke structure under study with atomic propositions~$u^1_1$ to~$u^1_k$, and then
 checking (in polynomial time) whether the \CTL formula~$\phi$ holds
 true in the resulting Kripke structure. Similarly, an \EQnCTL k
 formula $\exists u^1_1\ldots \exists u^1_k.\ \phi$, where $\phi$ is in
 \AQnCTL{k-1}, can be checked by first non-deterministically labelling the Kripke
 structure with atomic propositions~$u^1_1$ to~$u^1_k$, and checking the remaining
 \AQnCTL{k-1} formula~$\phi$ in the resulting Kripke structure. The latter is in
 \PPP{k-1} according to the induction hypothesis, so that the whole procedure
 is in \SSP k.
\end{proof}


When dropping the prenex-normal-form restriction, we~get

\begin{restatable}{theorem}{thmQkCTLs}
\label{thm-qkctl-s}
Under the structure semantics, model checking \QnCTL k is
\DDPlogn{k+1}-complete.
\end{restatable}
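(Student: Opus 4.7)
The plan is to prove both \DDPlogn{k+1}-membership and \DDPlogn{k+1}-hardness, essentially by structural/inductive arguments on the quantifier depth~$k$.

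For the upper bound, I~would use the natural syntactic decomposition of a \QnCTL k formula~$\Phi$: view~it as a \CTL formula whose ``atoms'' are either classical propositions or the \emph{maximal proposition-quantified subformulas} of~$\Phi$, \ie, subformulas whose top operator is a proposition quantifier. Each such maximal subformula has proposition-quantifier depth at most~$k$, possibly with interleaved \CTL constructs between the quantifier blocks; by a direct generalization of Theorem~\ref{thm-eqkctl-s}, the question of whether such a subformula holds at a given state of~$\calS$ lies in~\SSP k (or~\PPP k), because \CTL tests inside a quantifier block are polynomial-time and hence absorbed by the alternating machine. Model checking~$\Phi$ at~$q_0$ then reduces to: issue, for every (state, maximal-quantified-subformula) pair, one \SSP k query, and perform a single polynomial-time \CTL model-checking pass on the resulting labelling. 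These queries are \emph{non-adaptive}: they depend only on~$\calS$ and~$\Phi$, not on each other's answers. By Wagner's classical equality $\mathrm P_{||}^{\SSP k} = \mathrm P^{\SSP k[\log n]}$, this lies in~$\DDPlogn{k+1}$.

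For the lower bound, I~reduce from a canonical \DDPlogn{k+1}-complete problem, \eg\ the parity of the largest satisfied index in a family of polynomially many \SSP k instances (a natural generalization for higher~$k$ of the classical \textsc{Odd-Max-Sat} problem, which is $\DDPlogn 2$-complete). The reduction constructs a small Kripke structure with one component per instance plus administrative states encoding the instance index in binary, and a \QnCTL k formula whose outer \CTL skeleton implements the index comparison and parity combination over reachable states, while its inner quantifier blocks (of alternation up to~$k$, with possibly interleaved \CTL) implement each individual \SSP k oracle query. Since substituting \EQnCTL 1 or \QnCTL{k-1} subformulas into a \CTL context stays within \QnCTL k by definition, the resulting formula is of the right shape.

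The main obstacle is the justification in the upper bound that evaluating a maximal proposition-quantified subformula genuinely remains in \SSP k even though its body may interleave quantifier blocks with \CTL operators. The cleanest route is to prove, uniformly and by induction on the quantifier alternation, a strengthened form of Theorem~\ref{thm-eqkctl-s} where the CTL-inside-quantifiers is simulated by the alternating machine on the fly (rather than first forcing a prenex normal form via Proposition~\ref{prop-fnp}, which would incur an undesirable blow-up in alternation). Once this is in place, aggregating polynomially many non-adaptive \SSP k queries into the O$(\log n)$-adaptive access via Wagner's theorem is routine, and the lower bound reduction is largely a matter of carefully encoding bit-indexing in the \CTL skeleton.
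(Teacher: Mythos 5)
Your proposal is correct and follows essentially the same route as the paper's proof, with two points of divergence worth noting. For the upper bound, the paper uses exactly your decomposition (a \CTL skeleton whose atoms are the maximal quantified subformulas, non-adaptive oracle queries for those, one polynomial-time \CTL pass, and Wagner's equality $\DDPpar{k+1}=\DDPlogn{k+1}$); but the step you single out as the main obstacle --- that a maximal quantified subformula with interleaved \CTL is decidable in \SSP k --- is dispatched in the paper without any bespoke alternating-machine simulation: the body of such a subformula lies in $\QnCTL{k-1}$, hence is decidable in $\DDPlogn{k}$ by the \emph{induction hypothesis of the theorem itself}, and one existential guess of the labelling then gives $\NP^{\DDPlogn{k}}=\NP^{\SSP{k-1}}=\SSP k$. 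This mutual induction is simpler than, and makes unnecessary, the ``strengthened Theorem~\ref{thm-eqkctl-s}'' you propose (which would in effect be re-proving $\DDP{k}\subseteq\SSP{k}$ by hand), and it avoids prenex normal form just as you intend. For the lower bound, the paper reduces from \PARITY(\SSP k) --- the parity of the \emph{number} of positive \SSP k{}\textsf{SAT} instances, cited as \DDPlogn{k+1}-complete from Gottlob --- rather than from your odd-max-index variant; the gadget is the same (one self-looping state per instance, chained together, with $\phi=\OU_i(x_i\et\Psi_i)$), and the index bookkeeping is done by linearly nested until-formulas that count the positive instances along the chain, with no binary index encoding needed. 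Your variant would also work, but you would need to supply a proof or reference that the generalization of \textsc{Odd-Max-Sat} to \SSP k instances is \DDPlogn{k+1}-complete for every~$k$, which your sketch currently asserts without support.
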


\begin{proof}
We~define the algorithm for \QnCTL k inductively: when $k=0$, we~just have a \CTL
model-checking problem, which is complete for $\PTIME=\DDPlogn 1$.
Assume that we have a \DDPlogn{k+1} algorithm for the \QnCTL k
model-checking problem, and 
consider a formula~$\phi\in\QnCTL {k+1}$: it~can
be written under the form
$
\phi \eqdef \Phi[(q_i \to \exists P_i.\ \psi_i)_i]
$
with $\Phi$ being a \CTL formula involving fresh atomic
propositions~$q_i$, and $\exists P_i.\ \psi_i$ are
subformulas\footnote{$\exists P_i$ denotes a sequence of existential
  quantifications.}  of~$\phi$.
The existential quantifiers in these subformulas are the outermost propositional
quantifiers in~$\phi$, and $\psi_i$ belongs to \QnCTL k, as we assume
that $\Phi$ is a \CTL formula.
As a consequence, $\exists P_i.\ \psi_i$ is a
state-formula, whose truth value only depends on the state in which it is
evaluated. For such a formula, we~can non-deterministically label
the Kripke structure with propositions in~$P_i$, and check whether $\psi_i$ holds in
the resulting Kripke structure. Computing the set of states satisfying
$\exists P_i.\ \psi_i$ is then achieved in $\NP^{\DDPlogn{k+1}}$, which is
equal to~$\SSP{k+1}$. 
Moreover, the queries for all the selected subformulas are independent
and can be made in parallel. It~just remains to check whether
the \CTL formula~$\Phi$ holds, which can be achieved in polynomial
time. This algorithm is thus in $\DDPlogn{k+2}$, since $\DDPlogn
{k+2}=\DDPpar{k+2}$ (see~\cite{Wag90}).

\medskip
We prove hardness using problems \PARITY{}(\SSP
k), defined as follows:
\problem{\PARITY{}(\SSP {k})}
  {$m$ instances of \SSP k{}\textsf{SAT}
   $Q^i_1 U^i_1 \ldots Q^i_k U^i_k.\penalty1000  \Phi^i(U^i_1,\ldots, U^i_k)$, where
  $Q^i_j=\exists$ when $j$ is odd and $Q^i_j=\forall$ otherwise}
 {is the number of positive instances even?}

This problem is \DDPlogn {k+1}-complete~\cite{Got95}. We~encode it
into a \QnCTL k model-checking problem.
Fix~$1\leq i\leq m$; the instance $\Psi_i=Q^i_1 U^i_1 \ldots Q^i_k
U^i_k.\penalty1000  \Phi^i(U^i_1,\ldots, U^i_k)$ of \SSP k{}\textsf{SAT} is
encoded as in the previous reduction, using a one-state Kripke structure~$\calS_i$ that
will be labelled with atomic propositions~$u^i_{j,l}$. The \QnCTL k formula to be
checked is then~$\Psi_i$ itself. We~label the unique
state of that Kripke structure with an atomic proposition~$x_i$, that
will be used in the sequel of the reduction.

Now, consider the Kripke structure~$\calS$  obtained as the ``union'' of the
one-state Kripke structures above, augmented with an extra state~$x_{m+1}$ and
transitions $(x_i,x_{i+1})$, for each $1\leq i\leq m$.  We~define
$
\phi \eqdef \OU_{1\leq i\leq m} (x_i \et \Psi_i)
$.
This formula holds true in those states~$x_i$ of~$\calS$ whose
corresponding \SSP k{}\textsf{SAT} instance is positive. It~remains to build a
formula for ``counting'' these sets: we~let
\begin{xalignat*}3
\psi_0 &\eqdef \Ex(\non\phi \Until x_{m+1}) 
&&\text{ and }&
\psi_{i+1} &\eqdef \Ex (\non\phi \Until (\phi \et \Ex\X \psi_i)).
\end{xalignat*}
It~is easily seen that $\psi_s$ holds true in state~$x_1$ of~$\calS$ iff
exactly $s$ of the $m$ instances of \SSP k{}\textsf{SAT} are
positive. Moreover, each~$\psi_i$ has quantifier height at most~$k$. 
The final formula is then the disjunction of the formulas~$\psi_{2i}$, for
$0\leq i\leq m/2$.
\end{proof}

\subsubsection{\EQCTL and extensions of \CTL*.} When considering
logics with no quantification restriction or the extensions of \CTL*, 
model-checking complexity becomes \PSPACE-complete:

\begin{restatable}{theorem}{thmQCTLs}
\label{thm-eqctl-s}\label{thm-qctl-s}\label{thm-eqkctl*-s}\label{thm-qkctl*-s}
\label{thm-eqctl*-s}\label{thm-qctl*-s}
Under the structure semantics, model
checking \EQCTL, \QCTL, \EQnCTL k*,
\AQnCTL k*, \QnCTL k*, \EQCTL* and \QCTL* is \PSPACE-complete.
\end{restatable}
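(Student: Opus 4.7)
My plan is to establish the upper bound at the top of the listed family---\QCTL*---and the lower bound at the bottom---\EQCTL---so that both bounds propagate to every fragment mentioned in the statement.

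For the upper bound I would give a recursive polynomial-space algorithm for \QCTL* model checking under the structure semantics. The key observation is that a propositional quantifier $\exists p.\,\psi$ ranges over the $2^{|Q|}$ possible labellings of the finite state set~$Q$ of~$\calS$ with~$p$, and any single such labelling is storable with $|Q|$ bits. The procedure works outside-in on the syntax tree: each propositional quantifier is handled by enumerating $p$-labellings and recursively evaluating~$\psi$ on the relabelled model, while each maximal \QCTL*-subformula that is free of propositional quantifiers is plain \CTL* and can be discharged by the standard \PSPACE algorithm for \CTL* model checking. Since each nested propositional quantifier adds only $|Q|$ bits of working memory and recursion depth is bounded by the quantifier depth of~$\phi$, the total space stays polynomial in $|\calS|$ and $|\phi|$. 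Equivalently, one can view the procedure as an alternating polynomial-time machine (APTIME $=$ \PSPACE) in which propositional quantifiers become existential/universal guessing states. This algorithm subsumes model checking for every fragment listed.

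For the lower bound, \PSPACE-hardness of plain \CTL* model checking already implies \PSPACE-hardness of \EQnCTL k*, \AQnCTL k*, \QnCTL k*, \EQCTL* and \QCTL*, since \CTL* is syntactically a sub-logic of each. What remains is \EQCTL (and hence \QCTL). For these I would reduce from \textsc{TQBF}: given a closed instance $\mathcal{Q}_1 x_1 \cdots \mathcal{Q}_n x_n.\,\chi$ with $\chi$ a propositional formula over $\{x_1,\ldots,x_n\}$, I build a one-state Kripke structure with a self-loop and query the \EQCTL formula $\mathcal{Q}_1 x_1 \cdots \mathcal{Q}_n x_n.\,\chi$, treating each~$x_i$ as an atomic proposition and $\chi$ as its own (quantifier-free, hence \CTL) kernel. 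Under the structure semantics a quantifier $\exists x_i$ toggles the labelling of the unique state with~$x_i$, matching exactly the Boolean meaning of the original quantifier, so the reduction is polynomial and faithful.

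The one delicate point is making sure the upper-bound procedure is polynomial in \emph{space} even if the input formula is presented in DAG form: naive recursion on a shared subformula could blow up the working memory. I would handle this either by re-evaluating shared subformulas (acceptable since, unlike time, space is reusable) or, more cleanly, through the alternating-machine formulation, whose configuration only needs to record a single position in the DAG together with the currently fixed labellings of quantified propositions---which is polynomial regardless of how much sharing the input exhibits. This is why the theorem can be stated to hold irrespective of the chosen notion of formula size.
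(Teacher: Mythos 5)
Your proposal is correct and follows essentially the same route as the paper: the lower bound combines a \textsf{QBF} reduction on a one-state structure (for \EQCTL) with the known \PSPACE-hardness of \CTL* model checking (for the starred fragments), and the upper bound handles propositional quantifiers by guessing/enumerating labellings on top of the polynomial-space \CTL* model-checking algorithm. The paper packages the membership argument as the closure $\NP^{\PSPACE}=\PSPACE$ applied to the \EQnCTL k scheme, whereas you spell it out as an explicit space-bounded recursion (or an alternating polynomial-time machine); these are the same argument in different clothing, and your added care about nested quantifiers and DAG-presented inputs is consistent with the paper's remarks.
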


\begin{proof}
  \PSPACE-hardness is straightforward because (1)~any instance of
  \textsf{QBF} is a special case of a model checking problem for every
  logic with unbounded quantifications (\EQCTL, \QCTL, \EQCTL* and
  \QCTL*) and (2)~the model-checking problem is \PSPACE-hard for \CTL*~\cite{SC85},
  hence also for any extension thereof.

  For \PSPACE membership, it is sufficient to show the result for
  \QCTL*. Consider a formula $\Phi \eqdef \exists p_1 \ldots \exists
  p_k.\phi$ with $\phi \in \CTL*$. We can easily consider the same
  kind of algorithm we used for \EQnCTL k in Theorem~\ref{thm-eqkctl-s}: we
  only replace the \CTL model-checking algorithm with a \CTL*
  model-checking algorithm running in polynomial space~\cite{CES86}. Since
  $\NP^{\PSPACE}=\PSPACE$, the resulting algorithm is in \PSPACE. This
  clearly provides a \PSPACE algorithm for any \QCTL* formula.
\end{proof}

\subsubsection{Program-complexity.}
Now we consider the \emph{program complexity} (or \emph{model complexity}) of
model checking for the structure semantics. In this context, we assume that
the formula is fixed, and the complexity is then expressed only in
terms of the size of the model.

\begin{restatable}{theorem}{thmprogEQkCTLs}
\label{thm-prog-eqkctl-s}
\label{thm-prog-eqkctl*-s}
  Under the structure semantics, for any~$k>0$, the program-complexity
 of model checking is \SSP k-complete for \EQnCTL k and \EQnCTL k*, and  \PPP
  k-complete for \AQnCTL k and \AQnCTL k*.
\end{restatable}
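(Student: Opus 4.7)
My plan is to adapt the proof of Theorem~\ref{thm-eqkctl-s} to the program-complexity setting, in~which the formula is fixed and complexity is measured as a function of~$|\calS|$.

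For the upper bounds, since $\EQnCTL k \subseteq \EQnCTL k^*$ and $\AQnCTL k \subseteq \AQnCTL k^*$, it~suffices to show $\SSP k$-membership for~$\EQnCTL k^*$; the $\PPP k$~bound for $\AQnCTL k^*$ then follows by duality. Given a~\emph{fixed} formula $\exists P_1.\,\forall P_2.\,\ldots\,\Q_k P_k.\,\phi$ with $\phi\in\CTL^*$, I~would non-deterministically guess a labelling of~$\calS$ by the (fixed-size) set~$P_1$ using $|P_1|\cdot|\calS|$ bits, and then recurse on the remaining $\AQnCTL{k-1}^*$ formula via a $\PPP{k-1}$ oracle. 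The recursion bottoms out at $\CTL^*$ model-checking of a fixed formula, which runs in time $O(|\calS|\cdot 2^{|\phi|})$, polynomial in~$|\calS|$ since~$\phi$ is fixed. This gives $\NP^{\PPP{k-1}}=\SSP k$, and strictly improves on the combined-complexity bound ($\PSPACE$) for the starred fragments.

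For the lower bounds, I would reduce $\SSP k{}\textsf{SAT}$ to the program-complexity problem using a \emph{fixed} formula~$\Phi_k$ and a Kripke structure~$\calS_\Psi$ that encodes the instance $\Psi = \Q_1 U_1 \ldots \Q_k U_k.\,\Phi(U_1,\ldots,U_k)$ (with~$\Phi$ assumed in CNF). The structure~$\calS_\Psi$ has a root with transitions to one state per clause of~$\Phi$ (labelled~$\mathsf{c}$); each clause state transitions to its literal states (each labelled with its sign~$\mathsf{pos}$ and its block marker~$\mathsf{b}_i$); each literal state has a single transition to its variable state (labelled~$\mathsf{v}$, with a self-loop for totality). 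The fixed formula~$\Phi_k$ uses $k$ alternating propositional quantifiers $\Q_1 p_1 \ldots \Q_k p_k$ to assign truth values to variables block by block (the value of~$p_i$ on a variable state encodes the truth of that variable), followed by a plain \CTL formula asserting that every clause has a satisfied literal:
\[
\Phi_k \;\eqdef\; \Q_1 p_1.\,\ldots\, \Q_k p_k.\;\All\X\Bigl[\mathsf{c} \impl \Ex\X\bigl(\OU_{i=1}^{k} \mathsf{b}_i \et \bigl[(\mathsf{pos}\et \Ex\X(\mathsf{v}\et p_i))\ou (\non\mathsf{pos}\et \Ex\X(\mathsf{v}\et\non p_i))\bigr]\bigr)\Bigr].
\]
By construction, $\calS_\Psi$ at its root satisfies~$\Phi_k$ under the structure semantics if and only if~$\Psi$ is true, witnessing $\SSP k$-hardness for~$\EQnCTL k$; $\PPP k$-hardness for~$\AQnCTL k$ follows by negation, and the starred fragments inherit hardness from the non-starred ones. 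The main subtlety is that~$\Phi_k$ must be \emph{fixed} while~$\Phi$ varies, which forces the Boolean formula~$\Phi$ to be encoded combinatorially into~$\calS_\Psi$; extraneous labellings of non-variable states by~$p_i$ are harmless, since~$\Phi_k$ consults~$p_i$ only at variable states reached under the guard~$\mathsf{b}_i$.
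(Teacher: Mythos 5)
Your proposal is correct and follows the paper's overall strategy for both directions. The membership argument is essentially identical to the paper's: guess the outermost existential labelling in \NP, recurse through a \PPP{k-1} oracle, and bottom out at model checking a fixed \CTL* formula, which is polynomial in~$\size\calS$. The hardness reduction shares the paper's key idea --- a fixed formula with one quantified proposition per quantifier block, the Boolean instance being encoded combinatorially in the Kripke structure --- but your gadget is genuinely different in one respect. The paper stores truth values on the \emph{literal} states: each variable $p$ contributes two states $p$ and $\non p$, and a layer of $\mathsf{test}$-states together with the conjunct $\All\X(\mathsf{test}\Rightarrow(\Ex\X\oplus\et\Ex\X\non\oplus))$ is needed to force exactly one of the two to be marked, so that a marking is a consistent valuation; for universally quantified blocks this consistency condition must then be placed as the \emph{hypothesis} of an implication, so that inconsistent markings are discarded rather than falsifying the formula. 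Your gadget stores the truth value on a single state per variable, read through the literal states via the block markers $\mathsf{b}_i$ and the sign $\mathsf{pos}$; every labelling is then automatically a valuation, which eliminates both the $\mathsf{test}$-layer and the asymmetric treatment of existential versus universal blocks --- your closing remark that extraneous $p_i$-labellings are harmless is precisely what makes this sound, including under the universal quantifiers. Both constructions establish the same bounds; yours is a mild but real simplification of the reduction, at the cost of one extra layer of states (literals separated from variables) in the model.
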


\begin{proof}
  Membership in~\SSP k for \EQnCTL k and \AQnCTL k comes directly from
  the general algorithms (Theorem~\ref{thm-eqkctl-s}). For \EQnCTL k*
  and \AQnCTL k*, we can use the same approach: 
  fix a formula $\Phi = \exists u_1^1 \ldots \exists u^1_k.\phi$ with $\phi
  \in \CTL*$. Deciding the truth value of $\Phi$ can be done in \NP
  by first non-deterministically guessing a labelling of the model with
  $\{u_1^1,...,u^1_k\}$ and then checking the \emph{fixed} formula~$\phi$
  (model checking a fixed formula of \CTL* is \NLOGSPACE-complete~\cite{Sch03a}).
  Thus with the
  same argument we used for the proof of Theorem~\ref{thm-eqkctl-s}, we
  get a \SSP k algorithm for any fixed \EQnCTL k* formula (and a
  \PPP k algorithm for a \AQnCTL k* formula).

We~now prove hardness in~\SSP k for \EQnCTL k (the results for \EQnCTL
k*, \AQnCTL k and \AQnCTL k* are proven
similarly). We~begin with the case where $k=1$ (for which the result is
already given in~\cite{Kup95a} with a proof derived from~\cite{HK94}):
quantification is encoded in the (fixed) \EQnCTL k formula, while the model encodes 
the SAT formula to be checked. We~begin with an \NP-hardness proof
for \EQnCTL 1, and then explain how it can be extended to~\EQnCTL k.

Consider an instance $\exists P.\ \phi_b(P)$, where $P$ is a set of
variables. We assume w.l.o.g. that propositional formula~$\phi_b$ is a conjunction of
disjunctive clauses. We begin with defining the model associated to~$\phi_b$,
and then build the formula, which will depend neither on~$\phi_b$, nor on~$P$.

Write~$\phi_b=\ET_{1\leq i\leq m} \OU_{1\leq j\leq n} \ell_{i,j}$, where
$l_{i,j}$ is in $\{p_k,\neg p_k \mid p_k\in P\}$. The model is defined as
follows:
\def\test{\mathsf{test}\xspace}
\begin{itemize}
\item it~has one initial state, named~$\phi_b$, $m$~states named~$C_i$ for
  $1\leq i\leq m$, and $3\size P$ states named~$p_k$, $\non p_k$ and
  $\test(p_k)$ for each~$p_k\in P$. 
\item there is a transition from~$\phi_b$ to each~$C_i$ and to
  each~$\test(p_k)$, a transition from each~$\test(p_k)$ to the
  corresponding~$p_k$ and~$\neg p_k$, and a transition from each~$C_i$ to its
  constitutive literals~$\ell_{i,j}$. Finally, each~$p_k$ and~$\neg p_k$
  carries a self-loop.
\item states~$\test(p_k)$ are labelled with an atomic proposition~$\test$,
  which is the only atomic proposition in the model.
\end{itemize}
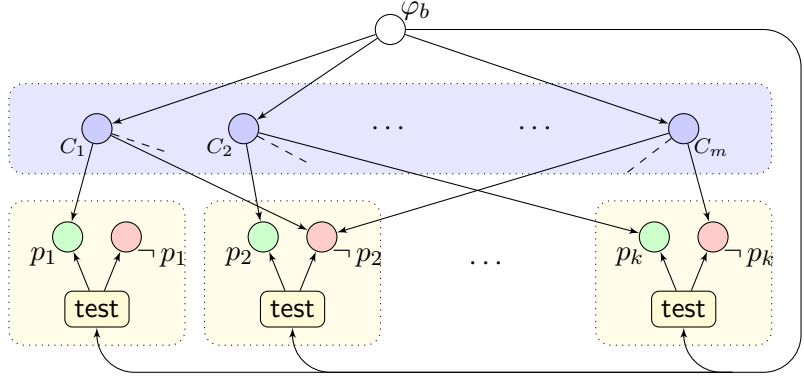
\begin{figure}[t]
\centering
\begin{tikzpicture}[yscale=1.2,xscale=1.3]
\draw (0,0.1) node[circle,draw,minimum width=4mm] (phi) {} node[above right] {$\phi_b$};
\draw[rounded corners=2mm,fill=blue!10!white,dotted] (-3,-1.5) -| (3.9,-.5) -| (-3.9,-1.5) -- (-3,-1.5);
\draw (-3,-1) node[circle,draw,minimum width=4mm,fill=blue!20!white] (c1) {} node[below left] {$\scriptstyle C_1$};
\draw (-1.5,-1) node[circle,draw,minimum width=4mm,fill=blue!20!white] (c2) {} node[below left] {$\scriptstyle C_2$};
\draw (3,-1) node[circle,draw,minimum width=4mm,fill=blue!20!white] (cm) {} node[below right] {$\scriptstyle C_m$};
\draw (0,-1) node {$\cdots$};
\draw (1.5,-1) node {$\cdots$};
\foreach \x/\i in {-3/1,-1/2,3/k} 
  {\draw[dotted,fill=yellow!10!white,rounded corners=2mm] (\x,-1.8) -| +(.9,-1.6) -- +(-.9,-1.6) |- +(0,0);
   \draw (\x,-3) node[draw,rounded corners=1mm,fill=yellow!20!white] (T\i) {$\test$};
   \draw (\x-.3,-2.2) node[draw,circle,minimum width=4mm,fill=green!20!white] (+\i) {} node[below left] {$p_{\i}$};
   \draw (\x+.3,-2.2) node[draw,circle,minimum width=4mm,fill=red!20!white] (-\i) {} node[below right] {$\non p_{\i}$};
}
\draw (1,-2.5) node {$\cdots$};
\foreach \x in {1,2,m} {\draw[-latex'] (phi) -- (c\x);}
\draw (3.5,-3.7) node[coordinate] (p) {};
\draw[rounded corners=5mm] (phi) -| +(4.2,-2) |- (p);
\foreach \i in {1,2,k} 
  {\draw[rounded corners=5mm,-latex'] (p) -| (T\i);
   \draw[-latex'] (T\i) -- (+\i);
   \draw[-latex'] (T\i) -- (-\i);
 }
\foreach \a/\b in {c1/+1,c1/-2,c2/+2,c2/+k,cm/-k,cm/-2}
  {\draw[-latex'] (\a) -- (\b);}
\foreach \a/\n in {c1/-20,c2/-30,cm/-140}
  {\draw[dashed] (\a) -- +(\n:+8mm);}
\end{tikzpicture}
\caption{The model used for proving \SSP k-hardness of model checking a fixed \EQnCTL k formula.}
\label{fig-fixedformula}
\end{figure}
Figure~\ref{fig-fixedformula} displays an example of this construction. 
The intuition is as follows: one of the states~$p_k$ and~$\neg p_k$ will be
labelled (via the \EQCTL formula) with an extra proposition~$\oplus$. That
exactly one of them is labelled will be checked by the $\test$-states. That
the labelling defines a satisfying assignment will be checked by the
$C_i$-states. The formula writes as follows:
\[
\Phi = \exists \oplus.\ [\All\X( \test \Rightarrow (\Ex\X\oplus \et \Ex\X\neg
\oplus))
\ \et 
\All\X(\non\test \Rightarrow \Ex\X\oplus)].
\]
One is easily convinced that a labelling with~$\oplus$ defines a valuation of
the propositions in~$P$ (by the first part of~$\Phi$), and that
$\phi_b$~evaluates to true under that valuation (by the second part of~$\Phi$).
Conversely, a satisfying assignment can be used to prove that $\Phi$~holds
true in the model.

\smallskip
This reduction can be extended to prove \SSP k-hardness of model
checking a fixed formula of~$\EQnCTL k$. Consider an instance of \SSP
k{}\textsf{SAT} of the form $\exists P_1\ldots \Q_k P_k.\ \phi_b(P_1,$ $\ldots,
P_k)$, assuming w.l.o.g. that the sets~$P_i$ are pairwise disjoint. The model
now involves $k$ $\test$-propositions~$\test_1$ to~$\test_k$, and a $\test$-state
associated with a proposition in~$P_l$ is labelled with $\test_l$. The rest of
the construction is similar. Assuming that $k$ is even (in which case $\Q_k$
is universal, and $\phi_b$~is a disjunction of conjunctive clauses---the dual
case being similar), formula~$\Phi_k$ then writes as follows:
\begin{multline*}
\Phi_k = \exists \oplus_1\ldots \forall \oplus_k. 
  [\All\X( \test_{2i+1} \Rightarrow (\Ex\X\oplus_{2i+1} \et \Ex\X\neg
\oplus_{2i+1}))
\ \et \\
[\All\X( \test_{2i+2} \Rightarrow (\Ex\X\oplus_{2i+2} \et \Ex\X\neg
\oplus_{2i+2}))] \Rightarrow (\All\X(\non\test \Rightarrow \Ex\X\oplus))].
\end{multline*}
\end{proof}

For \QnCTL k and \QnCTL k*, we have:

\begin{restatable}{theorem}{thmprogQkCTKs}
\label{thm-prog-qkctl-s}
\label{thm-prog-qkctl*-s}
Under the structure semantics, for any~$k>0$, the program-complexity of model
checking is \DDPlogn {k+1}-complete for \QnCTL k and \QnCTL k*.
\end{restatable}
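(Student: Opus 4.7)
The plan is to derive the upper bound directly from the combined-complexity algorithm of Theorem~\ref{thm-qkctl-s}, and to establish matching $\DDPlogn{k+1}$-hardness by reducing a $\DDPlogn{k+1}$-complete problem to model checking a carefully designed \emph{fixed} \QnCTL k formula.

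For the upper bound, I~would directly invoke the algorithm of Theorem~\ref{thm-qkctl-s}, which runs in~$\DDPlogn{k+1}$ for combined complexity and therefore also for program complexity of~\QnCTL k. For \QnCTL k* the same recursive scheme applies: each innermost \EQnCTL k* subformula is evaluated by $k$~alternating guesses of labellings followed by a call to a fixed CTL* model-checker (in~\NLOGSPACE by~\cite{Sch03a}), producing a single $\SSP k$ oracle query per state; evaluating the outer fixed CTL* skeleton on the enriched model is again \NLOGSPACE, so the parallel combination stays within $\DDPpar{k+1}=\DDPlogn{k+1}$.

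For the lower bound, I~would reduce from the $\DDPlogn{k+1}$-complete problem $\PARITY{}(\SSP k)$~\cite{Got95}. Given an instance~$(\phi_1,\dots,\phi_n)$, the first step is to rephrase it as ``\emph{is the largest index~$i$ such that $\psi_i$ is satisfiable even?}'', where $\psi_i$~is the $\SSP k$\textsf{SAT} formula asserting ``\emph{at~least~$i$ of the~$\phi_j$ are satisfiable}''. Each~$\psi_i$ is built by merging a choice of a set $J\subseteq\{1,\dots,n\}$ of size~$\geq i$ (encoded via fresh Boolean indicators $s_1,\dots,s_n$ placed in the outermost existential block) with the outermost existentials of the~$\phi_j$'s inside $\ET_{1\leq j\leq n}(s_j\thn\phi_j)$; by monotonicity of the~$\psi_i$, the largest~$i$ with~$\psi_i$ satisfiable equals the number of satisfiable~$\phi_j$, so its parity answers $\PARITY{}(\SSP k)$. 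The model~$\calS$ is then a chain of states $x_0,\dots,x_n$ marked by a proposition~$\mathsf{chain}$ (with an auxiliary proposition~$\mathsf{even}$ labelling the even-indexed ones), each~$x_i$ serving as the entry point of the $\SSP k$\textsf{SAT} gadget of Theorem~\ref{thm-prog-eqkctl-s} encoding~$\psi_i$; the fixed formula is
\[
\Phi_k \;\eqdef\; \Ex\F\bigl(\mathsf{chain}\et\mathsf{even}\et\mathsf{satcheck}_k\et\All\X\All\G(\mathsf{chain}\impl\non\,\mathsf{satcheck}_k)\bigr),
\]
where $\mathsf{satcheck}_k\in\EQnCTL k$ is the fixed formula of Theorem~\ref{thm-prog-eqkctl-s} asserting that the $\SSP k$\textsf{SAT} instance encoded at the current state is satisfiable. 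Evaluated at~$x_0$, $\Phi_k$ holds iff the largest chain-index at which $\mathsf{satcheck}_k$ holds is even, i.e., iff $\PARITY{}(\SSP k)$ answers ``even''.

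The main subtlety~--- and what distinguishes this program-complexity argument from the combined-complexity hardness proof, where the counting subformulas grew with~$n$~--- is keeping~$\Phi_k$ syntactically in~\QnCTL k rather than letting it drift into~\QnCTL{k+1}, as would happen if one tried to express parity by wrapping an extra~$\exists T$ around a CTL invariant propagating parity along the chain. In the proposed formula all parity bookkeeping is carried out by the CTL skeleton together with the pre-computed~$\mathsf{even}$-labelling of the model, so the only propositional quantifiers of~$\Phi_k$ sit inside its two occurrences of~$\mathsf{satcheck}_k$. Formally $\Phi_k\in\CTL[\EQnCTL k]\subseteq\CTL[\QnCTL k]=\QnCTL k$, where the last identity is a short induction on~$k$ from $\QnCTL 1=\CTL[\EQnCTL 1]$ and $\QnCTL{k+1}=\QnCTL 1[\QnCTL k]$. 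Since $\Phi_k$ belongs \emph{a~fortiori} to~\QnCTL k*, the very same reduction yields $\DDPlogn{k+1}$-hardness for that larger fragment.
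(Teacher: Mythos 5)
Your upper bound is the paper's own argument verbatim (reuse the combined-complexity algorithm of Theorem~\ref{thm-qkctl-s}, noting that fixed-formula \CTL* model checking is cheap), but your lower bound takes a genuinely different route. The paper also reduces from \PARITY{}(\SSP k) and also reuses the gadget of Theorem~\ref{thm-prog-eqkctl-s}, but it keeps the $m$ oracle instances unchanged and pushes all the parity bookkeeping into the \emph{model}: a chain of states carrying a running parity bit and a guessed validity bit $\oplus/\ominus$ per instance, with the fixed formula $\Ex(\oplus\Leftrightarrow\tilde\Phi_k)\Until\mathsf{final}$ forcing the guessed bits to match the actual truth values along a path to $\mathsf{final}$. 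You instead push the counting into the \emph{oracle instances}: you build $n{+}1$ monotone threshold instances $\psi_i$ (``at least $i$ of the $\phi_j$ are true'') by prenex-merging the disjoint quantifier blocks of the $\phi_j$, so the model degenerates to a plain chain with a precomputed even/odd labelling and the fixed formula only has to locate the maximal satisfied threshold. Both are sound; yours makes the correctness of the fixed formula nearly immediate (monotonicity plus maximality), at the price of two details you should make explicit: (i)~the threshold $|\{j: s_j\}|\geq i$ must be encoded in polynomial size, e.g.\ by Tseitin auxiliaries added to the outermost existential block, so that each $\psi_i$ really is a polynomial-size \SSP k{}\textsf{SAT} instance, and the quantifier merge must be justified by disjointness of the variable sets; (ii)~exactly as in the paper (which says ``a slightly modified version of'' the formula of Theorem~\ref{thm-prog-eqkctl-s}), your $\mathsf{satcheck}_k$ must be adapted so that its $\All\X$-subformulas are not broken by the extra chain successor $x_{i+1}$ of each entry state, and the last chain state needs a non-$\mathsf{chain}$ sink so that $\All\X\All\G(\mathsf{chain}\impl\non\mathsf{satcheck}_k)$ is not vacuously falsified there. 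Your syntactic argument that $\Phi_k\in\CTL[\EQnCTL k]\subseteq\QnCTL k$ is correct under the paper's quantifier-depth (rather than alternation-counting) definition of \QnCTL k, and the transfer to \QnCTL k* is as in the paper.
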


\begin{proof}
\def\mplus{\oplus}
\def\mminus{\ominus}
To prove membership in \DDPlogn {k+1} for \QnCTL k*, we reuse the same
algorithm as for Theorem~\ref{thm-qkctl-s}: we~get the same complexity
for \QnCTL k* and \QnCTL k because program-complexity for 
\CTL* is in \PTIME (as~for~\CTL).

Now we prove hardness in \DDPlogn {k+1} for the fixed-formula
model-checking problem for \QnCTL k.  Fix some~$k$, and consider
of \PARITY(\SSP {k}), made of $m$ instances of \SSP k{}\textsf{SAT},
which we write $\Phi^i(U^i_1,\ldots, U^i_k)$ (assuming w.l.o.g. that
they all begin with an existential quantifier). We~begin with defining
a partial view of the Kripke structure that we will use for the
construction: it~has an initial state~$\mathsf{init}$ and a final
state~$\mathsf{final}$, and, for each~$1\leq i\leq m$, four states
labelled with~$i$ and either~$0$ or~$1$ (to indicate the parity of the
number of positive formulas up to~$\Phi_i$) and either~$\mplus$ or~$\mminus$ (to
indicate the validity of the $i$-th instance). Transitions are defined
as follows: from~$\mathsf{init}$, there is a transition to~$(1,0,\mplus)$
and~$(1,0,\mminus)$; from~$(i,0,\mplus)$ and $(i,1,\mminus)$, there are transitions
to~$(i+1,1,\mplus)$ and to~$(i+1,1,\mminus)$; from~$(i,1,\mplus)$ 
and~$(i,0,\mminus)$, there are transitions to $(i+1,0,\mplus)$ 
and~$(i+1,0,\mminus)$.  
Finally, there is a transition from~$(m,0,\mminus)$ and~$(m,1,\mplus)$
to~$\mathsf{final}$, and self-loops on $\mathsf{final}$, $(m,0,\mplus)$
and~$(m,1,\mminus)$. Consider a path in such a Kripke structure, and assume
that we can enforce that the path visits a~$\mplus$-state if, and only if,
the corresponding \SSP k{}\textsf{SAT} instance is positive. Then this
path reaches $\mathsf{final}$ if, and only if, the total number of
positive instances is even. Otherwise, the path will be stuck
in~$(m,0,\mplus)$ or in~$(m,1,\mminus)$. In~other words,
formula~$\Ex\F\,\mathsf{final}$ holds true if, and only if, the number
of positive instances is even.

It~remains to enforce the correspondence between positive states and positive
instances of \SSP k{}\textsf{SAT}. This is achieved using the reduction of the
proof of Theorem~\ref{thm-prog-eqkctl-s}: we~first extend the above Kripke
structure by plugging, at~each state~$(i,j,k)$, one copy of the Kripke
structure built in the proof of Theorem~\ref{thm-prog-eqkctl-s}. Now, the
formula to be checked in the resulting structure has to be reinforced as
follows:
\[
\Psi_k = \Ex( \mplus \Leftrightarrow \tilde\Phi_k) \Until \mathsf{final}
\]
where $\tilde\Phi_k$ is (a~slightly modified version~of) the formula built in
the proof of Theorem~\ref{thm-prog-eqkctl-s}.
\end{proof}

When model checking a fixed formula of \QCTL* (hence with fixed
alternation depth), there is no hope of being able to encode arbitrary
alternation: the program complexity of \QCTL* (and \QCTL) model
checking thus lies in the small gap between \PH and \PSPACE, unless
the polynomial-time hierarchy collapses:

\begin{restatable}{theorem}{thmprogQCTL-s}
\label{thm-prog-qctl-s}
\label{thm-prog-qctl*-s}
\label{thm-prog-eqctl-s}
\label{thm-prog-eqctl*-s}
Under the structure semantics, the program-complexity of model
checking is \PH-hard but not in \PH, and in \PSPACE but not \PSPACE-hard, for \EQCTL,
\QCTL, \EQCTL* and \QCTL*  (unless the polynomial-time
hierarchy collapses).
\end{restatable}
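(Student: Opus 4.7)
The plan is to assemble all four claims from the fragment-level complexities in Theorems~\ref{thm-prog-eqkctl-s} and~\ref{thm-prog-qkctl-s} together with the uniform \PSPACE upper bound coming from Theorem~\ref{thm-qctl*-s}; no new machinery is needed, only a careful classification of each fixed formula by its quantifier depth. For the \PSPACE membership direction, I would observe that any fixed formula~$\phi$ of \QCTL* (and hence of the three sublogics \EQCTL, \QCTL, \EQCTL*) has some finite quantifier depth~$k$ and therefore lies in $\QnCTL k*$; Theorem~\ref{thm-prog-qkctl*-s} then places the program complexity of~$\phi$ in $\DDPlogn{k+1}$, which is in~\PSPACE.

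For the \PH-hardness direction I would diagonalise through the polynomial-time hierarchy: Theorem~\ref{thm-prog-eqkctl-s} supplies, for every $k\geq 1$, a \emph{fixed} \EQnCTL k formula~$\phi_k$ whose program complexity is $\SSP k$-hard. Since $\EQnCTL k$ is syntactically a fragment of each of \EQCTL, \QCTL, \EQCTL*, and \QCTL*, the very same formula~$\phi_k$ exhibits an $\SSP k$-hard problem inside the program-complexity class of each of those logics. Letting $k$ range over~$\bbN$ yields the claimed \PH-hardness.

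The two negative claims then follow from the standard no-collapse assumption, and I expect this to be the part where most care is required. If the program complexity of, say, \EQCTL were contained in \PH, there would have to be a fixed level $\SSP k$ containing every formula's complexity; but the witness~$\phi_{k+1}$ from the previous paragraph has $\SSP{k+1}$-hard program complexity, so $\SSP{k+1}\subseteq\SSP k$ would follow and collapse the hierarchy. Symmetrically, if some fixed $\phi\in\QCTL*$ had \PSPACE-hard program complexity, then, since $\phi$ itself has bounded quantifier depth and its program complexity already sits inside $\DDPlogn{k+1}\subseteq\PH$ for some~$k$, we would obtain $\PSPACE\subseteq\PH$; because \PSPACE has complete problems whereas \PH does not unless it collapses, this in turn forces a collapse of the polynomial-time hierarchy. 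The only subtle point is to keep clearly apart the per-formula reading and the uniform reading of ``in~\PH'': the \PH upper bound holds formula by formula but fails to hold uniformly, and this is precisely the gap that the theorem captures.
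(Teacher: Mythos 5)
Your proof is correct and follows essentially the same route as the paper's: \PH-hardness by diagonalising through the family of fixed $\SSP k$-hard formulas supplied by Theorem~\ref{thm-prog-eqkctl-s}, \PSPACE membership from the known upper bounds, and both negative claims from the observation that any fixed formula has bounded quantifier depth, so that a uniform level of \PH or a \PSPACE-hard fixed formula would collapse the polynomial-time hierarchy. The only cosmetic difference is that the paper obtains \PSPACE membership directly from the combined-complexity algorithm of Theorem~\ref{thm-eqctl-s}, whereas you derive it from the per-formula $\DDPlogn{k+1}$ bound of Theorem~\ref{thm-prog-qkctl*-s}; both suffice, and your explicit separation of the per-formula and uniform readings of ``in~\PH'' is exactly the point the paper relies on implicitly.
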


\begin{proof}
  From Theorem~\ref{thm-prog-eqkctl-s}, model-checking a fixed formula
  in \EQCTL or \EQCTL* is \PH-hard. Membership in \PSPACE follows from
  Theorem~\ref{thm-eqctl-s}.  If~these problems were in \PH, they~would
  lie in~$\SSP k$ for some~$k$, and the polynomial-hierarchy would
  collapse.  Similarly, if they were \PSPACE-hard, then a fixed
  formula (in \EQCTL or \EQCTL*, hence in \EQnCTL k or \EQnCTL k* for
  some~$k$) could be used to encode any instance of \textsf{QSAT},
  again collapsing the polynomial-time hierarchy. The same method
  applies to \QCTL and \QCTL* thanks to Theorem~\ref{thm-prog-qkctl-s}.
\end{proof}

\subsubsection{Formula-complexity.}
Now we consider the formula complexity of model checking for the
structure semantics. In this context we assume that the
the \emph{model} is assumed to be fixed, and the complexity is
then expressed only in term of the size of the formula. We will see
that every complexity result obtained for combined complexity also
holds for the formula complexity: these logics are expressive enough
to provide complexity lower bounds for fixed models.

In Theorem~\ref{thm-eqkctl-s}, complexity   lower-bounds for
model-checking \EQnCTL k and \AQnCTL k are proved with a fixed
model. Therefore these results apply also to formula complexity:  

\begin{restatable}{theorem}{thmformEQkCTLs}
\label{thm-form-eqkctl-s}
\label{thm-form-aqkctl-s}
  Under the structure semantics, the formula-complexity of model checking is \SSP k-complete for \EQnCTL
  k and \PPP k-complete for \AQnCTL k.
\end{restatable}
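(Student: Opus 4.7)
The plan is to derive both bounds directly from Theorem~\ref{thm-eqkctl-s}, since in this problem formula complexity and combined complexity actually coincide.

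For the upper bound, fixing the Kripke structure can only help, so the \SSP k membership algorithm for \EQnCTL k (and the \PPP k membership for \AQnCTL k) described in the proof of Theorem~\ref{thm-eqkctl-s} applies verbatim, with the model contributing only constant-size lookups during the \CTL subroutine.

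For the lower bound, I would revisit the reduction from \SSP k{}\textsf{SAT} used in the hardness part of Theorem~\ref{thm-eqkctl-s} and observe that it is already a formula-complexity reduction. Indeed, an instance $\Q_1 U_1 \cdots \Q_k U_k.\ \Phi(U_1,\ldots,U_k)$ can be encoded by model-checking the very same formula (reading $\Phi$ as a purely propositional, hence \CTL, formula) on a single-state Kripke structure equipped with a self-loop to make the transition relation total. This one-state model is the \emph{fixed} input, and its unique state carries the label $u^i_j$ precisely when the corresponding proposition is assigned \emph{true}, so that $\Phi$ evaluates at that state exactly as under the associated Boolean assignment. The hardness for \AQnCTL k then follows either by running the same reduction from \PPP k{}\textsf{SAT}, or by duality, noting that an \AQnCTL k formula is the negation of an \EQnCTL k formula.

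No real difficulty is anticipated: the statement is essentially the remark that the hardness proof of Theorem~\ref{thm-eqkctl-s} already operates with a trivial fixed model, a point the paragraph preceding the theorem explicitly flags. The only thing to be mildly careful about is to make sure the fixed model supplies a total transition relation; the self-loop on the single state ensures this at no cost.
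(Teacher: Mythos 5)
Your proposal is correct and matches the paper's argument: the paper likewise observes that the \SSP k-hardness reduction of Theorem~\ref{thm-eqkctl-s} already uses a fixed one-state Kripke structure with a self-loop (the quantified Boolean formula itself, read as a prenex \QCTL formula over a purely propositional matrix, is the varying input), so the lower bound is a formula-complexity lower bound, while the upper bound is inherited from the general algorithm. Nothing further is needed.
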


For \QnCTL k, we have the following result:
\begin{restatable}{theorem}{thmformQkCTL-s}
\label{thm-form-qkctl-s}
Under the structure semantics, the formula-complexity is \DDPlogn
{k+1}-complete for \QnCTL k when considering the \emph{DAG-size} of
\QCTL formulas. When considering the standard size of formulas, 
the~problem is in \DDPlogn{k+1} and \BH(\SSP k)-hard.
\end{restatable}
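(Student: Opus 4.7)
Membership in~$\DDPlogn{k+1}$ is inherited from the combined-complexity upper bound of Theorem~\ref{thm-qkctl-s}: fixing the model only restricts the possible inputs, and the input size collapses to the size of the formula (in either notion).

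For the $\DDPlogn{k+1}$-hardness with DAG-size, the plan is to adapt the $\PARITY{}(\SSP k)$ reduction used in the proof of Theorem~\ref{thm-qkctl-s} so that the whole instance lives in the formula rather than being split between formula and model. I~would work over a fixed Kripke structure rich enough to ``simulate'' an~$m$-position chain (for instance, a two-state structure with all four transitions, so that paths can realise any binary sequence); inside the formula, propositional quantification then encodes both the $m$~\SSP k{}\textsf{SAT}-instances $\Psi_1,\ldots,\Psi_m$ and the linear scaffold $x_1\to\cdots\to x_{m+1}$ of Theorem~\ref{thm-qkctl-s}, with positions carried by $\lceil\log m\rceil$ fresh auxiliary propositions. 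The recursive counting formulas~$\psi_s$ of Theorem~\ref{thm-qkctl-s} are then reused verbatim: each~$\psi_s$ refers, at every level of its recursion, to the same subformula~$\phi$ encoding the disjunction $\bigvee_i x_i\wedge \Psi_i$. Since~$\phi$ is shared across all $\Theta(m)$~occurrences in the DAG representation, the overall DAG-size stays polynomial in~$m$, yielding the desired reduction.

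For the $\BH(\SSP k)$-hardness with classical size, I~would reduce from the canonical complete problems at each level~$\ell$ of the Boolean hierarchy, namely fixed-depth Boolean combinations of $\ell$~instances of \SSP k{}\textsf{SAT}. Each such instance can be translated into an \EQnCTL k formula via the construction underlying the proof of Theorem~\ref{thm-form-eqkctl-s}, and a constant-depth Boolean combination of these yields a \QnCTL k formula of polynomial classical size, proving hardness uniformly for every level of~$\BH$.

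The main obstacle is the DAG-size lower bound: one must faithfully compile the model-level enumeration of Theorem~\ref{thm-qkctl-s} into a formula-level construction while preserving the sharing of~$\phi$, which is precisely what keeps the formula compact enough to mirror the $O(\log n)$-many adaptive \SSP k queries at the heart of~$\DDPlogn{k+1}$. The lack of such sharing in the classical-size setting is exactly what forces the matching lower bound there to drop down to~$\BH(\SSP k)$.
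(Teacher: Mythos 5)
Your membership argument and your $\BH(\SSP k)$-hardness argument for the classical size are fine and match the paper's (the latter is exactly the paper's observation that \CTL is closed under Boolean combinations, so fixed-depth Boolean combinations of \SSP k{}\textsf{SAT} instances over a fixed one-state model give hardness at every level of the Boolean hierarchy). The problem is your $\DDPlogn{k+1}$-hardness argument for the DAG-size, which has a genuine gap. You propose to rebuild the $m$-position chain $x_1\to\cdots\to x_{m+1}$ of Theorem~\ref{thm-qkctl-s} inside the formula, over a fixed two-state structure, with positions ``carried by $\lceil\log m\rceil$ fresh auxiliary propositions.'' Under the \emph{structure} semantics, a quantified proposition is a labelling of the \emph{states of the structure}, not of positions along a path; with a fixed two-state structure each auxiliary proposition admits only four labellings, and a path revisiting the same two states sees the same labels at every visit. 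So no bounded set of quantified propositions over a fixed structure can distinguish $m+1$ positions along a path, and the scaffold cannot be realised. (It could be made to work under the tree semantics, but that is not the semantics of this theorem.) Moreover, even setting this aside, wrapping the $\Psi_i$ (which already have quantifier depth~$k$) in an outer block of scaffold quantifiers would push the formula out of $\QnCTL k$.

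The paper's route avoids the temporal scaffold entirely: over the fixed one-state structure with a self-loop, it computes the parity of the number of positive instances by a purely Boolean XOR-chain, $\alpha^1=\non\Psi_1$ and $\alpha^{i+1}=(\non\Psi_{i+1}\et\alpha^i)\ou(\Psi_{i+1}\et\non\alpha^i)$, so that $\alpha^m$ holds iff the $\PARITY(\SSP k)$ instance is positive. Boolean combinations do not increase quantifier depth, so $\alpha^m\in\QnCTL k$; and although the tree-size of $\alpha^m$ is exponential (each $\alpha^i$ occurs twice in $\alpha^{i+1}$), its DAG-size is linear, which is precisely why the DAG-size hypothesis is needed for this lower bound. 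Your instinct that sharing is the crux of the DAG-size bound is right, but the object to share is the parity accumulator $\alpha^i$, not the disjunction $\phi$ of a temporal counting scheme.
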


\begin{proof}
  Membership in \DDPlogn{k+1} can be proved using the same algorithm as in
  the proof of Theorem~\ref{thm-qkctl-s}, and noticing that its complexity is
  unchanged when considering the DAG-size of the
  formula. 

  In~order to prove hardness
  in~\DDPlogn {k+1}, we~again reduce \mbox{\PARITY(\SSP k)} to a
  model-checking problem for \QnCTL k over the Kripke structure~$S$ with one
  state and a self-loop as follows: consider an instance~$\mathcal{I}$
  of \PARITY(\SSP k) consisting in $m$ instances~$\Psi_i$ ($i=1,\ldots,m$) of
  \SSP k{}\textsf{SAT}. We~let $\alpha^1 = \non \Psi_1$ and $\alpha^{i+1} =
  (\non \Psi_{i+1} \et \alpha^i) \ou (\Psi_{i+1} \et \non \alpha^i)$. Clearly
  $\alpha^i$ holds in~$S$ iff there is an even number of positive instances in
  the set $\{\Psi_1, \ldots, \Psi_i\}$, so that the instance~$\mathcal{I}$ is
  positive iff $\alpha^m$ holds in~$S$. However, since $\alpha_i$ is
  duplicated in the definition of~$\alpha_{i+1}$, the reduction is in
  logarithmic space only if we represent the formula as a~DAG.

If we consider the usual notion of size of a formula, one can easily see that formula
complexity of \QnCTL k model checking is \SSP k-hard and \PPP k-hard.
Actually, as \CTL is closed under Boolean combinations, the problem is
hard for any level of the Boolean hierarchy \BH(\SSP k) over~\SSP k (we~refer
to~\cite{Hem98} for more details about Boolean hierarchies). 
\end{proof}

Finally formula complexity of \CTL* model-checking is already
\PSPACE-hard~\cite{Sch03a} and any \textbf{QBF} instance can be reduced to a
model-checking problem for \EQCTL  over a fixed structure. This
provides the complexity      lower-bounds of the following result (the
complexity upper-bound come from the general case, see Theorem~\ref{thm-qctl-s}):  

\begin{restatable}{theorem}{thmformQCTLs}
\label{thm-form-eqkctl*-s} \label{thm-form-qkctl*-s}
\label{thm-form-eqctl-s} \label{thm-form-qctl-s}
\label{thm-form-eqctl*-s} \label{thm-form-qctl*-s}
  Under the structure semantics, the formula-complexity is \PSPACE-complete for \EQnCTL k*, \AQnCTL k*, \QnCTL k*, \EQCTL, 
  \QCTL, \EQCTL*, and \QCTL*.  
\end{restatable}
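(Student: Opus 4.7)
Membership in \PSPACE is immediate: for every logic mentioned, combined-complexity model checking was shown to be in \PSPACE in Theorem~\ref{thm-qctl*-s}, and this upper bound is not affected by fixing the input Kripke structure rather than the formula.

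For the hardness part, the plan is to split the logics into two groups. For the \CTL*-based fragments (\EQnCTL k*, \AQnCTL k*, \QnCTL k*, \EQCTL* and \QCTL*), I would invoke the result of~\cite{Sch03a} stating that the formula-complexity of pure \CTL* model checking is already \PSPACE-hard. Since each of these logics syntactically contains \CTL*, the very same reduction carries~over: one produces the witnessing \CTL* formula and evaluates it against the fixed Kripke structure of the \cite{Sch03a} reduction, which stays inside the corresponding sublogic (no propositional quantifier is used).

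For \EQCTL and \QCTL, which do not syntactically contain \CTL*, I would argue by a direct reduction from \textsf{QBF}. Fix the one-state Kripke structure with a self-loop, whose only state we call~$q$. Given an instance $\Q_1 p_1\ldots \Q_n p_n.\, \phi_b(p_1,\dots,p_n)$ of \textsf{QBF} with $\phi_b$ quantifier-free propositional, the corresponding \EQCTL formula is simply $\Q_1 p_1\ldots \Q_n p_n.\, \phi_b(p_1,\dots,p_n)$, read as a \QCTL formula: each variable~$p_i$ becomes an atomic proposition, and since the Kripke structure has a single state, quantification over~$p_i$ in the structure semantics amounts to choosing its truth value at~$q$. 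The Boolean formula~$\phi_b$ is already a \CTL formula (no temporal operator is needed), so the overall formula lies in \EQCTL, hence also in \QCTL, \EQCTL* and \QCTL*. Correctness is immediate, and the reduction is logarithmic-space, giving \PSPACE-hardness.

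I expect no real obstacle here: the upper bound is inherited from Theorem~\ref{thm-qctl*-s}, the hard direction for the starred fragments reduces to a known result, and the hard direction for \EQCTL and \QCTL is a one-line encoding of \textsf{QBF} into a one-state model. The only point to double-check is that the \cite{Sch03a} \CTL* formula used for \PSPACE-hardness of formula-complexity can indeed be viewed unchanged as a formula of each of \EQnCTL k*, \AQnCTL k* and \QnCTL k* (for every $k\geq 0$), which is the case since it contains no propositional quantifier at all.
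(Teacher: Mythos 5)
Your proposal is correct and follows essentially the same route as the paper: the upper bound is inherited from the combined-complexity result (Theorem~\ref{thm-qctl-s}), hardness for the \CTL*-based fragments comes from the \PSPACE-hardness of the formula-complexity of \CTL* model checking~\cite{Sch03a}, and hardness for \EQCTL and \QCTL comes from encoding \textsf{QBF} over a fixed (one-state) structure. The paper states this only as a brief remark preceding the theorem; you merely spell out the same argument in more detail.
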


\subsection{Model checking  for the tree semantics}
\label{sec-tree}


This section is devoted to \QCTL model checking over the tree semantics.
We~begin with proving a hardness result, extended techniques of~\cite{SVW87}
(for \QLTL) to the branching-time setting.

\paragraph{Hardness proof.} We prove that the \QnCTL k model-checking problems
populate the ex\-po\-nential-time hierarchy:

\begin{theorem}
\label{thm-hard-eqkctl-t} 
Model checking \EQnCTL k
under the tree semantics is \EXPTIME[k]-hard (for positive~$k$).
\end{theorem}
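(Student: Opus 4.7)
The plan is to reduce from an $\EXPTIME[k]$-complete problem, such as acceptance of a deterministic $k$-fold exponential-time Turing machine, or equivalently a $T_k(n) \times T_k(n)$-corridor tiling problem with $T_0(n) = n$ and $T_{i+1}(n) = 2^{T_i(n)}$. Starting from an instance of size~$n$, the reduction produces a polynomial-size Kripke structure $\calS$ with designated state~$q_0$, and a polynomial-size $\EQnCTL k$ formula $\phi$ such that $\calS, q_0 \models_t \phi$ iff the instance is positive. The construction lifts the Sistla--Vardi--Wolper technique for \QLTL{} to the branching-time setting: each of the $k$ propositional quantifier alternations buys one further exponential in the size of the objects that the formula can manipulate.

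First, I would design $\calS$ as a small gadget whose unwinding from~$q_0$ provides, at polynomial depth, a uniform branching layout together with distinguished ``level'' propositions; essentially, a complete binary tree of polynomial depth with a self-loop hanging off every leaf. This serves as a fixed address canvas on top of which all further encoding is done purely through propositional labellings introduced by the quantifiers of~$\phi$.

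The heart of the reduction is an inductive family of sub-formulas
\[
\mathsf{Eq}_i(\vec{x},\vec{y}),\qquad \mathsf{Succ}_i(\vec{x},\vec{y})\qquad (1\leq i\leq k),
\]
each of polynomial size and living in $\EQnCTL i$, which compare two values of magnitude $T_i(n)$ encoded through polynomially many atomic propositions $\vec{x}$, $\vec{y}$. The base case $i=1$ is a plain \CTL{} formula that compares $n$-bit numbers bit by bit. For the inductive step, to compare values of range $T_{i+1}(n) = 2^{T_i(n)}$ one regards them as bit-sequences indexed by values of range $T_i(n)$, introduces a universal quantifier over a fresh index~$\vec{z}$, and uses $\mathsf{Eq}_i(\vec{z},\cdot)$ as a subroutine to single out the $\vec{z}$-th bit of each sequence. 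Each inductive level costs exactly one quantifier alternation, so that $\mathsf{Eq}_k$ and $\mathsf{Succ}_k$ belong to $\EQnCTL k$.

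The final formula $\phi \eqdef \exists \vec{t}.\,\psi$ opens with an existential labelling~$\vec{t}$ encoding the conjectured accepting computation tableau (equivalently, the tiling), and $\psi$ combines the top-level comparators $\mathsf{Eq}_k$, $\mathsf{Succ}_k$ with CTL navigation modalities to express: initial configuration, acceptance, and---crucially---local consistency between neighbouring cells of the tableau, by universally quantifying over pairs of cells whose coordinates are related by $\mathsf{Succ}_k$ (horizontally and vertically). The main obstacle, and where the branching-time setting differs most from the original SVW construction, is the careful choreography of the propositional labellings so that the recursive comparators at different levels address disjoint (and unambiguously identifiable) parts of the unwinding, and so that the tree modalities $\Ex\X,\All\X,\AG$ can locate the addresses at each level. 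Sharing of sub-formulas across the $\mathsf{Eq}_i$'s is essential to keep $\phi$ polynomial in size.
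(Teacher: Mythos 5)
Your high-level slogan --- lift the Sistla--Vardi--Wolper counter hierarchy to trees, one exponential per quantifier alternation --- is the right one, and your inductive comparators $\mathsf{Eq}_i/\mathsf{Succ}_i$ play the same role as the paper's $\mathsf{yardstick}_i$ formulas. But your choice of source problem introduces an off-by-one that the construction cannot recover from. By reducing from a \emph{deterministic} $k$-fold exponential-time machine (equivalently a $T_k(n)\times T_k(n)$ tiling), you are forced to manipulate cell addresses of magnitude $T_k(n)$, \ie bit-strings of length $T_{k-1}(n)$. Your own accounting then gives: one outer existential block for the tableau labelling, at least one universal block to range over pairs of adjacent cells (adjacent cells are super-polynomially far apart in any layout of the tableau on the unwinding, so plain \CTL cannot locate the pair and a propositional quantifier is unavoidable), and then the $k-1$ alternations internal to $\mathsf{Eq}_k/\mathsf{Succ}_k$. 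That is $k+1$ alternation blocks in the best case, so the formula lands in \EQnCTL{k+1} and you have only shown that \EQnCTL{k+1} is \EXPTIME[k]-hard --- one level short of the theorem at every~$k$. The case $k=1$ makes the failure vivid: \EQnCTL 1 has a purely existential prefix over a \CTL matrix, and no layout of a $2^n\times 2^n$ deterministic tableau lets \CTL check vertical adjacency (the relevant positions are $2^n$ apart along a branch, or at exponentially many leaf pairs of the binary canvas), yet the theorem asserts \EXPTIME-hardness already there.

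The missing idea, which is the heart of the paper's proof, is to reduce from an \emph{alternating} Turing machine with tape bounded by $E(k-1,n)$ (one fewer exponential of space; $\mathrm{ASPACE}(E(k-1,n))=\EXPTIME[k]$), and to let the \emph{branching of the unwinding tree} absorb the machine's alternation: the Kripke structure is a clique over the cell alphabet, a single outer existential proposition~$a$ carves the accepting run tree out of the unwinding, and consecutive configurations then sit at distance only $E(k-1,n)$ along a branch, so the comparator hierarchy is needed only up to level $k-1$ and can be placed, negated, under the universal delimiter quantifiers (the paper is explicit that $\mathsf{yardstick}_{k-1}^n$ occurs negated inside $\mathsf{graduation}_k$, which is what keeps the whole formula in \EQnCTL k rather than \EQnCTL{k+1}). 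This polarity bookkeeping is a second point your proposal is silent on; without it even the alternating-machine reduction would overshoot by a level. If you rework your reduction around an alternating $(k-1)$-\EXPSPACE machine and track the polarity of each $\mathsf{Eq}_i$ occurrence, your comparator-based presentation should go through and would be essentially the paper's argument in different clothing.
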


\begin{proof}  
The proof uses the ideas of~\cite{KMTV00,SVW87}: we~encode an
alternating Turing machine~$\calM$ whose tape is bounded by the following
recursively-defined function: 
\begin{xalignat*}2
E(0,n) &= n &  E(k+1,n) = 2^{E(k,n)}.
\end{xalignat*}
An~execution of~$\calM$ on an input word~$y$ of length~$n$ is then a tree. Our
reduction consists in building a Kripke structure~$K$ and a \QnCTL k
formula~$\phi$ such that $\phi$ holds true in~$K$ (for the tree semantics) iff
$\calM$ accepts~$y$.

\medskip
As~a first step, we~design a set of (polynomial-size) formulas of~\EQnCTL k that are
able to relate two states that are at distance $E(k,n)$ (actually,
a~slightly different value). This will be used in our reduction to
ensure that the content of one cell of the Turing machine is preserved
from one configuration to the next one, unless the tape head is
around.  
Define
\begin{xalignat*}2
F(0,n) &= n &  F(k+1,n) = F(k,n)\cdot 2^{F(k,n)},
\end{xalignat*}
and assume we are given a tree labelled with atomic propositions~$s$
and~$t$ (among others). We~first require that $s$ and~$t$ appear
exactly once along any branch, by means of the following formula
\[
\mathsf{once}(\phi) = \All\F\phi \et \All\G(\phi \impl \All\X\All\G\non\phi).
\]
Our formula for requiring one occurrence of~$s$ and~$t$ (in that
order) along each branch then reads
\begin{equation}
\mathsf{delimiters}(s,t) = \mathsf{once}(s) \et \mathsf{once}(t) \et
\All\G(s\impl \All\F t). 
\end{equation}

\begin{figure}[!ht]
\vspace{6 pt}
\begin{minipage}{.49\linewidth}
\centering
\begin{tikzpicture}[xscale=.8]
\path[use as bounding box] (-3.5,0.1) -- (3,-4.1);
\draw (0,0) node[coordinate] (A) {};
\draw (A) -- +(-120:3.99cm);
\draw (A) -- +(-60:3.99cm);
\draw[dashed] (A) -- +(-120:4.5cm);
\draw[dashed] (A) -- +(-60:4.5cm);
\draw[fill=black] (-70:12mm) circle(.5mm) node[coordinate] (B) {} node[left] {$s$};
\draw[snake=snake] (A) -- (B);
\draw (B) -- +(-100:2cm);
\draw (B) -- +(-80:2cm);
\draw[dashed] (B) -- +(-100:2.4cm);
\draw[dashed] (B) -- +(-80:2.4cm);
\path (B) -- +(-100:1.7cm) node[coordinate] (C) {};
\path (B) -- +(-80:1.7cm) node[coordinate] (D) {};
\draw (D) -- (C) node[midway,coordinate] (E) {} node[left] {$t$};
\draw[dotted] (E) -- +(2cm,0) node[pos=.95,coordinate] (G) {};
\draw[dotted] (B) -- +(2cm,0) node[pos=.95,coordinate] (F) {};
\draw[latex'-latex'] (F) -- (G) node[midway,right] {$F(k,n)$};
\draw[fill=black] (-110:18mm) circle(.5mm) node[coordinate] (B) {} node[right] {$s$};
\draw[snake=snake] (A) -- (B);
\draw (B) -- +(-100:2cm);
\draw (B) -- +(-80:2cm);
\draw[dashed] (B) -- +(-100:2.4cm);
\draw[dashed] (B) -- +(-80:2.4cm);
\path (B) -- +(-100:1.7cm) node[coordinate] (C) {};
\path (B) -- +(-80:1.7cm) node[coordinate] (D) {};
\draw (D) -- (C) node[midway,coordinate] (E) {} node[above left] {$t$};
\draw[dotted] (E) -- +(-1.8cm,0) node[pos=.95,coordinate] (G) {};
\draw[dotted] (B) -- +(-1.8cm,0) node[pos=.95,coordinate] (F) {};
\draw[latex'-latex'] (F) -- (G) node[midway,left] {$F(k,n)$};
\end{tikzpicture}
\caption{Chunks of height $F(k,n)$}\label{fig-yardstick}
\end{minipage}
\begin{minipage}{.49\linewidth}
\centering
\begin{tikzpicture}
\everymath{\scriptstyle}
\path[use as bounding box] (-1.5,.5) -- (1.5,-3.7);
\begin{scope}[xscale=.5,yscale=.27]
\begin{scope}[yshift=2cm]
 \draw (0,0) -| +(1,1) -| +(0,0); 
 \path (0,0) -- +(.5,.5) node {$\#$};
\end{scope}
\begin{scope}
\foreach \pos/\head/\cont in {0/1/1,1/0/1,2/0/0,3/0/\#}
  {\ifnum\head>0 \fill[black!30!white] (0,-\pos) -| +(1,1) -|
    +(0,0);\fi
   \draw (0,-\pos) -| +(1,1) -| +(0,0); 
   \path (0,-\pos) -- +(.5,.5) node {$\cont$};}
\draw[latex'-latex'] (2,1) -- (2,-3) node[midway,right] {$F(k,n)$};
\end{scope}
\begin{scope}[xshift=-1cm,yshift=-5cm]
\foreach \pos/\head/\cont in {0/0/0,1/1/1,2/0/0,3/0/\#}
  {\ifnum\head>0 \fill[black!30!white] (0,-\pos) -| +(1,1) -|
    +(0,0);\fi
   \draw (0,-\pos) -| +(1,1) -| +(0,0); 
   \path (0,-\pos) -- +(.5,.5) node {$\cont$};}
\end{scope}
\begin{scope}[xshift=1cm,yshift=-5cm]
\foreach \pos/\head/\cont in {0/0/1,1/1/1,2/0/0,3/0/\#}
  {\ifnum\head>0 \fill[black!30!white] (0,-\pos) -| +(1,1) -|
    +(0,0);\fi
   \draw (0,-\pos) -| +(1,1) -| +(0,0); 
   \path (0,-\pos) -- +(.5,.5) node {$\cont$};}
\end{scope}
\begin{scope}[xshift=-1.7cm,yshift=-10cm]
\foreach \pos/\head/\cont in {0/1/0,1/0/1,2/0/0,3/0/\#}
  {\ifnum\head>0 \fill[black!30!white] (0,-\pos) -| +(1,1) -|
    +(0,0);\fi
   \draw (0,-\pos) -| +(1,1) -| +(0,0); 
   \path (0,-\pos) -- +(.5,.5) node {$\cont$};}
\end{scope}
\begin{scope}[xshift=-.3cm,yshift=-10cm]
\foreach \pos/\head/\cont in {0/0/0,1/0/0,2/1/0,3/0/\#}
  {\ifnum\head>0 \fill[black!30!white] (0,-\pos) -| +(1,1) -|
    +(0,0);\fi
   \draw (0,-\pos) -| +(1,1) -| +(0,0); 
   \path (0,-\pos) -- +(.5,.5) node {$\cont$};}
\end{scope}
\begin{scope}[xshift=1cm,yshift=-10cm]
\foreach \pos/\head/\cont in {0/0/1,1/0/0,2/1/0,3/0/\#}
  {\ifnum\head>0 \fill[black!30!white] (0,-\pos) -| +(1,1) -|
    +(0,0);\fi
   \draw (0,-\pos) -| +(1,1) -| +(0,0); 
   \path (0,-\pos) -- +(.5,.5) node {$\cont$};}
\end{scope}
\draw[-latex'] (.5,2) -- (.5,1);
\draw[-latex'] (.3,-3) -- (-.5,-4);
\draw[-latex'] (.7,-3) -- (1.5,-4);
\draw[-latex'] (-.7,-8) -- (-1.2,-9);
\draw[-latex'] (-.3,-8) -- (.2,-9);
\draw[-latex'] (1.5,-8) -- (1.5,-9);
\draw[dashed] (1.3,-13) -- +(-.4,-.8);
\draw[dashed] (1.7,-13) -- +(.4,-.8);
\draw[dashed] (-1.2,-13) -- +(0,-.8);
\draw[dashed] (0,-13) -- +(-.4,-.8);
\draw[dashed] (.4,-13) -- +(.4,-.8);
\end{scope}
\end{tikzpicture}
\caption{Encoding runs of~$\calM$}\label{fig-TM}
\end{minipage}
\end{figure}
We now inductively build our ``yardstick'' formulas enforcing that,
along any branch, the distance between the occurrence of~$s$ and that
of~$t$ is precisely~$F(k,n)$ (see Fig.~\ref{fig-yardstick}). 
When $k=0$, this is easy:
\begin{equation}
\mathsf{yardstick}_0^n(s,t) = \All\G\Bigl(s \impl \bigl((\All\X)^n t \et
\ET_{0\leq k<n} (\All\X)^k \non t\bigr)\Bigr).
\label{eq-0}
\end{equation}
For the subsequent cases, we~use propositional quantification to
insert a number of intermediary points (labelled with~$r$), at
distance $F(k-1,n)$ apart. We~then associate with each occurrence
of~$r$ a counter, encoded in binary (with least significant bit on the
right) using a fresh proposition~$c$ on the
$F(k-1,n)$ cells between the present occurrence of~$r$ and the next
one.  Our global formula then looks as follows:
\begin{equation}
\mathsf{yardstick}_k^n = \exists r.\exists
c.\ (\mathsf{graduation}_k(r,s,t) \et 
 \mathsf{counter}_k(c,r,s,t)).
\label{eq-k}
\end{equation}

When~$k=1$, $\mathsf{graduation}_1(r,s,t)$ is rather easy (notice that
we allow graduations outside the $[s,t]$-interval):
\begin{equation*}
\mathsf{graduation}_1(r,s,t)= \All\G((s\ou t) \impl r) \et
    \mathsf{yardstick}_0^n(r,r).
\end{equation*}
As~regards the counter, we have to enforce that, between~$s$ and~$t$,
it~has value~zero exactly at~$s$ and value $2^n-1$ exactly at~$t$, and
that it increases between two consecutive $r$-delimited intervals:
\begin{xalignat*}1
\mathsf{counter}_1(c,r,s,t) &= \mathsf{zeros}_1(c,r,s,t) \et
  \mathsf{ones}_1(c,r,s,t) \et \mathsf{increment}_1(c,r,s,t) \\
\mathsf{zeros}_1(c,r,s,t) &= \All\G(s \iff (r \et \non c \et \All\X\All(\non
  c \Until r))) \\
\mathsf{ones}_1(c,r,s,t) &= \All\G((r \et \All\X\All(\non r \Until t))
  \impl \All(c\Until t)) \\
\mathsf{increment}_1(c,r,s,t) &= \All\G(s \impl 
  (\All\G(  (c \iff (\All\X)^n c)  \iff  \All\X\All(\non r \Until(\non
   c \et \non r))))).
\end{xalignat*}
The first two formulas are easy: $\mathsf{zeros}_1$ requires that
$s$~be the only position that can be followed by only zeros until the
next occurrence of~$r$; $\mathsf{ones}_1$ expresses that in the last
$r$-delimited interval before~$t$, $c$~always equals~$1$. Finally, 
$\mathsf{increment}_1$ requires that, starting from~$s$, the
value of~$c$ is changed from one interval to the next one if, and only
if, $c$~equals~one in all subsequent positions of the first
interval. One can check that this correctly encodes the incrementation of
the counter.
In the end, $\mathsf{yardstick}_1$ is an \EQnCTL1 formula.

For any~$k\geq 2$, $\mathsf{yardstick}_k$ is obtained using similar
ideas, with slightly more involved formulas. 
\begin{multline*}
\mathsf{graduation}_k(r,s,t)= \All\G((s\ou t) \impl r) \et 
  \forall u.\forall v.\ \Bigl[(\mathsf{delimiters}(u,v) \et
    \mathsf{yardstick}_{k-1}^n(u,v)) \impl \\ 
     (\All\G(u \impl \All\F(r \et \All\F v)) \et 
      \All\G((r \et \All\F v \et \non\All\F u) \impl
    \All\X \All (\non r \Until v)))\Bigr].
\end{multline*}
Roughly, this states that the labelling with~$r$ has to satisfy the
constraint that, between any two points $u$ and~$v$ at distance~$F(k-1,n)$
apart, there must be exactly one~$r$. Notice that formula
$\mathsf{yardstick}_{k-1}^n$ appears negated in
$\mathsf{graduation}_k$.
Regarding the counter, formulas $\mathsf{zeros}_k$ and
$\mathsf{ones}_k$ are the same as $\mathsf{zeros}_1$ and
$\mathsf{ones}_1$, respectively. Incrementation is handled using the
same trick as for $\mathsf{graduation}_k$: 
\begin{multline*}
\mathsf{increment}_k(c,r,s,t) = \forall u.\forall v.
  \Bigl[(\mathsf{delimiters}(u,v) \et  \mathsf{yardstick}_{k-1}^n(u,v))   \impl \\
   \All\G\bigl((s\et \All\F u) \impl (
    \All\G(((u\et c) \iff \All\G(v \impl c)) \iff  (\All\X\All \non r
    \Until (\non c \et \non r))
   ))\bigl)\Bigr]
\end{multline*}
This formula is a mix between $\mathsf{increment}_1$, in that it uses
the same trick of requiring that the value of~$c$ is preserved if
there is a zero at a lower position, and the labelling with~$u$
and~$v$ to consider all positions that are at distance $F(k-1,n)$
apart.

Now, since $\mathsf{yardstick}_{k-1}^n$ is, by~induction hypothesis,
in \EQnCTL {k-1}, formula $\mathsf{yardstick}_k^n$ is in \EQnCTL k
(notice that  $\mathsf{yardstick}_{k-1}^n$ appears negated after the
universal quantifiers on~$u$ and~$v$). 

\medskip

  We~now explain how we encode the problem whether a word~$y$ is accepted by an
  alternating Turing machine equipped with a tape of size~$E(k-1,\size y)$ 
  into an \EQnCTL k model-checking problem. 
  Assume we are given such a Turing machine $\calM=\tuple{Q,q_0,\delta,F}$ 
  on a two-letter alphabet~$\Sigma=\{\alpha,\beta\}$, and an input word~$y\in\Sigma^n$. 
  An~execution of~$\calM$ on~$y$ is encoded as (abstractly) depicted on
  Fig.~\ref{fig-TM}, with one configuration being encoded as a
  sequence of cells, and branching occurring only between two
  consecutive configurations.

  With~$\calM$, we associate a Kripke structure
  $\calS_\calM=\tuple{S,R,\ell}$ where
  $S=(Q\cup\{\epsilon\})\times(\Sigma\cup\{\circ\}) \cup\{\#\}$ (where $\circ$
  denotes empty cells of the tape and $\#$ will be used to delineate the
  successive configurations of~$\calM$), $R=S\times S$ is the complete
  transition relation, and $\ell$~labels each state with its name
  (hence the initial set of atomic propositions is~$S$). We~write~$s_0$ for
  the state~$\#$, where our formula will be evaluated.

  The execution tree of~$\calS_\calM$ from~$s_0$ contains as branches any word in~$s_0\cdot
  S^\omega$. We~use symbol~$\#$ to divide that tree into slices of
  height~$F(k-1,n)$: formula
  \begin{multline}
  \# \et \forall u.\forall v.\ 
    \Bigl[(\mathsf{delimiters}(u,v) \et  \mathsf{yardstick}_{k-1}^n(u,v))
      \impl \\
      \All\G((u \et \#) \impl \All\X\All (\non\#)
      \Until (v\et\#))\Bigr]
  \label{eq-slices}
  \end{multline}
  enforces that along any branch, symbol~$\#$ occurs at every level multiple
  of~$F(k-1,n)$.   Notice that this formula is in \AQnCTL {k-1}, since the
  $\mathsf{yardstick}_{k-1}^n$ formula is in \EQnCTL{k-1}.
  Notice that when~$k=1$, a simpler \CTL formula can be used.

  Now, not all branches of the execution tree of~$\calS_\calM$ are needed in
  order to represent an accepting execution of~$\calM$: only states labelled
  with~$\#$ may have several successors (see Fig.~\ref{fig-TM}). In~order to
  keep track of the relevant branches, we~label them with a fresh,
  existentially-quantified proposition~$a$. The fact that branching only occurs at
  $\#$-nodes can be expressed as
  \[
  a \et \All\G(a \thn \Ex\X a) \et 
    \All\G\Bigl[(a \et\non\#) \thn \ET_{p\not=q \in S} \non\bigl(\Ex\X(a \et p)
    \et \Ex\X(a \et q)\bigr)   \Bigr].
  \]
  Enforcing the initial state of the Turing machine (namely, that the tape
  contains~$y$, Turing machine is in state~$q_0$ and the tape head is on the
  first letter of~$y$) is straightforward (by expressing that $a$ must label the
  corresponding sequence of states in the tree). Expressing ``local''
  requirements on the encoding of a configuration (\eg that each configuration
  contains exactly one state representing a position for the tape head) is
  straightforward, using the delimiter~$\#$. The fact that an accepting state
  is reached along any~$a$-branch is also easy. It~only remains to express
  that there is a transition linking any configuration with its successor
  configurations. This can be achieved using similar formulas as
  formula~\eqref{eq-slices}, using delimiters~$u$ and~$v$ to ensure that the
  content of the tape is preserved and that the tape head has been moved by
  one position. 

  In the end, the global formula has an external existential quantification
  on~$a$, followed by formulas in~\AQnCTL{k-1} similar to
  formula~\eqref{eq-slices} (of \CTL formulas when~$k=1$). The whole
  formula is then in \EQnCTL k, which concludes the proof that model checking
  this logic is \EXPTIME[k]-hard.
\end{proof}

When using \CTL*, the above proof can be improved to handle one more
exponential: indeed, using \CTL*, formula $\textsf{yardstick}_0^n(s,t)$ can be
made to enforce that the distance between~$s$ and~$t$ is~$2^n$. This way,
using $k$ quantifier alternations, we~can encode the computation of an
alternating  Turing machine running in space~$(k+1)$-exponential. In~the~end:
\begin{theorem}\label{thm-k+1hard}
Model checking \EQnCTL k*
under the tree semantics is \EXPTIME[(k+1)]-hard (for positive~$k$).
\end{theorem}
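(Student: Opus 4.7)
The plan is to adapt the proof of Theorem~\ref{thm-hard-eqkctl-t}, gaining one additional exponential by strengthening the innermost yardstick. In the original construction, $\mathsf{yardstick}_0^n(s,t)$ is a pure \CTL{} formula of size~$O(n)$ enforcing distance~$n$ via nested $\All\X$ operators, and this linear bottleneck drives the whole recurrence $F(k+1,n)=F(k,n)\cdot 2^{F(k,n)}$. I~would replace it with a \CTL*-based analogue $\mathsf{yardstick}_0^{n,*}(s,t)$ enforcing distance~$2^n$: existentially quantify over $n$ fresh propositions $c_1,\dots,c_n$ that act as a binary counter, and express the standard binary-increment pattern as a single \CTL* path formula (under an outer $\All\G$) of polynomial size, together with the boundary requirements that the counter equals $0$ exactly at~$s$ and $2^n-1$ exactly at~$t$. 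The resulting formula has polynomial size and belongs to \EQnCTL 1*.

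Next I~would define $\mathsf{yardstick}_k^{n,*}$ by exactly the same recursion as $\mathsf{yardstick}_k^n$, calling $\mathsf{yardstick}_{k-1}^{n,*}$ wherever the original called $\mathsf{yardstick}_{k-1}^n$. A~routine induction then shows that the new formula enforces distance $F^*(k,n)$, where $F^*(0,n)=2^n$ and $F^*(k+1,n)=F^*(k,n)\cdot 2^{F^*(k,n)}$, which is of order $E(k+1,n)$. Plugging this stronger yardstick into the very same alternating Turing-machine encoding as in the proof of Theorem~\ref{thm-hard-eqkctl-t}, we~simulate alternating machines whose tape is bounded by $E(k+1,n)$ rather than $E(k,n)$, yielding \EXPTIME[(k+1)]-hardness.

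The main obstacle is the bookkeeping on quantifier alternation: I~must verify that $\mathsf{yardstick}_k^{n,*}$ lies in \EQnCTL k* despite the extra existential block introduced at the base level. At level~$1$, the base case is used positively inside $\mathsf{graduation}_1$ and $\mathsf{counter}_1$, so its inner existentials merge cleanly with the outer $\exists r.\exists c.$, keeping everything in \EQnCTL 1*. At levels~$k\geq 2$, the recursive call $\mathsf{yardstick}_{k-1}^{n,*}$ appears on the left-hand side of an implication inside $\forall u.\forall v.[\cdots]$, and so under one extra negation: by induction $\mathsf{yardstick}_{k-1}^{n,*}\in \EQnCTL{k-1}*$, its negation belongs to \AQnCTL{k-1}*, which combines with the surrounding universal block to yield an \AQnCTL{k-1}* formula, and adding the outer $\exists r.\exists c.$ lands us in \EQnCTL k*. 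Once this polarity check is carried out, the remainder of the argument is a line-by-line rerun of the proof of Theorem~\ref{thm-hard-eqkctl-t}.
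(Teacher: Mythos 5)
Your proposal follows exactly the route the paper takes (the paper's own proof is only a two-sentence remark: make $\mathsf{yardstick}_0^n$ enforce distance $2^n$ using \CTL*, then rerun the proof of Theorem~\ref{thm-hard-eqkctl-t}), and your elaboration of the yardstick hierarchy and the polarity bookkeeping for $k\geq 2$ is sound: $\mathsf{yardstick}_k^{n,*}\in\EQnCTL k*$, the slicing formulas land in $\AQnCTL{k-1}*$, and the outer $\exists a$ gives $\EQnCTL k*$. Two points deserve attention, one cosmetic and one substantive.

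The cosmetic one: the original encodes machines with tape $E(k-1,n)$ into \EQnCTL k (since $\mathrm{ASPACE}(E(k-1,n))=k\text{-}\EXPTIME$), so the starred version should handle tape $E(k,n)$, not $E(k+1,n)$; your construction (configurations delimited by $\mathsf{yardstick}_{k-1}^{n,*}$, of length about $E(k,n)$) actually delivers the right bound, only the prose is shifted by one. The substantive one is the case $k=1$, which your ``line-by-line rerun'' does not cover. Your base yardstick $\mathsf{yardstick}_0^{n,*}$ carries an existential block of counter bits, so it lies in $\EQnCTL 1*$ rather than in plain \CTL*. In the Turing-machine encoding, $\mathsf{yardstick}_{k-1}^{n,*}$ occurs \emph{negatively} (on the left of the implication under $\forall u.\forall v.$ in the analogue of formula~\eqref{eq-slices} and in the cell-preservation constraints); for $k=1$ this yields an $\AQnCTL 1*$ subformula and hence a global formula in $\EQnCTL 2*$, not $\EQnCTL 1*$. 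The escape hatch used in the original proof for $k=1$ --- replacing the $\forall u.\forall v.$ wrapper by a plain \CTL formula built from $n$ nested $\All\X$ --- is unavailable here, since the configurations now have length $2^n$. So the $k=1$ instance of the theorem (\EXPTIME[2]-hardness of \EQnCTL 1* model checking under the tree semantics) needs a separate argument: either invoke the known result of~\cite{Kup95a}, or rework the $k=1$ encoding so that the counter bits are quantified once, positively, in the single outermost existential block together with~$a$, and the slicing and tape-consistency conditions are expressed by pure \CTL* formulas over those bits.
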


\paragraph{Algorithms for the tree semantics.}
We~use tree-automata techniques to develop model-checking algorithms for our
logics. We recall the definitions and main results of this classical setting,
and refer to~\cite{MS87,MS95,Tho97b,KVW00} for a more detailed presentation.

\smallskip 
We~begin with defining alternating tree automata, which we will use
in the proof.  This~requires the following definition: the~set of
\newdef{positive boolean formulas} over a finite set~$P$ of
propositional variables, denoted with~$\PBF(P)$, is the set of formulas defined as
\[
\PBF(P) \ni \zeta \coloncolonequals p \mid \zeta\et\zeta \mid \zeta\ou\zeta
\mid \top \mid \bot
\] 
where~$p$ ranges over~$P$.
That a valuation~$v\colon P \to \{\top,\bot\}$ satisfies a formula
in~$\PBF(P)$ is defined in the natural way. We~abusively say that a
subset~$P'$ of~$P$ satisfies a formula~$\phi\in\PBF(P)$ iff the
valuation~$\mathds{1}_{P'}$ (mapping the elements of~$P'$ to~$\top$ and 
the elements of $P\smallsetminus P'$ to~$\bot$) satisfies~$\phi$. 
Since negation is not allowed,
if~$P'\models\phi$ and~$P'\subseteq P''$, then also $P''\models\phi$.

\begin{definition}
  Let~$\Sigma$ be a finite alphabet. Let $\calD \subseteq \Nat$ be a finite
  subsets of degrees.  An~\newdef{alternating parity $\calD$-tree
    automaton on~$\Sigma$}, or \APTA{\calD,\Sigma}, is a $4$-tuple
  $\Aut=\tuple{Q, q_0,\tau, \Omega}$ where
\begin{itemize}
\item 
$Q$ is a finite set of states,
\item 
$q_0\in Q$ is the initial state,
\item 
$\tau$ is a family of transition functions $(\tau_d)_{d\in\calD}$
  such that for all~$d\in \calD$, it~holds $\tau_d \colon Q\times\Sigma \to \PBF(\{0,\ldots,d-1\}\times Q)$,
\item 
$\Omega\colon Q \to \{0,\ldots,k-1\}$ is a parity acceptance condition. 
\end{itemize}
The~size of~$\Aut$, denoted by~$\size\Aut$, is the number of states in~$Q$.
The range~$k$ of~$\Omega$ is the \newdef{index} of $\Aut$, denoted
by~$\idx\Aut$.

A \newdef{non-deterministic $\calD$-tree automaton on~$\Sigma$}, or
\NPTA{\calD,\Sigma}, is a \APTA{\calD,\Sigma} where for any $d\in\calD$, $q\in
Q$ and $\sigma\in\Sigma$, we have: $\tau_d(q,\sigma) = \OU\limits_i \Bigl( \ET\limits_{0
  \leq c < d} (c,q_{i,c})\Bigr)$. 
\end{definition}

We now define the semantics of our tree automata. Notice that contrary to the
classical setting, where tree automata are defined to deal with fixed-arity
trees, we~better use the setting of~\cite{KVW00}, where the transition
function depends on the arity of the node where it is applied.
Let~$\Aut=\tuple{Q, q_0, \tau, \Acc}$ be an \APTA{\calD,\Sigma},
and~$\calT=\tuple{T,l_\calT}$ be a $\tuple{\Sigma,\calD}$-tree.
An~\newdef{execution tree} of~$\Aut$ on~$\calT$ is a $T\times
Q$-labelled tree $\calE=\tuple{E, p}$ such that
\begin{itemize}
\item 
$p(\epsilon)=(\epsilon,q_0)$, 
\item
 for each node~$e\in E$ with $p(e)=(t,q)$ and  $d = \deg_\calT(t)$,
 there exists a subset $\xi = \{(c_0,q'_0),\ldots,(c_m,q'_m)\} \subseteq
 \{0,\ldots,d-1\}\times Q$ such that
 $\xi \sat \tau_d(q,l_\calT(t))$, and for $i=0,\ldots,m$, we have
 $e\cdot i\in E$ and $p(e\cdot i)=(t\cdot c_i,q'_i)$.  
\end{itemize}
We write $p_Q$ for the labelling function restricted to the
second component: when $p(e)=(t,q)$, then $p_Q(e) = q$.
Given an infinite path~$\pi\in\Exec_\calE$ in an execution tree, 
$p_Q(\pi)$ is the set of states of visited along~$\pi$, and $\Inf(p_Q(\pi))$
is the set of states visited infinitely many times. 
An execution tree is \newdef{accepting} if $\min\{\Omega(q) \mid
q\in\Inf(p_Q(\pi))\}$ is even for every infinite path
$\pi\in\Exec_\calE$.
A~tree~$\calT$ is \newdef{accepted} by~$\calA$ if there exists an accepting execution
tree of~$\calA$ on~$\calT$.

Deciding whether a given tree is accepted by a tree automaton is decidable.
More precisely, given a tree automaton~$\calA$ and a regular tree~$\calT$
(\ie, a tree for which there exists a finite Kripke structure~$\calS$ and a
state~$q$ such that $\calT=\calT_{\calS}(q)$), the problem whether $\calT$ is
accepted by~$\calA$ is decidable. Moreover, given a tree automaton~$\calA$,
the problem whether $\calA$~accepts some tree at all is also decidable%
\footnote{Note that for an \APTA{}, emptiness checking and universality checking
have the same complexity because building the complement automaton 
can be done efficiently.}, and
when the answer is positive, $\calA$~accepts a regular tree. We summarise
these results in the following theorem:
\begin{theorem}\label{theorem-decision-proc}
The problem whether an \APTA{} $\calA$ with $d$ priorities accepts
  regular tree~$\calT$ represented as a Kripke structure~$\calS$ can be solved
  in time $O((\size\calA\cdot\size\calS)^d)$.
Checking the emptiness of an \APTA{}~$\calA$ is \EXPTIME-complete 
~\cite{Loeding2012}. 
Additionally, 
If~$\calA$ accepts some infinite tree, then it accepts a regular~one~\cite{Rab72}.
\end{theorem}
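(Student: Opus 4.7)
\medskip
\noindent\emph{Proof plan.}
The three statements are classical facts from the theory of alternating tree automata, so my plan is to sketch the standard reductions and cite the underlying algorithmic results rather than redevelop them. The unifying tool is the translation of acceptance into a two-player parity game between player~$\exists$ (who tries to build an accepting execution tree) and player~$\forall$ (who tries to refute~it).

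For the first statement, I~would build a parity game~$\calG_{\calA,\calS}$ whose positions are essentially pairs $(s,q)\in Q_\calS\times Q_\calA$, with auxiliary positions inserted to resolve the boolean structure of $\tau_d(q,\ell(s))\in\PBF(\{0,\ldots,d-1\}\times Q)$: at a disjunction player~$\exists$ picks a disjunct, at a conjunction player~$\forall$ picks a conjunct, and the atoms $(c,q')$ move to the position $(\Succ_\calS(s,c),q')$. The priority of a position $(s,q)$ is~$\Omega(q)$ (auxiliary positions inherit priorities from their owners or are given a neutral priority). A~standard unfolding argument shows that $\calS,s_0\models\calA$ iff~$\exists$ has a winning strategy from the initial position. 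The game has $O(\size\calA\cdot\size\calS)$ positions and $d$ priorities, so any of the classical parity game algorithms (for instance Jurdzi\'nski's small progress measures) solves it in time $O((\size\calA\cdot\size\calS)^d)$, yielding the claimed bound.

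For the second statement, \EXPTIME membership follows by translating the \APTA{} into an equivalent nondeterministic parity tree automaton of exponential size (Miyano--Hayashi style alternation removal adapted to the parity condition), and then checking emptiness of an \NPTA{}, which reduces again to solving a parity game of polynomial size in the automaton (no tree is needed, player~$\exists$ now also chooses the successor labels). Hardness in \EXPTIME is proved by simulating a polynomially space-bounded alternating Turing machine by an \APTA{}, exactly as in~\cite{Loeding2012}. The third statement is Rabin's basic theorem: since emptiness is decided via a parity game and parity games admit positional (hence finite-memory) winning strategies, a winning strategy for~$\exists$ can be used to label a finite quotient of the input tree, giving a regular accepted tree.

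The only delicate point worth checking is the bookkeeping of priorities on the auxiliary positions of~$\calG_{\calA,\calS}$, so that the index of the game coincides with~$d$ and not with $d+O(1)$: assigning to every auxiliary position the maximum even priority that cannot dominate any infinite play suffices. Once this is settled, all three claims follow by direct invocation of the cited results.
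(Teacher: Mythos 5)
Your sketch is correct: the paper offers no proof of this theorem, merely delegating all three claims to the cited literature (\cite{Loeding2012} for membership/emptiness and \cite{Rab72} for regularity), and your parity-game reductions are exactly the standard arguments those references rely on, including the correct handling of priorities on the auxiliary positions that resolve the boolean transition formulas. The only minor caveat is that the position count $O(\size\calA\cdot\size\calS)$ implicitly absorbs the size of the transition formulas into $\size\calA$, which the paper itself does by measuring automata by their number of states, so your bound matches the stated one.
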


We now recall some standard properties of \APTA{}, which we will use later to
define our model-checking algorithm for \QnCTL k. 
First note that the use of Boolean formulae in the transition function makes the treatment of operations like union, intersection and complement easy to handled with  
\APTA{} and there is no cost in term of the size of the resulting automata~\cite{MS87}. 
Now we assume we are given
a Kripke structure $\calS=\tuple{Q,R,\ell}$ on a set~$\AP$ of atomic
propositions, and we write $\calD$ for the set of degrees in~$\calS$.

\begin{lemma}\cite{KVW00}\label{lemma-ctl}
Given a \CTL formula~$\phi$ over~$\AP$ and a set~$\calD$ of  degrees,
we~can construct a \APTA{\calD,2^{\AP}}~$\calA_\phi$ accepting exactly the
$2^{\AP}$-labelled $\calD$-trees satisfying~$\phi$. The~automaton~$\calA_\phi$
has size linear
in the size of~$\phi$, and uses a constant number of priorities.
\end{lemma}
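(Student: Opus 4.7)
The plan is to construct~$\calA_\phi$ by structural induction on~$\phi$, following the standard translation of branching-time formulas into alternating tree automata. First I would bring~$\phi$ into positive normal form (with only a linear blowup) so that negations occur only in front of atomic propositions; it then suffices to handle the \CTL operators $\Ex\X$, $\All\X$, $\Ex\G$, $\All\G$, and the existential and universal until modalities.

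The state set~$Q$ consists of the non-Boolean subformulas of~$\phi$ (atoms, negated atoms, and the temporal subformulas), with initial state~$q_\phi$ (adding one dummy state if~$\phi$ is itself a Boolean combination). For every non-Boolean subformula~$\beta$ and each $c\in\{0,\ldots,d-1\}$, define a positive Boolean formula $\mathsf{ch}_c(\beta) \in \PBF(\{0,\ldots,d-1\}\times Q)$ by $\mathsf{ch}_c(\alpha) = (c,q_\alpha)$ whenever $\alpha$ is atomic or temporal, and $\mathsf{ch}_c(\gamma_1 \ou \gamma_2) = \mathsf{ch}_c(\gamma_1) \ou \mathsf{ch}_c(\gamma_2)$ (dually for~$\et$); similarly let $\mathsf{here}(\beta,\sigma)$ evaluate atoms directly against~$\sigma$ and recurse through Boolean connectives, producing $(c,q_\alpha)$-type atoms only when it reaches a temporal subformula. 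Then set $\tau_d(q_p,\sigma) = \top$ if $p\in\sigma$ and~$\bot$ otherwise (dually for~$\non p$); $\tau_d(q_{\Ex\X\beta},\sigma) = \OU_{0\le c<d}\mathsf{ch}_c(\beta)$ with the analogous conjunction for~$\All\X$; for $\alpha = \Ex\beta\Until\gamma$ use the standard one-step unfolding
\[
\tau_d(q_\alpha,\sigma) \;=\; \mathsf{here}(\gamma,\sigma)\;\ou\;\bigl(\mathsf{here}(\beta,\sigma)\;\et\;\OU_{0\le c<d}(c,q_\alpha)\bigr),
\]
and dually for $\All\beta\Until\gamma$; and similarly $\tau_d(q_{\Ex\G\beta},\sigma) = \mathsf{here}(\beta,\sigma)\et\OU_c(c,q_{\Ex\G\beta})$ together with the analogous formula for $\All\G$. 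Finally, assign priority~$1$ to each until state (the only least-fixpoint states) and priority~$0$ to every other state, giving only two priorities.

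Correctness follows by a routine induction on~$\alpha$: for every $\tuple{2^\AP,\calD}$-tree~$\calT$, every node~$x$, and every non-Boolean subformula~$\alpha$, the automaton obtained from $\calA_\phi$ by restarting in~$q_\alpha$ accepts $\calT_x$ iff $\calT,x\sat_s\alpha$. The atomic and next cases are immediate from the transition definitions. In the fixpoint cases the parity condition does the essential work: an execution that keeps postponing the right-hand side of an until loops through a priority-$1$ state as its minimum priority seen infinitely often and is therefore rejected, so acceptance witnesses a finite unfolding; conversely~$\Ex\G$ and~$\All\G$ correctly admit infinite loops in which only priority~$0$ occurs. The converse direction, extracting an accepting execution tree from a satisfying assignment in~$\calT$, is a standard greedy unrolling.

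For the complexity, $|Q|$~is the number of non-Boolean subformulas of~$\phi$, hence $O(|\phi|)$, and each transition formula $\tau_d(q_\alpha,\sigma)$ has size linear in the immediate subformulas of~$\alpha$, so the total representation is linear in~$|\phi|$. Only two priorities are used, so $\idx{\calA_\phi}$ is bounded by a universal constant. The main subtlety to handle carefully is compiling the Boolean layer of \CTL into the transition function: since the paper's $\tau_d$ only permits references to children, the roles of $\mathsf{ch}_c$ and $\mathsf{here}$ are precisely to absorb Boolean connectives into the positive Boolean formula without introducing $\epsilon$-moves or duplicating states, which is exactly what keeps the construction linear.
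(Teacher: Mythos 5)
Your construction is essentially the paper's (and KVW00's): states are the subformulas of~$\phi$, the transition function unfolds each temporal operator one step while compiling the Boolean layer into a positive Boolean formula over children, least-fixpoint (until) states receive an odd priority and all others an even one, and the result has linearly many states and two priorities. The correctness argument via the parity condition and the size/index accounting are all fine, and your priority assignment ($1$ for untils, $0$ elsewhere) works because every infinite branch of an execution tree eventually stabilises in a single fixpoint state.

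There is, however, a concrete flaw in your first step. You claim that after pushing negations to the atoms it suffices to handle $\Ex\X$, $\All\X$, $\Ex\G$, $\All\G$ and the two until modalities. That basis is not adequate for positive normal form: the dual of $\Ex\phi\Until\psi$ is $\All(\non\psi)\WUntil(\non\psi\et\non\phi)$, and this universal weak-until is not positively definable from your operators --- in particular it is neither $\All\G\non\psi$ nor $\All(\non\psi\Until(\non\phi\et\non\psi))$ nor a Boolean combination of such formulas, so your normalisation simply has no image for a subformula like $\non\Ex(p\Until q)$. (The universal until is luckier: $\non\All(\phi\Until\psi)\equiv\Ex\G\non\psi\ou\Ex(\non\psi\Until(\non\phi\et\non\psi))$, since disjunction distributes over the existential path quantifier; no such distribution holds for $\All$.) This is exactly why the paper adds the two weak-until modalities $\Ex\WUntil$ and $\All\WUntil$ to the basis, gives them the disjunctive one-step unfolding $\tau_d(\psi_2,\sigma)\ou\bigl(\tau_d(\psi_1,\sigma)\et\cdots\bigr)$, and assigns them an even priority. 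The repair is purely local --- replace your two $\G$-states by the two $\WUntil$-states with that transition rule --- and nothing else in your argument changes.
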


\begin{proof}[Sketch of proof]
We~only describe the construction, and refer to~\cite{KVW00} for a detailed
proof of the result. W.l.o.g.~we assume that negations in~$\phi$ are
  followed by atomic propositions; This might require adding the two extra modalities
  $\Ex\WUntil$ and $\All\WUntil$, which satisfy the following equivalences:
\begin{xalignat*}1
\non (\Ex \phi \Until \psi) &  \equiv \All (\non \psi)
\WUntil (\non \psi \et \non \phi)   \\
\non (\All \phi \Until \psi) &  \equiv  \Ex (\non \psi)
\WUntil (\non \psi \et \non \phi)  
\end{xalignat*}
  The automaton $\calA_\phi = \tuple{Q_\phi,q_0,\tau,\Omega_\phi}$ is
  defined as follows:
\begin{itemize}
\item $Q_\phi$ is the set of state subformulas (not including $\top$ and~$\bot$),
\item the initial state~$q_0$ is~$\phi$,
\item given a degree $d\in\calD$, $\psi \in Q_\phi$ and
  $\sigma \in 2^{\AP}$, we define
  $\tau_d(\psi,\sigma)$ as follows:
\begin{xalignat*}3
  \tau_d(P,\sigma) &= \begin{cases} \top & \mbox{if} \: P\in \sigma \\ 
\bot & \mbox{otherwise} \end{cases} &&& 
  \tau_d(\non P,\sigma) &= \begin{cases} \bot & \mbox{if} \: P\not\in \sigma \\ 
\top & \mbox{otherwise} \end{cases} \\
\tau_d(\psi_1\et\psi_2) &= \tau_d(\psi_1) \et \tau_d(\psi_2) &&&
\tau_d(\psi_1\ou\psi_2) &= \tau_d(\psi_1) \ou \tau_d(\psi_2) \\[1.2ex]
\tau_d(\EX \psi,\sigma) &= \OU_{0\leq c <d} (c,\psi) &&&
\tau_d(\AX \psi,\sigma) &= \ET_{0\leq c <d} (c,\psi) \\[1.2ex]
\noalign{\pagebreak[1]}
\tau_d(\Ex \psi_1 \Until \Psi_2,\sigma)  &\multicolumn{5}{l}{$=\tau_d(\psi_2,\sigma) \ou
\Big(\tau_d(\psi_1,\sigma) \et \OU_{0\leq c <d} (c,\Ex \psi_1 \Until \psi_2)\Big)$}
\\[1.2ex]
\noalign{\pagebreak[1]}
\tau_d(\Ex \psi_1 \WUntil \Psi_2,\sigma) &\multicolumn{5}{l}{$=\tau_d(\psi_2,\sigma) \ou
\Big(\tau_d(\psi_1,\sigma) \et \OU_{0\leq c <d} (c,\Ex \psi_1 \WUntil \psi_2)\Big)$}
\\[1.2ex]
\noalign{\pagebreak[1]}
\tau_d(\All \psi_1 \Until \Psi_2,\sigma) &\multicolumn{5}{l}{$=\tau_d(\psi_2,\sigma) \ou
\Big(\tau_d(\psi_1,\sigma) \et \ET_{0\leq c <d} (c,\All \psi_1 \Until \psi_2)\Big)$}
\\[1.2ex]
\tau_d(\All \psi_1 \WUntil \Psi_2,\sigma) &\multicolumn{5}{l}{$=\tau_d(\psi_2,\sigma) \ou
\Big(\tau_d(\psi_1,\sigma) \et \ET_{0\leq c <d} (c,\All \psi_1 \WUntil \psi_2)\Big)$}
\end{xalignat*}
\item the acceptance condition is defined as
  \begin{xalignat*}1
    \Omega_\phi(\Ex \psi_1 \Until \psi_2) = \Omega_\phi(\All  \psi_1
      \Until \psi_2) &= 1 \\
    \Omega_\phi(\Ex \psi_1 \WUntil \psi_2) = \Omega_\phi(\All  \psi_1
      \WUntil \psi_2) &= 2
    \end{xalignat*}
  \end{itemize}
  Note that the definition of $\Omega_\phi$ for other nodes (\ie, boolean combinations)
  is not needed as these nodes cannot be visited infinitely often along a
  branch. 
\end{proof}

\QCTL quantifications over atomic propositions will be handled with the
projection operation on tree automata. 
To this aim, we will use the following lemma (notice that it requires
\emph{non-deterministic} tree automata as input):

\begin{lemma}\cite{MS85}\label{lemma-proj}
  Let~$\calA$ be a \NPTA{\calD,2^{\AP}}, with $\AP=\AP_1\cup\AP_2$. For
  all~$i\in\{1,2\}$, we~can build a \NPTA{\calD,2^{\AP}}~$\calB_i$ such that, for
  any $2^{\AP}$-labelled $\calD$-tree~$\calT$, it~holds: 
  $\calT \in \Lang(\calB_i)$ iff $\exists
  \calT'\in\Lang(\calA).\ \calT \equiv_{\AP_i} \calT'$.
  The size and index of~$\calB_i$ are those of~$\calA$.
\end{lemma}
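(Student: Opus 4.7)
My plan is to define $\calB_i$ as a projection of $\calA$ that, at each node, nondeterministically guesses a labeling of the propositions in $\AP\setminus\AP_i$ and then simulates $\calA$ on the resulting symbol. Since $\calA$ is already nondeterministic, this extra guessing can be absorbed into the existing nondeterminism of its transition function without introducing alternation, blowing up the state space, or changing the priorities.

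Concretely, write $\calA=\tuple{Q,q_0,\tau,\Omega}$ and let $j\in\{1,2\}\setminus\{i\}$. I would set $\calB_i=\tuple{Q,q_0,\tau',\Omega}$, keeping the state set, initial state and acceptance condition of~$\calA$, and defining, for every $d\in\calD$, every $q\in Q$ and every $\sigma\in 2^{\AP}$,
\[
\tau'_d(q,\sigma) \;\eqdef\; \OU_{\substack{\sigma'\in 2^{\AP} \\ \sigma'\cap\AP_i=\sigma\cap\AP_i}} \tau_d(q,\sigma').
\]
Because each $\tau_d(q,\sigma')$ is, by the definition of an \NPTA{}, a disjunction of conjunctions with exactly one successor-state per direction~$c\in\{0,\ldots,d-1\}$, so is the above disjunction, hence $\calB_i$ is again an \NPTA{\calD,2^{\AP}}. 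The size equals $\size\calA$ and the index equals $\idx\calA$ since $Q$ and~$\Omega$ are unchanged.

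For correctness I would argue both inclusions by transporting accepting execution trees along the projection. Given $\calT$ accepted by~$\calB_i$, an accepting execution tree of $\calB_i$ on $\calT$ selects, at each node $t\in T$, a specific disjunct in~$\tau'_d(p_Q(e),l_\calT(t))$, namely some $\sigma'_t\in 2^{\AP}$ with $\sigma'_t\cap\AP_i=l_\calT(t)\cap\AP_i$. Defining $\calT'=\tuple{T,l'}$ by $l'(t)=\sigma'_t$ gives a tree $P$-equivalent to $\calT$ on~$\AP_i$, and the very same execution tree witnesses $\calT'\in\Lang(\calA)$. Conversely, an accepting execution tree of $\calA$ on any $\calT'\equiv_{\AP_i}\calT$ is automatically an accepting execution tree of $\calB_i$ on $\calT$, by picking the disjunct $\sigma'=l'(t)$ in the definition of $\tau'_d$ at each node~$t$.

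The main subtlety—and the reason nondeterminism is essential in the hypothesis—is that the same guess of $\sigma'_t$ must be used by every branch passing through~$t$. In an \NPTA{} each node has exactly one state labelling it in any run, so the local disjunct chosen for $\tau'_d$ at~$t$ canonically fixes $\sigma'_t$ and the induced tree~$\calT'$ is well-defined. For an alternating automaton this coherence would fail, since different copies at the same node could pick different $\sigma'_t$'s, so the construction would not produce a single witnessing tree~$\calT'$; this is the only place where the nondeterminism hypothesis is used, and it is what I expect to highlight as the key point of the argument.
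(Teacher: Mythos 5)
The paper gives no proof of this lemma: it is imported directly from~\cite{MS85}, and the only "argument" in the text is the surrounding remark that one must first apply the simulation theorem (Lemma~\ref{lemma-sim}) because projection requires a \emph{non-deterministic} automaton. Your construction --- replacing $\tau_d(q,\sigma)$ by the disjunction of $\tau_d(q,\sigma')$ over all $\sigma'$ agreeing with $\sigma$ on $\AP_i$ --- is exactly the standard projection that the citation refers to, and your correctness argument is sound, including the key point that in an \NPTA{} a (minimal) execution tree visits each input node with a single state, so the guessed letter $\sigma'_t$ is well-defined per node and assembles into a single witness tree $\calT'$; this is precisely why the paper converts \APTA{}s to \NPTA{}s before projecting.
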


In order to use this result, we~will have to apply the \emph{simulation
  theorem}, which allows for turning \APTA{}s into \NPTA{}s. 
Having varying degrees does not change the result (for example, one can 
adapt the proofs of Lemma~3.9 and Theorem~3.10 in~\cite{Loeding2012} in order
to get the result in our extended setting):
\begin{lemma}\cite{MS95}\label{lemma-sim}
Let $\calA$ be a \APTA{\calD,2^{\AP}}. We can build an \NPTA{\calD,2^{\AP}}~$\calN$
accepting the same language as~$\calA$, and such that $\size{\calN} \in
2^{O(\size\calA\idx\calA\cdot \log(\size\calA\idx\calA))}$ and $\idx\calN\in
  O(\size\calA\cdot \idx\calA)$. 
\end{lemma}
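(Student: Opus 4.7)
The plan is to prove this classical simulation theorem by reducing acceptance of the \APTA{} to the existence of a memoryless winning strategy in a parity game, and then letting the \NPTA{} guess such a strategy on the fly. Concretely, fix a $\tuple{2^\AP,\calD}$-tree~$\calT=\tuple{T,l_\calT}$. Acceptance of~$\calT$ by~$\calA=\tuple{Q,q_0,\tau,\Omega}$ is equivalent to the Accepter winning the parity game~$\calG_{\calA,\calT}$ whose positions are $T\times Q$, whose moves from $(t,q)$ consist in the Accepter selecting a subset $\xi\subseteq\{0,\ldots,\deg_\calT(t)-1\}\times Q$ satisfying $\tau_{\deg_\calT(t)}(q,l_\calT(t))$ and the Rejecter picking an element $(c,q')\in\xi$ to move to $(t{\cdot}c,q')$, with the parity of a play determined by the priorities visited in the $Q$-component. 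Since parity games admit memoryless determinacy, if $\calT$ is accepted then there is a positional strategy that, for each $(t,q)\in T\times Q$, picks a satisfying subset $\xi_{t,q}$.

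Given this, I would build~$\calN$ so that at each node~$t$ of the input tree it guesses, in parallel, the subset $\xi_{t,q}$ for every state~$q$ currently ``alive'' at~$t$. A~macro-state thus naturally contains a subset $S\subseteq Q$ of states of~$\calA$. The transition of~$\calN$ at~$(t,S)$ nondeterministically chooses, for each $q\in S$, a subset $\xi_q$ satisfying $\tau_{\deg_\calT(t)}(q,l_\calT(t))$, and sends to each successor direction~$c$ the set $S_c=\{q'\mid (c,q')\in\bigcup_{q\in S}\xi_q\}$. The key difficulty is translating the parity condition on all branches of the alternation tree (which is a property of every infinite play in the game) into a parity condition on runs of~$\calN$, which sees only one branch of~$\calT$ at a time but must certify acceptance along every lineage of states inside each branch.

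This is where I would invoke the standard latest-appearance-record (LAR) construction of Gurevich--Harrington / Muller--Schupp adapted to parity automata: augment each macro-state with an ordered list (permutation) of the priorities currently ``pending'' together with a pointer (a ``hit'' position) recording the minimal priority seen since the last reset. The transition updates this list in the usual way when moving to the $S_c$-component, so that the $\min$--priority seen infinitely often along any lineage of~$\calA$-states in~$\calN$'s run is faithfully encoded by the hit pointer's behaviour on~$\calN$'s run. This yields a parity condition on~$\calN$ whose index is $O(\size\calA\cdot \idx\calA)$, since the LAR ranges over permutations of $\{0,\ldots,\idx\calA-1\}$ decorated with a pointer. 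Counting macro-states: $2^{\size\calA}$ choices for~$S$, times $(\size\calA\cdot\idx\calA)!\cdot\size\calA\cdot\idx\calA$ LAR configurations, gives $2^{O(\size\calA\cdot\idx\calA\cdot\log(\size\calA\cdot\idx\calA))}$ states, as required. Non-determinism of~$\calN$ is evident, because the transition formula is a disjunction (over guessed satisfying subsets) of conjunctions of pairs $(c,q')$.

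Correctness then follows by two implications. If $\calT$ is accepted by~$\calA$, fix a memoryless winning strategy in~$\calG_{\calA,\calT}$; using it to resolve~$\calN$'s non-deterministic choices produces a run whose parity-LAR satisfies the acceptance condition on every branch of~$\calT$, because every lineage tracked by~$\calN$ corresponds to a play consistent with that strategy. Conversely, an accepting run of~$\calN$ yields a positional Accepter strategy in~$\calG_{\calA,\calT}$, which produces an accepting execution tree of~$\calA$ on~$\calT$. The main technical obstacle is setting up the LAR updates so that the priority witnessed by the pointer on an infinite $\calN$-run coincides with the minimum priority visited infinitely often on each corresponding infinite lineage in the alternation tree; this is where varying branching degrees in~$\calD$ slightly complicate the bookkeeping but do not change the asymptotics, as the degree only affects the arity of transitions and not the macro-state space. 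I would refer to~\cite{MS95} (and the degree-varying adaptation in~\cite{Loeding2012}) for the detailed LAR-update rules.
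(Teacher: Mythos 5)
The paper does not actually prove this lemma: it is imported from \cite{MS95}, with only a remark that the adaptation to varying degrees in~$\calD$ can be obtained by adjusting Lemma~3.9 and Theorem~3.10 of~\cite{Loeding2012}. Your proposal therefore attempts more than the paper does, and it follows the standard route of the cited proof: view acceptance of~$\calT$ by~$\calA$ as a parity game, invoke memoryless determinacy to restrict attention to positional strategies, and let the nondeterministic automaton guess the positional strategy via a subset component~$S\subseteq Q$ while a deterministic word-automaton component, running level by level inside each branch, certifies that every lineage of $\calA$-states satisfies the parity condition. This is the right skeleton, and reducing the whole difficulty to a determinization-like record on run DAGs of width~$\size\calA$ is indeed how the cited references organise the argument; your observation that varying degrees only affect the arity of transitions, not the macro-state space, is also correct.

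Two points need fixing. First, your description of the record is internally inconsistent: a latest-appearance record over the set of \emph{priorities} $\{0,\dots,\idx\calA-1\}$ alone has only $\idx\calA!\cdot\idx\calA$ configurations and yields index $O(\idx\calA)$; it cannot disentangle which of the up to $\size\calA$ concurrent lineages saw which priority, so the claim that ``the hit pointer faithfully encodes the minimal priority seen infinitely often on every lineage'' fails as literally stated. The record must range over (roughly) $Q\times\{0,\dots,\idx\calA-1\}$ --- which is what your own cardinality count $(\size\calA\cdot\idx\calA)!\cdot\size\calA\cdot\idx\calA$ presupposes --- and its update must consult the edge relation between consecutive levels of the run DAG, not merely the sets $S$ and $S_c$. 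Second, the correctness of that record (that its parity behaviour along the single $\calN$-run coincides with the universal parity condition over all lineages within the branch) is precisely the technical core of the simulation theorem, and you defer it to \cite{MS95} and~\cite{Loeding2012}; that deferral is defensible here, since the paper itself does no more, but it means your text should be read as a proof outline rather than a self-contained proof.
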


Now we are ready to describe the construction of the automaton for \QnCTL k:
\begin{theorem}\label{thm-aut-qnctl}
  Given a \QnCTL k formula~$\phi$ over~$\AP$ and a set~$\calD$ of degrees,
  we~can construct a \APTA{\calD,2^{\AP}}~$\calA_\phi$ accepting exactly the
  $2^{\AP}$-labelled $\calD$-trees satisfying~$\phi$. The automaton
  $\calA_\phi$ has size $k$-exponential and number
  of priorities $(k-1)$-exponential in the size of~$\phi$.
\end{theorem}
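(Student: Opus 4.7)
I~would prove the statement by induction on the structure of $\phi$, maintaining as an invariant that for a subformula~$\psi$ of quantifier depth $q$ the constructed APTA has size $q$-exponential and index $(q-1)$-exponential in~$\size\psi$. Since $\phi \in \QnCTL k$ has quantifier depth at most~$k$, the theorem then follows. The base cases (atomic propositions and their negations) give constant-size APTAs with a constant number of priorities. For Boolean combinations and \CTL-modalities applied to subformulas for which APTAs have already been built, I~would mimic the construction of Lemma~\ref{lemma-ctl}: the CTL-to-APTA translation is essentially unchanged when some ``atomic'' subformulas are themselves represented by APTAs, and the resulting product construction (passing control to the initial state of the nested APTA whenever the outer automaton wishes to test the corresponding proposition) preserves both size and index bounds up to multiplicative constants.

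The heart of the argument is the quantifier case $\phi = \exists p.\psi$. I~would first invoke the induction hypothesis to obtain an APTA $\calA_\psi$ of size $q$-exponential and index $(q-1)$-exponential in~$\size\psi$. Then Lemma~\ref{lemma-sim} produces an equivalent non-deterministic parity tree automaton $\calN_\psi$, which costs exactly one additional exponential in size and one additional exponential in index. Finally, Lemma~\ref{lemma-proj} projects away proposition~$p$ without increasing either size or index, yielding an automaton of size $(q+1)$-exponential and index $q$-exponential, as required. The dual case $\phi = \forall p.\psi$ is handled via $\forall p.\psi \equiv \neg\exists p.\neg\psi$: the APTA for $\psi$ can be complemented by dualising its transition function and shifting the parity condition at no cost, after which the $\exists$ procedure above applies, followed by another cheap dualisation. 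Hence each quantifier incurs \emph{at most} one extra exponential, matching the bound.

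A~slightly delicate point is that the product construction gluing the \CTL skeleton automaton with the APTAs for quantified subformulas must not inadvertently inflate the index: the skeleton automaton uses a constant number of priorities (Lemma~\ref{lemma-ctl}), and priorities on its states can be chosen so that the global index is simply the maximum of the components, hence remains $(q-1)$-exponential. The main obstacle is the simulation step of Lemma~\ref{lemma-sim}, which is solely responsible for the tower of exponentials; one must carefully verify that no other step (projection, complementation, product with the \CTL skeleton) contributes an additional exponential, so that $k$~nested quantifications give exactly $k$-exponential size and $(k-1)$-exponential index, and not more.
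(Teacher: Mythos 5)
Your overall machinery is the right one, and it is essentially the paper's: build an \APTA{} for the \CTL skeleton as in Lemma~\ref{lemma-ctl}, plug in the automata for quantified subformulas by redirecting transitions to their initial states, use the simulation theorem (Lemma~\ref{lemma-sim}) followed by projection (Lemma~\ref{lemma-proj}) for existential quantification, and dualise for negation and universal quantification. The gap is in the bookkeeping, and the bookkeeping is precisely what the theorem asserts. You peel off \emph{one quantifier} per induction step and charge one exponential (one simulation) per quantifier, concluding that ``each quantifier incurs at most one extra exponential''. But the index~$k$ of \QnCTL k does not count individual quantifiers: $\QnCTL 1=\CTL[\EQnCTL 1]$ already contains formulas $\exists p_1\ldots\exists p_l.\,\psi'$ with $\psi'\in\CTL$ and $l$~arbitrary, and the theorem promises a \emph{singly} exponential automaton for these. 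Your invariant (``quantifier depth~$q$ gives $q$-exponential size'') either refers to raw quantifier nesting depth---in which case the premise ``$\phi\in\QnCTL k$ has quantifier depth at most~$k$'' is false---or refers to the \QnCTL{}-index, in which case the inductive step does not close: for $\exists p.\psi$ with $\psi$ itself of the form $\exists p'.\psi'$, both formulas have the same index, yet your construction pays an extra simulation.

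The fix, which is what the paper does, is to treat a maximal block of existential quantifiers as a single unit: decompose $\phi\in\QnCTL k$ as $\Phi[(\exists P_i.\,\psi_i)_i]$ with $\Phi\in\CTL$ and $\psi_i\in\QnCTL{k-1}$, apply the simulation theorem \emph{once} to the \APTA{} of each~$\psi_i$, and then project away the whole set~$P_i$ from the resulting \NPTA{} (Lemma~\ref{lemma-proj} takes an \NPTA{} to an \NPTA{}, so consecutive projections need no further simulation). This charges one exponential per level of the \QnCTL k hierarchy rather than per quantifier, which is exactly the claimed bound. A similar remark applies to your treatment of $\forall p.\psi$ by double dualisation: done one quantifier at a time, the intermediate complementation destroys nondeterminism and forces a re-simulation for the next universal quantifier of the same block. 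Everything else in your proposal---the product with the \CTL skeleton, the cost-free projection and dualisation, the observation that only the simulation theorem contributes exponentials---is correct and matches the paper's argument.
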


\begin{proof}
We proceed by induction over $k$.
\begin{itemize}
\item if $\phi \in \QnCTL 1$, then $\phi$ is of the form $\Phi[(\psi_i)_{1\leq
    i\leq m}]$ where $\Phi$ is a \CTL formula and $(\psi_i)_{1\leq i\leq m}$
  are \EQnCTL 1 formulas. We handle each $\psi_i$ separately. Assume that
  $\psi_i = \exists p_1\ldots\exists p_l.\ \psi'$ with $\psi' \in \CTL$. From
  Lemma~\ref{lemma-ctl}, one can build an \APTA{} $\calA_{\psi'}$ recognizing
  the $\calD$-trees satisfying~$\psi'$; moreover, $\size{\calA_{\psi'}}$ is in
  $O(\size{\psi'})$ and $\idx\calA_{\psi'}=2$. Applying Lemma~\ref{lemma-sim},
  we~get an equivalent \NPTA{}~$\calN_{\psi'}$ whose size is in
  $2^{O(\size{\psi'}\cdot \log(\size{\psi'}))}$ and number of priorities is in
  $O(\size{\psi'})$. Applying Lemma~\ref{lemma-proj} (to~$\calN_{\psi'}$ and for
  atomic propositions~$p_1,\ldots,p_l$), we~get an \NPTA{} $\calB_{\psi_i}=\tuple{Q_{\psi_i},q_0^{\psi_i},\tau_{\psi_i},\Omega_{\psi_i}}$
  recognizing the models of~$\psi_i$. The size of $\calB_{\psi_i}$ is in
  $2^{O(|\psi_i|\cdot \log(|\psi_i|))}$, and its number of priorities is in
  $O(|\psi_i|)$.

  Now to complete the construction it remains to construct the final automaton $\calA_\phi=\tuple{Q,q_0,\tau,\Omega}$. It is   based on the \APTA{} associated with the \CTL context~$\Phi[-]$
  (w.r.t.\ Lemma~\ref{lemma-ctl}) and the different \NPTA{}s built for the
  subformulas~$\psi_i$. Indeed the transition function~$\tau$ follows the
  rules of Lemma~\ref{lemma-ctl} for~$\Phi[-]$ and we just add the two
  following rules to deal with the subformulae~$\psi_i$ and their
  negations\footnote{Remember the construction for \CTL formulae assumes that
    negations precede atomic propositions.}: 
  \begin{itemize}
\item $\tau(\psi_i,\sigma) =   \tau_{\psi_i}(q_0^{\psi_i},\sigma)$, and   
\item $\tau(\non\psi_i,\sigma) =   \tau_{\bar{\psi_i}}(q_0^{\bar{\psi_i}},\sigma)$ where $\tau_{\bar{\psi_i}}$ is the transition function of     $\overline{\calB_{\psi_i}}$ (the dual of $\calB_{\psi_i}$). 
\end{itemize}
Therefore $\calA_\phi$ is an \APTA{} whose size is in
  $2^{O(|\phi|\cdot \log(|\phi|))}$ and its number of priorities is in
  $O(|\phi|)$.

\item if $\phi \in \QnCTL k$ with $k>1$, the construction follows almost the
  same steps as in the base case. Here 
  $\phi$~is of the form $\Phi[(\psi_i)_{1\leq i\leq m}]$, where $\Phi$ is a
  \CTL formula and each $\psi_i$ belongs to $\EQnCTL 1[\QnCTL {k-1}]$, \ie,
  is of the form $\exists p_1\ldots \exists p_l.\ \psi'$ with $\psi'\in \QnCTL
  {k-1}$. 

  From the~induction hypothesis, we~can build an \APTA{} $\calA_\psi'$
  recognizing the $\calD$-trees satisfying $\psi'$, and whose size is
  $(k-1)$-exponential and whose number~$d$ of priorities is
  $(k-2)$-exponential in $\size{\psi'}$.
  Applying Lemma~\ref{lemma-sim}, we get an equivalent \NPTA{} $\calN_{\psi'}$
  whose size is $k$-exponential in $\size{\psi'}$ (precisely in
  $2^{O(\size{\calA_\psi'}\cdot \log(\size{\calA_\psi'}\cdot d))}$) and whose number of
  priorities is $\size{\calA_{\psi'}}\cdot d$, \ie, $(k-1)$-exponential in
  $\size{\psi'}$. From Lemma~\ref{lemma-proj} (applied to~$\calN_{\psi'}$ and
  for propositions~$p_1,\ldots,p_l$), we~get
  an \NPTA{} $\calB_{\psi_i}$ recognizing the models of $\psi_i$. 
  Again the size and number of
  priorities of $\calB_{\psi}$ are identical to those of~$\calN_{\psi'}$. 

  Now we finish the construction as before in combining these \NPTA{}s with
  the \APTA{} provided by the \CTL context $\Phi$. This provides an \APTA{}
  $\calA_\phi$ whose size is $k$-exponential in $|\phi|$ and its number of
  priorities is $k-1$-exponential.\qed
\end{itemize}
\def\qed{}
\end{proof}

\noindent Combining this with the result of
Theorem~\ref{theorem-decision-proc}, we get our final result for
\QnCTL k:
%
\begin{theorem}
\label{thm-easy-qkctl-t} 
Model checking \QnCTL k
under the tree semantics is in \EXPTIME[k] (for positive~$k$).
\end{theorem}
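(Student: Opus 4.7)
The plan is to combine Theorem~\ref{thm-aut-qnctl} with the decision procedure of Theorem~\ref{theorem-decision-proc} in a completely routine way; everything substantial has been established in the preceding development, so the argument is essentially bookkeeping on the resource bounds.

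Given an instance $(\calS,q,\phi)$ with $\phi\in\QnCTL k$, I first let $\calD$ be the (finite) set of out-degrees occurring in~$\calS$. By definition of the tree semantics, $\calS,q\models_t\phi$ iff $\calT_\calS(q),q\models_s\phi$, and $\calT_\calS(q)$ is a $\tuple{2^{\AP},\calD}$-tree. By Theorem~\ref{thm-aut-qnctl} I can build an \APTA{\calD,2^\AP} $\calA_\phi$ accepting exactly the $\calD$-trees satisfying~$\phi$, with $\size{\calA_\phi}$ a $k$-fold exponential in $\size\phi$ and $\idx{\calA_\phi}$ a $(k-1)$-fold exponential in~$\size\phi$. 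So the model-checking problem reduces to deciding whether $\calA_\phi$ accepts the regular tree $\calT_\calS(q)$ represented by the finite Kripke structure~$\calS$ (pointed at~$q$).

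Next, I invoke Theorem~\ref{theorem-decision-proc}: this acceptance problem can be solved in time $O((\size{\calA_\phi}\cdot\size\calS)^{\idx{\calA_\phi}})$. Writing $2_k(n)$ for the tower of exponentials of height~$k$ over~$n$, and setting $S=2_k(\size\phi)$ and $d=2_{k-1}(\size\phi)$ as upper bounds on the size and index of~$\calA_\phi$, the running time is bounded by
\[
  (S\cdot\size\calS)^{d} \;=\; 2^{d\cdot(\log S+\log\size\calS)}
  \;\leq\; 2^{2_{k-1}(\size\phi)\cdot(2_{k-1}(\size\phi)+\log\size\calS)}.
\]
The exponent is at most polynomial in $2_{k-1}(\size\phi+\size\calS)$, so the whole quantity is bounded by $2_k(p(\size\phi+\size\calS))$ for some polynomial~$p$. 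Hence the procedure runs in $k$-fold exponential time in the size of the input, placing the problem in~$\EXPTIME[k]$.

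There is no real obstacle: both the construction of $\calA_\phi$ and the decision procedure for regular-tree acceptance were developed previously, and one only needs to check that composing a $k$-exponential automaton with a $(k-1)$-exponential index against a polynomial-size Kripke structure via an algorithm of complexity $(\size{\calA}\cdot\size\calS)^{\idx\calA}$ yields a $k$-exponential overall bound, which the calculation above confirms.
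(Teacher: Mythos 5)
Your proposal is correct and follows exactly the paper's route: the paper derives this theorem as an immediate combination of Theorem~\ref{thm-aut-qnctl} (the $k$-exponential-size, $(k-1)$-exponential-index \APTA{} for \QnCTL k) with the acceptance check of Theorem~\ref{theorem-decision-proc}, which is precisely what you do. Your explicit verification that $(\size{\calA_\phi}\cdot\size\calS)^{\idx{\calA_\phi}}$ stays $k$-fold exponential is sound and merely spells out the bookkeeping the paper leaves implicit.
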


The proof is easily adapted to the quantified extensions of \CTL*:
\begin{restatable}{theorem}{thmEQkCTL*t}
\label{thm-qkctl*-t} 
Model checking \QnCTL k* 
under the tree semantics is in \EXPTIME[(k+1)] (for positive~$k$).
\end{restatable}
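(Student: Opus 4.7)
The plan is to mimic the construction of Theorem~\ref{thm-aut-qnctl}, replacing the base \CTL-to-\APTA{} translation of Lemma~\ref{lemma-ctl} with the standard (exponential-size) \CTL*-to-tree-automaton translation. Concretely, for any \CTL* formula~$\psi$ and any finite set of degrees~$\calD$, one can build an \APTA{\calD,2^{\AP}}~$\calA_\psi$ recognising exactly the $2^{\AP}$-labelled $\calD$-trees satisfying~$\psi$, whose size is in $2^{O(|\psi|)}$ and whose index is polynomial in~$|\psi|$; this is the standard branching-time cost of absorbing \LTL-style path formulas into a tree automaton (see \eg~\cite{KVW00}). Plugging this automaton in place of the \CTL one of Lemma~\ref{lemma-ctl} is essentially the only change needed.

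Given this base case, the argument proceeds by induction on~$k$, exactly mirroring the proof of Theorem~\ref{thm-aut-qnctl}. A \QnCTL k* formula~$\phi$ decomposes as $\Phi[(\psi_i)_{1\leq i\leq m}]$ where $\Phi$~is a \CTL* formula and each~$\psi_i$ is of the form $\exists p_1\ldots\exists p_l.\,\psi'$ with $\psi'\in\QnCTL {k-1}*$. For each~$\psi'$ we build an \APTA{} by the induction hypothesis, turn it into an \NPTA{} via the simulation theorem (Lemma~\ref{lemma-sim}), perform the existential projection on propositions~$p_1,\ldots,p_l$ (Lemma~\ref{lemma-proj}) to obtain an \NPTA{} recognising the models of~$\psi_i$, and finally assemble the final automaton~$\calA_\phi$ by plugging these sub-automata into the \APTA{} built for the \CTL* context~$\Phi$, exactly as in Theorem~\ref{thm-aut-qnctl}.

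The only place where the complexity differs from the \QnCTL k case is this base case: starting from an automaton of size $2^{O(|\phi|)}$ rather than $O(|\phi|)$ shifts every level of the induction up by one exponential. A~direct computation, using that each simulation step exponentiates the size and polynomially blows up the index (Lemma~\ref{lemma-sim}), shows that~$\calA_\phi$ has size that is $(k+1)$-exponential and index~$d$ that is $k$-exponential in~$|\phi|$. Checking~$\calA_\phi$ against the input Kripke structure~$\calS$ then takes time $(\size{\calA_\phi}\cdot\size{\calS})^d$ by Theorem~\ref{theorem-decision-proc}, which is $(k+1)$-exponential in the size of the input, placing the problem in \EXPTIME[(k+1)].

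The main obstacle I expect is purely bookkeeping: ensuring that the \CTL*-to-automaton construction picked at the base case yields an \APTA{} with \emph{polynomial} (not exponential) index, so that the first application of the simulation theorem produces a single extra exponential in size rather than two. Working through a hesitant alternating tree automaton of linear size with bounded index, and then converting to an \APTA{} before entering the induction, is the cleanest way to keep the accounting correct and to obtain the stated bound uniformly in~$k$.
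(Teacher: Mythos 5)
Your proposal follows essentially the same route as the paper's proof: both replace the base \CTL-to-automaton translation of Lemma~\ref{lemma-ctl} with the exponential-size \CTL* construction of~\cite{KVW00}, then run the same induction (simulation theorem, projection, recombination with the context automaton) to obtain an \APTA{} of $(k+1)$-exponential size and $k$-exponential index, and conclude via Theorem~\ref{theorem-decision-proc}. The bookkeeping concern you flag about keeping the base-case index polynomial is exactly the point the paper relies on implicitly, and your accounting matches its stated bounds.
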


\begin{proof}
  The proof proceeds along the same lines as in the proof of
  Theorem~\ref{thm-aut-qnctl}. However, we have to build automata for a \CTL*
  formula in the base case, so that the automaton for a \QnCTL 1* formula has
  size 2-exponential and number of priorities exponential.

  During the induction step, we~consider automata for \QnCTL {k-1}* formulas,
  apply the simulation theorem and projection, and combine them  with
  an (exponential-size) automaton for a \CTL* formula~\cite{KVW00}. One can easily see that
  the resulting automaton has size $k+1$-exponential, and number of priorities
  $k$-exponential. Theorem~\ref{theorem-decision-proc} then entails the result.
\end{proof}

From Theorems~\ref{thm-hard-eqkctl-t}, \ref{thm-k+1hard},
\ref{thm-easy-qkctl-t} and~\ref{thm-qkctl*-t}, we obtain:

\begin{corollary}
\label{coro-qkctl-t} 
\label{coro-qkctl*-t} 
Under the tree semantics, for any~$k>0$, 
model checking  \EQnCTL k, \AQnCTL k and \QnCTL k
is \EXPTIME[k]-complete, and model checking \EQnCTL k*, \AQnCTL k* and \QnCTL k*
is \EXPTIME[(k+1)]-complete.

\end{corollary}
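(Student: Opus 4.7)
The plan is to obtain the corollary by straightforward assembly of the lower- and upper-bound theorems just proved, together with the syntactic inclusions noted in the Remark at the end of Section~\ref{ssec-qnctl} and the fact that the deterministic class \EXPTIME[k] is closed under complement.

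First I would handle the \EQnCTL k case. The \EXPTIME[k] lower bound is exactly Theorem~\ref{thm-hard-eqkctl-t}. For the upper bound, since \EQnCTL k is syntactically contained in \QnCTL k, any model-checking instance for \EQnCTL k is a model-checking instance for \QnCTL k, so Theorem~\ref{thm-easy-qkctl-t} places the problem in \EXPTIME[k]. This yields \EXPTIME[k]-completeness.

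Next I would handle \AQnCTL k. A formula $\phi\in\AQnCTL k$ is (up to dualization of quantifiers through negations) the negation of an \EQnCTL k formula, hence the model-checking problem for \AQnCTL k is the complement of the model-checking problem for \EQnCTL k. Since \EXPTIME[k] is a deterministic time class closed under complement, it coincides with its complement class, so hardness and membership for \EQnCTL k transfer verbatim to \AQnCTL k. For \QnCTL k itself, the lower bound comes from the syntactic inclusion $\EQnCTL k \subseteq \QnCTL k$ and the upper bound is Theorem~\ref{thm-easy-qkctl-t} again.

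Finally, the same three-step argument applies one level up: Theorem~\ref{thm-k+1hard} gives \EXPTIME[(k+1)]-hardness of \EQnCTL k*, and Theorem~\ref{thm-qkctl*-t} gives the matching \EXPTIME[(k+1)] upper bound for \QnCTL k*, which by the syntactic inclusions $\EQnCTL k*, \AQnCTL k* \subseteq \QnCTL k*$ and closure of \EXPTIME[(k+1)] under complement transfers to all three star-fragments. There is no real obstacle here; the only point to be careful about is that the reduction underlying Theorem~\ref{thm-hard-eqkctl-t} (resp.~Theorem~\ref{thm-k+1hard}) is already phrased as producing an existentially-quantified formula, so that the lower bound for \AQnCTL k (resp.~\AQnCTL k*) is obtained by complementing both the instance and the target problem rather than by a new reduction.
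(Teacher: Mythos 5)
Your proposal is correct and is essentially the paper's own argument: the corollary is stated there as an immediate assembly of Theorems~\ref{thm-hard-eqkctl-t}, \ref{thm-k+1hard}, \ref{thm-easy-qkctl-t} and~\ref{thm-qkctl*-t}, using the syntactic inclusions of \EQnCTL k and \AQnCTL k into \QnCTL k (and likewise for the star fragments). Your explicit handling of \AQnCTL k via dualization and closure of \EXPTIME[k] under complement just spells out what the paper leaves implicit.
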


It follows: 
\begin{restatable}{theorem}{thmEQCTLt}
\label{thm-eqctl-t} 
\label{thm-qctl-t} 
\label{thm-eqctl*-t}  
\label{thm-qctl*-t}  
Model checking \EQCTL, \QCTL, \EQCTL* and \QCTL*  
under the tree semantics is \TOWER-complete. 
\end{restatable}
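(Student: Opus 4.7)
The plan is to derive the theorem by assembling the two sides of the complexity bound from results already in the excerpt. The class \TOWER collects exactly those problems solvable in time bounded by a tower of exponentials of height depending (computably) on the input, so membership amounts to showing that each instance of the model checking problem falls into some level $\EXPTIME[k]$ with $k$ a function of the input, while hardness amounts to reducing from every level of the exponential hierarchy.

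For the upper bound, I would argue as follows. Given a \QCTL\textsuperscript{*} formula~$\phi$, let $k$ be its quantifier-alternation depth (or, more crudely, its quantifier depth), read off directly from~$\phi$. Then $\phi$ belongs to~\QnCTL{k}*, and Corollary~\ref{coro-qkctl*-t} gives an algorithm deciding $\calS,q\models_t \phi$ in time $k+1$-exponential in $\size\phi + \size\calS$. Since $k$ is bounded by $\size\phi$, the whole procedure runs in elementary time whose tower height is bounded by a computable function of the input, so the problem lies in \TOWER. The same argument applies a~fortiori to \QCTL, \EQCTL\textsuperscript{*} and \EQCTL, which are syntactic fragments of \QCTL\textsuperscript{*}.

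For the lower bound, I would use the hardness results already proven for the bounded-alternation fragments. By Theorem~\ref{thm-hard-eqkctl-t}, for every positive~$k$, model checking \EQnCTL{k} under the tree semantics is $\EXPTIME[k]$-hard; and \EQnCTL{k} is a syntactic fragment of \EQCTL. Consequently model checking \EQCTL is $\EXPTIME[k]$-hard for every~$k$, which is precisely what it means to be \TOWER-hard. Since \EQCTL is included in \QCTL, \EQCTL\textsuperscript{*} and \QCTL\textsuperscript{*}, the hardness lifts to all four logics.

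Putting the two halves together yields \TOWER-completeness for each of \EQCTL, \QCTL, \EQCTL\textsuperscript{*} and \QCTL\textsuperscript{*}. I do not anticipate any real obstacle: everything reduces to taking the supremum over $k$ of the results of Corollary~\ref{coro-qkctl-t} and Corollary~\ref{coro-qkctl*-t} on the upper-bound side, and to observing the syntactic inclusions $\EQnCTL{k}\subseteq\EQCTL\subseteq\QCTL\subseteq\QCTL\textsuperscript*$ and $\EQCTL\subseteq\EQCTL\textsuperscript*$ on the lower-bound side. The only thing worth being slightly careful about is the definition of \TOWER one adopts, to make sure that ``hard for every level of the elementary hierarchy'' really is the notion of hardness, which is the standard convention.
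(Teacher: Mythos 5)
Your proposal matches the paper's (implicit) argument: the theorem is stated with only ``It follows'' after Corollary~\ref{coro-qkctl-t}, and the intended justification is exactly your stratification by quantifier depth for membership together with the union over~$k$ of the \EXPTIME[k]-hardness results for hardness. The point you flag at the end is indeed the only subtlety: as the paper notes in Appendix~\ref{app-complex} (citing Schmitz), \EXPTIME[k]-hardness for all~$k$ yields \TOWER-hardness only when the reductions are \emph{uniform} in~$k$, which is the case for the constructions of Theorem~\ref{thm-hard-eqkctl-t}.
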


\subsubsection{Program-complexity.} \label{tree-prog-comp}
When fixing the formula, the problem becomes much easier (in~terms of its
theoretical complexity): given~$\phi$, one~can build an
automaton~$\calA^2_\phi$ such that for every Kripke  structure~$\calS$,
deciding whether $\calS,q \sat_t \phi$ is equivalent to deciding whether 
the  unfolding of a \emph{variant} of $\calS$ is accepted by $\calA^2_\phi$   
which can be done in polynomial time (because 
$|\calA^2_\phi|$ is assumed to have constant size). 

First consider a finite Kripke structure $\calS=\tuple{Q,R,\ell}$. We define
$\calS_2=\tuple{ Q_2,R_2,\ell_2}$ with $Q_2 = Q \cup \Qint$, $\ell_2(q) =
\ell(q)$ for $q\in Q$ and $\ell_2(q)=\{\pint\}$ for $q\in \Qint$ ($\pint$ is a
fresh atomic proposition). Finally \Qint and $R_2$ are defined in order to
mimic $\calS$-transitions with nodes of degree~$2$: if $q_1,\ldots,q_k$ (with
$k>1$) are the $k$ successors of~$q$ in~$\calS$, we~have a transition
$(q,q_\epsilon)\in R_2$ and $q_\epsilon$ is the root of a complete binary tree
with $k$ leaves $q_1,\ldots,q_k$. For this, we need $k-1$ internal nodes
(including~$q_\epsilon$). If $q$ has a unique successor~$q'$ in~$\calS$, we
just add the transition $(q,q')$ to~$R_2$. Thus the size of~\Qint is $\sum_{q
  \in Q} \deg(x)-1$.

Now consider a \QCTL formula~$\Phi$ (that does not use~$\pint$). 
Let~$\widehat{\Phi}$ be the \QCTL formula defined as follows:
\begin{xalignat*}2
\widehat{\Ex \phi \Until \psi} &=  \Ex (\pint \ou
  \widehat{\phi}) \Until (\non\pint \et \widehat{\psi}) & 
\widehat{\phi \et \psi} &= \widehat{\phi} \et \widehat{\psi}
\\
  \widehat{\All \phi \Until \psi} &= \All (\pint \ou
  \widehat{\phi}) \Until (\non\pint \et \widehat{\psi}) & 
\widehat{\phi \ou \psi} & = \widehat{\phi} \ou \widehat{\psi}
\\
  \widehat{\EX \phi} & = \EX \Ex\, \pint 
   \Until (\non\pint \et \widehat{\phi}) & 
\widehat{P} & = P 
\\
  \widehat{\exists P.\ \phi} & =  \exists P.\ \widehat{\phi} &   \widehat{\non \phi} & = \non \widehat{\phi} 
\end{xalignat*}

The following lemma which relates the truth value of~$\Phi$ on~$\calS$ with that
of~$\widehat{\Phi}$ on~$\calS_2$:
\begin{lemma}
  For any Kripke structure $\calS=\tuple{Q,R,\ell}$, for any \QCTL
  formula~$\Phi$, and for any~$q\in Q$, we~have
$\calS, q \sat_t \Phi$ 
if, and only~if,
$\calS_2,q \sat_t \widehat{\Phi}$.
\end{lemma}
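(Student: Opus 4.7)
The proof will proceed by structural induction on $\Phi$. I would first set up the natural bijection $\alpha$ between nodes of the unfolding $\calT_\calS(q)$ and the real (i.e.\ non-$\pint$-labeled) nodes of $\calT_{\calS_2}(q)$: $q$~maps to~$q$, and inductively the children $y_0,\ldots,y_{d-1}$ of~$x$ in $\calT_\calS(q)$ correspond to the real nodes reached from~$\alpha(x)$ in $\calT_{\calS_2}(q)$ either by a single edge (when $d=1$) or through the inserted binary subtree of $\pint$-labeled intermediate nodes (when $d>1$). By the definition of~$\ell_2$, $\alpha$ preserves the $\AP$-labeling, and it extends to a bijection between infinite paths: every infinite path in $\calT_{\calS_2}(q)$ visits real nodes infinitely often (because each inserted binary subtree has at most $d-1$ internal vertices), so deleting $\pint$-steps yields an infinite path in $\calT_\calS(q)$; conversely every $\calT_\calS(q)$-path lifts uniquely by routing through the deterministic $\pint$-path between consecutive real nodes.

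Atomic propositions and boolean combinations are then immediate. For $\widehat{\EX\phi}=\EX(\Ex\,\pint\Until(\non\pint\et\widehat{\phi}))$, I would argue that the inner until is satisfied at a real immediate successor $\alpha(y)$ iff $\widehat{\phi}$ holds there (the ``step-$0$'' case of the until) and at the $\pint$-root $q_\epsilon$ of a binary subtree iff some leaf $\alpha(y_i)$ satisfies $\widehat{\phi}$; either way this mirrors precisely the children of~$x$ in $\calT_\calS(q)$, so the induction hypothesis on~$\phi$ closes the case. For the until modalities, the path-bijection yields that the translation $(\pint\ou\widehat{\phi})\Until(\non\pint\et\widehat{\psi})$ evaluated on a lifted path $\pi_2$ is equivalent to $\phi\Until\psi$ on the projected path~$\pi$: at internal $\pint$-nodes the first disjunct is trivially true, at real nodes only the translated subformula matters, and the $\non\pint$ guard on the right forces the witnessing index to be a real position. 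This handles both $\Ex\Until$ and, via the same bijection but quantified universally, $\All\Until$.

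The proposition-quantifier case $\widehat{\exists P.\phi}=\exists P.\widehat{\phi}$ is the delicate point and will rely on a sub-invariant: for every translated formula $\widehat{\Phi}$, the truth value at a real node of any extended tree depends only on the $(\AP\cup\{\pint\})$-labeling of real nodes plus the $\pint$-labeling of internal nodes, so that internal-node labelings over atomic propositions other than $\pint$ are irrelevant. I would verify this invariant by a routine secondary induction, observing that every atomic test $\widehat{P}=P$ inside $\widehat{\Phi}$ is evaluated at a non-$\pint$ position, as enforced syntactically by the $\non\pint\et\cdot$ guards in the translations of $\EX$ and the until modalities. Given this invariant, the forward direction of the quantifier case lifts a witnessing $P$-relabeling of $\calT_\calS(q)$ along $\alpha$ to a $P$-relabeling of $\calT_{\calS_2}(q)$ (with internal $\pint$-nodes labeled arbitrarily), and invokes the induction hypothesis on~$\phi$; the converse projects any witnessing $P$-relabeling of $\calT_{\calS_2}(q)$ down via $\alpha^{-1}$. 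The main obstacle I expect is organizing this $\exists P$ case together with the sub-invariant so that the induction proceeds cleanly through proposition quantifiers nested inside until modalities.
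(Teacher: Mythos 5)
Your proof is correct and follows the same route as the paper, which simply states that the lemma is proved ``by structural induction over $\Phi$'' without supplying details; your write-up fills in exactly those details. The node/path bijection, the analysis of the $\pint$-guards in the translated modalities, and the sub-invariant that translated formulas are insensitive to the non-$\pint$ labelling of internal nodes are all sound and are precisely what a complete version of the paper's one-line proof would contain.
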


\begin{proof}
  The proof is done by structural induction over $\Phi$.
\end{proof}

\begin{restatable}{theorem}{thmprogALLt}
\label{thm-prog-all-t}
  Under the tree semantics, the program-complexity of model-checking
  is \PTIME-complete for all our fragments between \EQnCTL 1 and \AQnCTL 1 to
  \QCTL*.
\end{restatable}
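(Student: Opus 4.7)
The plan is to establish both directions. For the lower bound, I would rely on the fact that the program complexity of plain \CTL model checking is \PTIME-hard (by a standard reduction from, \eg, the alternating reachability or circuit-value problem). Since \CTL is syntactically included in each of the fragments mentioned in the statement (a \CTL formula is a \QCTL formula with no propositional quantifier), the hardness transfers upward; in particular, there is a fixed \CTL formula whose program complexity is already \PTIME-hard, which serves as a witness for every fragment between \EQnCTL 1 and \QCTL*.

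For the upper bound, the main work is to do \QCTL* model checking in polynomial time when the formula is fixed. I would proceed via the automata-theoretic machinery built above, using the binary-branching normalization $\calS\mapsto\calS_2$ and the rewriting $\Phi\mapsto\widehat{\Phi}$ introduced just before the statement. Given the fixed \QCTL* formula~$\Phi$, the corresponding formula~$\widehat{\Phi}$ is also fixed, and the preceding lemma ensures $\calS,q\sat_t\Phi$ iff $\calS_2,q\sat_t\widehat{\Phi}$. The crucial point is that $\calS_2$ always has node degrees in the fixed set $\calD=\{1,2\}$, independently of~$\calS$: this is precisely why we performed the rewriting, since Theorem~\ref{thm-aut-qnctl} (and its \CTL* variant used in the proof of Theorem~\ref{thm-qkctl*-t}) builds an automaton that depends on the set of degrees.

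I would then invoke the construction of Theorem~\ref{thm-qkctl*-t} for the fixed degree set $\calD=\{1,2\}$ to obtain an alternating parity tree automaton $\calA^2_{\widehat{\Phi}}$ over $\tuple{2^\AP,\calD}$-trees accepting exactly those trees satisfying~$\widehat{\Phi}$. Because $\widehat{\Phi}$ is fixed, both the size $\size{\calA^2_{\widehat{\Phi}}}$ and the index $\idx{\calA^2_{\widehat{\Phi}}}$ are constants (independent of the input Kripke structure). It then suffices to check whether the regular tree $\calT_{\calS_2}(q)$ is accepted by $\calA^2_{\widehat{\Phi}}$; by Theorem~\ref{theorem-decision-proc}, this can be decided in time $O((\size{\calA^2_{\widehat{\Phi}}}\cdot\size{\calS_2})^{\idx{\calA^2_{\widehat{\Phi}}}})$. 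Since the exponent and the automaton size are constants, and $\size{\calS_2}$ is polynomial in $\size\calS$, the whole procedure runs in polynomial time, establishing \PTIME-membership.

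The only subtle point — and the main obstacle in absence of the preceding preparation — is that Theorem~\ref{thm-aut-qnctl} produces an automaton whose transition function explicitly depends on the set of branching degrees of the tree. If one applied it to the raw unfolding $\calT_\calS(q)$, the automaton would grow with $\calS$, ruining the constant-size argument. The binary normalization~$\calS_2$ together with the rewriting~$\widehat{\Phi}$ (which simulates arbitrary successors via $\pint$-padded paths in the binary tree) defuses this dependency, after which the constant-size/constant-index argument goes through verbatim, uniformly for every fragment from \EQnCTL 1 and \AQnCTL 1 up to \QCTL*.
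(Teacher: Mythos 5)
Your upper-bound argument is essentially the paper's: reduce to the binary-branching structure~$\calS_2$ and the rewritten formula~$\widehat\Phi$, so that the automaton of Theorem~\ref{thm-aut-qnctl} is built once for the fixed degree set $\calD=\{1,2\}$ and has constant size and index, after which Theorem~\ref{theorem-decision-proc} gives a polynomial-time acceptance test. You correctly identify why the normalization is needed. (One small point: the rewriting $\Phi\mapsto\widehat\Phi$ is only defined for \QCTL formulas, so for a \QCTL* input one must first invoke Proposition~\ref{qctl-qctls} to pass to an equivalent \QCTL formula, as the paper does; since the formula is fixed, the blow-up of that translation is harmless.)

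The lower bound, however, has a genuine gap: your premise that the \emph{program} complexity of plain \CTL model checking is \PTIME-hard is false. For a \emph{fixed} \CTL (indeed \CTL*) formula, model checking is \NLOGSPACE-complete --- the paper itself cites this fact (from~\cite{Sch03a}) in the proof of Theorem~\ref{thm-prog-eqkctl-s}. \PTIME-hardness of \CTL model checking is a statement about \emph{combined} complexity, where the formula grows with the instance; once the formula is frozen, the nesting depth of temporal operators is bounded and the problem collapses into \NLOGSPACE. Consequently, hardness cannot be inherited from the quantifier-free fragment: the propositional quantifier must do real work. This is exactly what the paper's proof supplies: it reduces the monotone \CIRCUITVALUE problem to model checking the \emph{fixed} \EQnCTL 1 formula $\exists p.\ (p\et\phi)$, where the existential quantifier guesses a labelling of the circuit's gates with their truth values and the fixed \CTL formula
\[
\phi = \All\G\Bigl[(1\thn p)\et(0\thn\non p)\et\bigl(\circledET\thn(p\iff\All\X p)\bigr)\et\bigl(\circledOU\thn(p\iff\Ex\X p)\bigr)\Bigr]
\]
checks that the guessed labelling is consistent with the gate semantics; the dual formula $\forall p.(\phi\thn p)$ handles \AQnCTL 1. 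Without such a construction your argument establishes only \NLOGSPACE-hardness, which does not match the \PTIME upper bound.
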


\begin{proof}
  Let $\Psi$ be a \QCTL* formula. By Proposition~\ref{qctl-qctls}, we
  can build an equivalent \QCTL formula $\Phi$. Now consider
  $\widehat{\Phi}$ as above. Let $\calA_{\widehat{\Phi}}$ be the
  alternating tree automaton built for proving
  Theorem~\ref{thm-easy-qkctl-t} when $\calD=\{1,2\}$. 
It is easy to see that
  $\calA_{\widehat{\Phi}}$ recognizes the $2^\AP$-labeled trees having
  nodes with degrees in $\{1,2\}$ and satisfying $\widehat{\Phi}$.

Now consider a model-checking instance $\calS, q \sat_t \Phi$, it
clearly reduces to $\calS_2,q \sat_t \widehat{\Phi}$ and then to
$\calT_{\calS_2}(q) \in \calL(\calA_{\widehat{\Phi}})$. This last
problem can be solved in polynomial time (in $|\calS_2|$) when the size of
$\calA_{\widehat{\Phi}}$ is assumed to be fixed: following
Theorems~\ref{theorem-decision-proc} and~\ref{thm-aut-qnctl}, the polynomial
has degree $(k-1)$-exponential in~$\size\Phi$ when~$\Phi\in\QnCTL k$. 

\medskip 

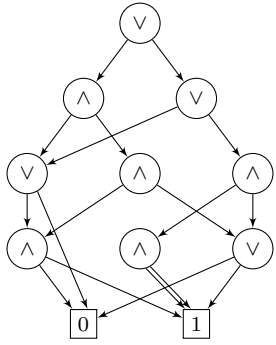
\begin{figure}
\centering
\begin{tikzpicture}[xscale=1.5,yscale=.5]
\draw (0,0) node[circle,draw,inner sep=1mm] (A) {$\scriptstyle\vee$};
\draw (.5,-2) node[circle,draw,inner sep=1mm] (B1) {$\scriptstyle\vee$};
\draw (-.5,-2) node[circle,draw,inner sep=1mm] (B2) {$\scriptstyle\wedge$};
\draw (1,-4) node[circle,draw,inner sep=1mm] (C1) {$\scriptstyle\wedge$};
\draw (0,-4) node[circle,draw,inner sep=1mm] (C2) {$\scriptstyle\wedge$};
\draw (-1,-4) node[circle,draw,inner sep=1mm] (C3) {$\scriptstyle\vee$};
\draw (1,-6) node[circle,draw,inner sep=1mm] (D1) {$\scriptstyle\vee$};
\draw (0,-6) node[circle,draw,inner sep=1mm] (D2) {$\scriptstyle\wedge$};
\draw (-1,-6) node[circle,draw,inner sep=1mm] (D3) {$\scriptstyle\wedge$};
\draw (.5,-8) node[draw,inner sep=1mm] (E1) {$\scriptstyle 1$};
\draw (-.5,-8) node[draw,inner sep=1mm] (E2) {$\scriptstyle 0$};
\path[-latex'] (A) edge (B1) edge (B2)
  (B1) edge (C1) edge (C3) 
  (B2) edge (C2) edge (C3)
  (C1) edge (D1) edge (D2)
  (C2) edge (D1) edge (D3)
  (C3) edge (D3) (C3.-60) edge (E2.75)
  (D3) edge (E2) edge (E1)
  (D2.-73) edge (E1.127) (D2.-60) edge (E1.110)
  (D1) edge (E2) edge (E1);
\end{tikzpicture}
\caption{A monotone boolean circuit}\label{fig-CV}
\end{figure}
We~now prove hardness in \PTIME, by reducing the \CIRCUITVALUE problem
(actually a simplified version of~it, with no negation~\cite{GJ79}) to a
model-checking problem for a fixed formula in \EQnCTL 1. Pick a monotone
circuit with fan-out~$2$ (\ie, an acyclic Kripke structure in which all states
are labelled with~$\circledOU$ or~$\circledET$ and have two outgoing edges,
except for two ``terminal'' states respectively labelled with~$0$ and~$1$, and
which only have a self-loop as outgoing edges). The value of the circuit is the
value of its initial node, defined in the natural way. See~Fig.~\ref{fig-CV}
for an instance of \CIRCUITVALUE (self-loops omitted) whose initial node
evaluates to~$1$. 

Consider the \CTL  formula
\[
\phi =  \All\G{}\Bigl[(1\thn p) \et (0\thn \non p)
 \et \bigl(\circledET \thn (p \iff \All\X p)\bigr)
 \et \bigl(\circledOU \thn (p  \iff \Ex\X p)\bigr)
\Bigr]
\]
and choose a labelling of the circuit with a new atomic proposition~$p$
satisfying~$\phi$ (in~the initial state). By~an easy induction, we get that a
state is labelled with~$p$ if, and only~if, its value is~$1$. As a
consequence, the circuit has value one if, and only if, the initial state of
its associated Kripke structure satisfies the fixed \EQnCTL 1 formula $\exists
p.\ (p \et \phi)$. 
The same equivalence holds for the \AQnCTL 1 formula $\forall p.(\phi \thn
p)$, as it is easily seen that only one $p$-labelling satisfies~$\phi$. 
\end{proof}

\subsubsection{Formula-complexity.}
The reductions used in general cases can be made to work with a fixed
model. Thus we have:

\begin{restatable}{theorem}{thmformALLkt}
\label{thm-form-all-k-t}
Under the tree semantics, the formula-complexity of model-checking is
\EXPTIME[k]-complete for \EQnCTL k and \QnCTL k, and it is
\EXPTIME[(k+1)]-complete for \EQnCTL k* and \QnCTL k*, for positive~$k$.
\end{restatable}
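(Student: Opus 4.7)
The upper bounds are immediate from the combined-complexity results established in Corollary~\ref{coro-qkctl-t}: fixing the model only makes the input smaller, so the same algorithms run within the same asymptotic bounds.

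For the lower bounds, the plan is to revisit the reductions of Theorems~\ref{thm-hard-eqkctl-t} and~\ref{thm-k+1hard} and observe that they already nearly yield fixed-model hardness. Recall that in the proof of Theorem~\ref{thm-hard-eqkctl-t}, the Kripke structure $\calS_\calM$ depends only on the state set $Q$ and tape alphabet $\Sigma$ of the alternating Turing machine $\calM$, and not on the input word~$y$: its state space is $(Q\cup\{\varepsilon\})\times(\Sigma\cup\{\circ\})\cup\{\#\}$ with the complete transition relation. I~therefore fix $\calM$ once and for all to be a universal alternating Turing machine $\calM_U$ whose acceptance problem---parameterised by the input word---is already \EXPTIME[k]-complete on tapes of size $E(k-1,|y|)$ (the existence of such a universal machine follows from standard padding and the \EXPTIME[k]-completeness of $k$-\textsf{EXPSPACE}-bounded alternating acceptance). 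The corresponding $\calS_{\calM_U}$ is a \emph{fixed} finite Kripke structure, independent of~$y$.

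Given an input word $y$ of length $n$, I then build, in polynomial time in~$n$, an \EQnCTL k formula $\phi_y$ that carries out the reduction of Theorem~\ref{thm-hard-eqkctl-t} for $\calM_U$ on input~$y$. All the ingredients of that reduction are already formula-side: the $\mathsf{delimiters}$, $\mathsf{yardstick}_k^n$, $\mathsf{graduation}_k$, $\mathsf{counter}_k$ and slicing formulas depend on~$n$ but not on the model; the existential quantification over a fresh proposition $a$ that selects the accepting sub-tree is part of the formula; and the only genuinely input-dependent part---the constraint that the first configuration contains $y$ with the head on the first cell in state $q_0$---is a straightforward \CTL conjunction of length $O(n)$ using the atomic propositions that label $\calS_{\calM_U}$. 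Correctness is inherited verbatim from Theorem~\ref{thm-hard-eqkctl-t}, giving \EXPTIME[k]-hardness of the formula complexity of \EQnCTL k on the fixed model $\calS_{\calM_U}$. \EXPTIME[k]-hardness for \QnCTL k is then immediate from the syntactic inclusion $\EQnCTL k \subseteq \QnCTL k$ noted in Section~\ref{ssec-qnctl}.

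For the starred fragments I apply exactly the same recipe to the construction of Theorem~\ref{thm-k+1hard}, whose \CTL*-based $\mathsf{yardstick}_0^n$ gains one extra exponential: fixing $\calM$ to a universal alternating TM for $(k+1)$-exponential space and letting the formula carry the $n$-dependent machinery and the input encoding yields \EXPTIME[(k+1)]-hardness for \EQnCTL k* on a fixed model, and hence for \QnCTL k*. The principal thing to check---and the only non-routine point---is that moving the input encoding from the model to the formula does not increase the propositional-quantifier alternation of the resulting formula: the input-dependent piece is purely \CTL (resp.\ \CTL*), so it sits under no new propositional quantifier and the alternation depth remains exactly that of the original reduction. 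Combined with the matching upper bounds, this gives the claimed completeness results.
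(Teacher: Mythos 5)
Your proposal is correct, but it reaches the lower bounds by a genuinely different route than the paper. The paper does \emph{not} fix the Turing machine: it keeps $\calM$ arbitrary and instead replaces $\calS_\calM$ by a single two-state, completely \emph{unlabelled} clique $\calS$ (whose unwinding is the unlabelled full binary tree), then pushes everything that used to live in the model into the formula --- an extra outermost block of existential quantifiers guesses the propositions $\pint,q_0,\dots,q_l,\alpha,\beta,\#$, a \CTL conjunct $\Phi_b$ forces this guessed labelling to simulate the states and the (non-binary) branching of $\calS_\calM$ via $\pint$-labelled intermediary nodes, and $\mathsf{yardstick}_0^n$ is rewritten to step over those intermediary nodes. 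Your approach instead observes that $\calS_\calM$ already depends only on $\calM$ and not on the input word, so fixing $\calM$ to a machine with a \EXPTIME[k]-hard acceptance problem makes the original reduction a fixed-model reduction verbatim. Both arguments are sound (your parenthetical ``$k$-\textsf{EXPSPACE}-bounded'' should read $\EXPSPACE[(k-1)]$-bounded, and the extra guessed propositions must be absorbed into the outermost existential block so the alternation count is unaffected --- which you do check). What each buys: yours is lighter, needing none of the $\Phi_b$/$\pint$ machinery; the paper's yields one and the same trivial unlabelled model for every $k$ and every fragment, which is what lets the subsequent \TOWER-completeness of the formula complexity for full \QCTL (Theorem~\ref{thm-form-EQCTL-t}) follow as an immediate uniform consequence, whereas with your construction the fixed model varies with $k$ and that corollary would need an additional uniformisation step (e.g.\ a single universal machine across all levels, or falling back on the two-state trick).
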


\begin{proof}
  Membership in \EXPTIME[k] (resp.~in \EXPTIME[(k+1)]) follows from the
  general case. We can adapt the hardness proof for \EQnCTL k
  (Theorem~\ref{thm-hard-eqkctl-t}) to work with a fixed Kripke
  structure~$\calS$. Let $\calM$ be a \EXPSPACE[(k-1)] alternating Turing
  machine $\tuple{Q,q_0,\delta,F}$ with $Q=\{q_1,\ldots,q_l\}$ and let $y$ be
  an input word over~$\Sigma$. Assume (w.l.o.g.) that $\Sigma =
  \{\alpha,\beta\}$.
  In~order to reuse the reduction performed for the general case, we~need to
  encode the unfolding of $\calS_\calM$ (defined in the proof of
  Theorem~\ref{thm-hard-eqkctl-t}) with an unfolding of some fixed Kripke
  structure~$\calS$ (hence with a fixed arity and a fixed labelling). For this we use
  the Kripke structure $\calS=\tuple{S,R,\ell}$ with $S=\{s_0,s_1\}$,
  $R=\{(s_0,s_0),(s_0,s_1), (s_1,s_0), (s_1,s_1)\}$ and $\ell(s_i)=\emptyset$
  for $i=0,1$: its unfolding is a binary tree and it contains no atomic
  proposition. We~use intermediary states (labeled by~$\pint$) to encode
  the arity of the states of~$\calS_\calM$ (\ie, every ``regular'' node~$s$ will
  be the root of some finite binary tree labeled with~$\pint$ and whose leaves
  correspond to all possible successors of~$s$ in~$K_\calM$), and we use an
  additional existential propositional quantification to describe the states
  (content of the tape, control state, separator,~...).
  Let $\AP$ be the set $\{{q_1},\ldots,{q_l},\alpha,\beta,\#\}$.
  The \EQnCTL k specification used to describe the existence of an accepting
  run of~$\calM$ over~$y$ is the following formula~$\Phi$ where
  $\Phi_{\calM,y}$ is a slight variant of the \EQnCTL k formula used in the
  proof of Theorem~\ref{thm-hard-eqkctl-t} and $\Phi_b$ is a \CTL formula
  described below:
\[
\Phi \eqdef 
\exists \pint \exists q_0. 
\ldots \exists q_l. \exists \alpha. \exists \beta. \exists \#. \Big( \Phi_{b}  \; \et \;  \Phi_{\calM,y} \Big)
\]
Formula~$\Phi_{b}$ describes the labelling of the binary tree 
with $\{\pint,p_{q_0},\ldots,p_{q_l},\alpha,\beta,\#\}$:
\begin{itemize}
\item  $\non\pint$ is true initially,
\item every state satisfying $\non\pint$ 
satisfies exactly one proposition in $\{\alpha,\beta,\#\}$ and at most one in $\{p_{q_0},\ldots,p_{q_l}\}$ can be associated with $\alpha$ or $\beta$, 
\item every $\non\pint$ state  can reach (along a $\pint$-labeled path) every  $\non\pint$ state  with one the following  labellings:
 \[
 \{\{\alpha\},\{\beta\},\{\#\},\{\alpha,{q_0}\},\ldots,\{\alpha,{q_l}\},\{\beta,{q_0}\},\ldots,\{\beta,{q_l}\}\}
 \]
\item and every $\pint$ state satisfies $\AF \non \pint$ (every intermediary state is inevitably followed by a regular state).
\end{itemize}

\noindent In $\Phi_{\calM,y}$, the only change we have to do is for
$\mathsf{yardstick}_0^n(s,t)$: we replace $\AX^n$ by a sequence of $n$
nested formulae of the form ``$\AX (\All \pint \Until (\non \pint \et
\ldots))$'' in order to find the $n$-th cell without considering
intermediary states.
Clearly $\Phi$ is still in \EQnCTL k and we have $y \in
\mathcal{L}(\calM)$ iff $K,s_0 \sat \Phi$. 

The same approach can be done for \EQnCTL k* and \QnCTL k* with an extra exponential level due to the ability of \CTL* to describe a run of length $2^n$. 
\end{proof}

As a consequence of the previous result, we have:

\begin{restatable}{theorem}{thmformEQCTLt}
\label{thm-form-EQCTL-t}
  Under the tree semantics, the formula-complexity of model-checking
  is \TOWER-complete for \EQCTL, \QCTL, \EQCTL* and \QCTL*.
\end{restatable}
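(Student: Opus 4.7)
The plan is to derive Theorem~\ref{thm-form-EQCTL-t} as a direct consequence of two results already established in the paper: Theorem~\ref{thm-eqctl-t} (combined complexity of the full logics is \TOWER-complete) and Theorem~\ref{thm-form-all-k-t} (formula complexity of \EQnCTL k is \EXPTIME[k]-hard for every positive~$k$).

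For the upper bound, membership in \TOWER is immediate. Since the formula-complexity problem is a special case of the combined model-checking problem (the model is simply part of the input, but happens to be held fixed), any algorithm witnessing membership of \EQCTL, \QCTL, \EQCTL* or \QCTL* model-checking in \TOWER (provided by Theorem~\ref{thm-eqctl-t}) also yields a \TOWER algorithm for formula complexity. Note that this works uniformly for all four logics, since by Propositions~\ref{prop-fnp} and~\ref{qctl-qctls} they are all expressively equivalent and the translations preserve membership in \TOWER.

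For the lower bound, recall that \TOWER-hardness means that for every fixed~$k$, one must exhibit a logspace (or polynomial-time) reduction from some \EXPTIME[k]-hard problem into the formula-complexity problem. This is exactly what Theorem~\ref{thm-form-all-k-t} provides: for each~$k>0$, model checking a \emph{varying} \EQnCTL k formula on a \emph{fixed} Kripke structure is \EXPTIME[k]-hard. Since \EQnCTL k is syntactically included in \EQCTL, \QCTL, \EQCTL* and \QCTL*, these reductions transfer verbatim, showing that the formula complexity of each of the four logics is \EXPTIME[k]-hard for every~$k$, hence \TOWER-hard.

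Combining the two observations yields \TOWER-completeness for all four logics. The only part requiring care is verifying that the reductions of Theorem~\ref{thm-form-all-k-t} are in fact formula-complexity reductions (i.e.\ they use one and the same Kripke structure for all instances of the encoded \EXPTIME[k] problem), which was the main technical content of that proof and is already established in the excerpt. No further construction is needed here; the present theorem is a packaging result.
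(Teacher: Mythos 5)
Your proposal is correct and follows essentially the same route as the paper, which derives this theorem directly as a consequence of Theorem~\ref{thm-form-all-k-t} (using the fact that the fixed two-state Kripke structure in that proof is the same for all~$k$, so the \EXPTIME[k]-hardness reductions are uniform in the sense required for \TOWER-hardness) together with the \TOWER upper bound from Theorem~\ref{thm-eqctl-t}. No substantive difference.
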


\section{Satisfiability}

We now address the satisfiability problem for our two semantics: given
formula~$\phi$ in \QCTL* (or~a fragment thereof), does there exist a finite
Kripke structure~$\calS$ such that $\calS, q \models \phi$ for some state~$q$
of~$\calS$?

\paragraph{Structure semantics.}
As a corollary of our Prop.~\ref{prop-qctl-mso} and of the undecidability of
\MSO over finite graphs~\cite{See76}, we~have
\begin{theorem}
For the structure semantics, satisfiability of \QCTL and \QCTL* is undecidable.
\end{theorem}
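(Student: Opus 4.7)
The plan is to reduce satisfiability of \MSO over finite pointed graphs---undecidable by~\cite{See76}---to \QCTL satisfiability in the structure semantics, and then to lift the result to \QCTL* by the syntactic inclusion $\QCTL \subseteq \QCTL*$.

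First, I would invoke the effective translation from \MSO to \QCTL given in the proof of Proposition~\ref{prop-qctl-mso}: from any \MSO formula~$\phi(x)$ with one free individual variable, one computes (in polynomial time) a \QCTL formula
$\psi_\phi \eqdef \exists \pa_x.\,(\pa_x \et \uniq(\pa_x) \et \widehat{\phi})$
such that, by Lemma~\ref{lem-msoqctl-corr-struct}, for every finite Kripke structure~$\calS$ and every state~$q$ one has $\calS_q, q \sat \phi(x)$ iff $\calS, q \sat_s \psi_\phi$, where $\calS_q$ denotes the part of $\calS$ reachable from~$q$.

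This yields a reduction from \MSO satisfiability over finite pointed graphs to \QCTL satisfiability. In one direction, any pointed finite graph $(\calS,q)$ satisfying~$\phi(x)$ directly provides a \QCTL model of~$\psi_\phi$. In the opposite direction, if $\calS, q \sat_s \psi_\phi$ for some finite~$\calS$ and some state~$q$, then the equivalence above yields $\calS_q, q \sat \phi(x)$, giving a pointed finite graph satisfying~$\phi(x)$. The only subtlety to check is that the propositional quantifications in $\psi_\phi$ interpreted over $\calS$ agree with those interpreted over $\calS_q$ for the purpose of evaluating the formula at~$q$; since relabellings of states unreachable from~$q$ cannot change the truth value at~$q$ of any \QCTL formula under the structure semantics, this is immediate, and is the main (minor) point at which one must be careful.

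Combining this reduction with Seese's undecidability result for \MSO on finite graphs gives undecidability of \QCTL satisfiability in the structure semantics. The extension to \QCTL* is then free: every \QCTL formula is a \QCTL* formula and the two logics agree on their common syntax, so \QCTL* satisfiability is at least as hard.
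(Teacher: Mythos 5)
Your proposal is correct and follows exactly the paper's argument: the paper derives this theorem as a direct corollary of Proposition~\ref{prop-qctl-mso} (the effective \MSO-to-\QCTL translation) together with Seese's undecidability of \MSO over finite graphs, which is precisely your reduction. Your additional care about unreachable states and the inclusion $\QCTL\subseteq\QCTL*$ only makes explicit what the paper leaves implicit.
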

Notice that satisfiability of \QCTL and \QCTL* is clearly recursively enumerable: the
finite Kripke structures can be enumerated and checked against the considered
formula, using our model-checking algorithms.

\smallskip
Similar undecidability results were already proved by French:
in~\cite{Fre01}, he~proved that satisfiability of \QCTL (more
precisely, \AQnCTL1) is undecidable over \emph{infinite-state} Kripke
structure; in~\cite{Fre03}, he~proved that satisfiability of \QLTL
over finite-state Kripke structures is undecidable (from which we~get
the result for \AQnCTL1*).

To complete the picture, we prove the following results:
\begin{theorem}\label{thm-undec}
Under the structure semantics, the satisfiability problem is
undecidable for \AQnCTL k (when~${k\geq 1}$) and \EQnCTL k
(when~${k\geq 2}$).
\end{theorem}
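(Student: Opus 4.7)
The plan is to reduce a standard r.e.-complete undecidable problem---namely, the halting problem for deterministic Turing machines on the empty input, or equivalently the periodic tiling problem---to finite satisfiability of \AQnCTL 1 under the structure semantics. Once this is established, the cases \AQnCTL k for $k\geq 1$ and \EQnCTL k for $k\geq 2$ follow by purely syntactic padding: given a \AQnCTL 1 formula $\forall p.\phi$, the formula $\forall p.\exists r.\phi$ sits in \AQnCTL 2 and the formula $\exists r.\forall p.\phi$ sits in \EQnCTL 2, both equivalent for satisfiability to $\forall p.\phi$ whenever $r$ does not occur in $\phi$; adding further dummy blocks lifts to arbitrary $k$. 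Note that satisfiability is trivially recursively enumerable for every fragment we consider (enumerate finite Kripke structures and apply the model-checking algorithms of Section~\ref{sec-kripke}), so reducing an r.e.-complete source is the correct direction.

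For the core reduction, given a deterministic Turing machine $M$, I~would construct a formula $\Phi_M = \forall p_1 \ldots \forall p_n.\ \phi$ with $\phi\in\CTL$, whose finite models are exactly the Kripke structures that encode a halting computation of $M$ on empty input. The intended model is grid-shaped: one state for each pair $(t,x)$ (time, tape position) appearing in the computation, with horizontal transitions from $(t,x)$ to $(t,x{+}1)$ inside a configuration and vertical transitions from $(t,x)$ to $(t{+}1,x)$ between consecutive configurations, together with initial, separator and halting markers. The \CTL kernel $\phi$ expresses (i)~the initial configuration, (ii)~reachability of a halting state, and (iii)~local well-formedness of the grid shape. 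The universal quantifiers, combined with the $\uniq$ gadget of Section~2, are used to pick an arbitrary pair of cells connected by a vertical transition and to demand that the labels of the two cells and of their horizontal neighbours agree with a single application of~$M$'s transition function; this is the only place where the expressive power of $\forall$ beyond \CTL is required.

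The main obstacle is the restriction to \emph{finite} Kripke structures, which prevents us from taking the full unfolding of the computation as a model (as done in French's infinite-state proof~\cite{Fre01}). Finiteness is in fact what makes the reduction work: if~$M$ halts, the finite initial segment of its computation is itself a finite grid satisfying $\Phi_M$; if $M$~does not halt, any candidate finite model would have to close some ``vertical'' chain back onto an earlier cell, producing a cycle incompatible either with the enforced grid shape or with the mandatory reachability of a halting state, so no finite model can exist. The technical heart of the argument is therefore to design $\phi$ so that the universally-quantified consistency checks are jointly strong enough to rule out all such spurious cyclic or re-entrant models; this is what the $\uniq$-based pointing mechanism, ranging uniformly over all pairs of states, accomplishes. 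Once this is in place, the equivalence ``$M$~halts on the empty input iff $\Phi_M$ has a finite model'' is immediate, yielding undecidability of \AQnCTL 1 satisfiability and, via the syntactic padding above, of all the other fragments claimed in the theorem.
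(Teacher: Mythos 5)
Your overall strategy---force every finite model to be a grid that encodes a computation, and use universal propositional quantification plus the $\uniq$ gadget for the cross-configuration consistency checks---is the same as the paper's, which reduces the (dual of the) problem of tiling all finite grids rather than the halting problem; both source problems are fine and the padding argument for lifting \AQnCTL 1 to \AQnCTL k and \EQnCTL k is correct (indeed \AQnCTL 1 is syntactically contained in all of these fragments). The difficulty is not in the choice of source problem but in the step you explicitly defer.

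The genuine gap is your treatment of grid well-formedness. You place ``local well-formedness of the grid shape'' inside the \CTL kernel $\phi$ and claim that the transition-function check is ``the only place where the expressive power of $\forall$ beyond \CTL is required.'' This is not tenable: a purely local, bisimulation-closed \CTL condition cannot prevent a finite structure from merging distinct vertical chains, folding the ``grid'' back onto itself, or giving a cell two incomparable row/column origins, and any such degenerate model could reach a halting marker without encoding a genuine run of $M$, breaking the ``only if'' direction of your reduction. Enforcing that every finite model is an honest two-dimensional grid is precisely where the paper spends its effort (Appendix~\ref{app-grid}): it needs universally quantified formulas such as \eqref{eq-square1}, \eqref{eq-square2}, \eqref{eq-hv} and \eqref{eq-HV} (each of the form $\forall\gamma.\,[\CTL]$, so that the conjunction still prenexes into \AQnCTL 1) together with the non-trivial global arguments of Lemmas~\ref{lemma-alt}--\ref{lemma-uniquexy} and Proposition~\ref{prop-grid} showing that these constraints really pin down a product $L_m\times L_n$. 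Your sketch asserts that ``the $\uniq$-based pointing mechanism, ranging uniformly over all pairs of states, accomplishes'' this, but gives no construction and no argument that spurious finite models are excluded; since this is essentially the entire content of the theorem, the proof is incomplete as it stands. To repair it you would need to either reproduce a grid-characterisation of the above kind or find another mechanism, expressible with a single outer block of universal quantifiers over a \CTL matrix, that certifies global injectivity of the row and column coordinates.
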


Notice that satisfiability for \EQnCTL 1 and \EQnCTL 1* under the structure
semantics is nothing but satisfiability of \CTL and \CTL*, respectively:
if~$\phi$ is in \CTL*, $\exists p. \phi$ is satisfiable if, and only~if,
$\phi$~is.

The core of the reduction is an \EQnCTL2 formula characterising those
Kripke structures that are finite grids\footnote{This follows the same
  ideas as in~\cite{Fre01} for proving undecidability of \AQnCTL1 over
  \emph{infinite} Kripke structures. However, the restriction to
  finite grids comes with many more technicalities (not~all states
  will have two successors, to~begin~with).}. We~say that a Kripke
structure is a finite grid when it is \emph{linear} (\ie, the
underlying graph~$(Q,R)$ is isomorphic to the graph $
L_m=([1,m],\{(i,i+1)\mid 1\leq i\leq m-1\} \cup\{(m,m)\}) $ for
some~$m$) or when it~is isomorphic to a product~$L_m\times L_n$ for
some integers~$m$ and~$n$.  The outermost existential quantifiers of
our \EQnCTL2 formula can then be handled together with the existential
quantification on the Kripke structure (because we deal with
satisfiability), so that our reduction will work for \AQnCTL 1.

The main idea behind our formula is a labelling of every other
\emph{horizontal (resp. vertical) line} of the structures with atomic
propositions~$h$ (resp.~$v$). Using universal quantification,
we~impose alternation between lines of~$h$-states and lines of $\non
h$-states using the following formula of the following form, enforcing
small \emph{squares} as depicted on Fig.~\ref{fig-grid}:
\[
\forall \gamma. \All\G\Bigl[(h \et \Ex\X (h \et\Ex\X (\non h\et\gamma)))  \thn
  \Ex\X(\non h \et \Ex\X(\non h \et \gamma))\Bigr].
\]
The full formula and a proof that it correctly
characterises finite grids are given in Appendix~\ref{app-grid}.
  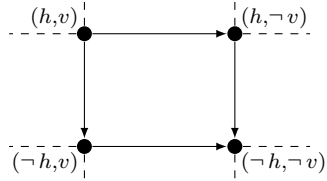
\begin{figure}[!ht]
    \centering
    \begin{tikzpicture}[inner sep=2pt]
      \everymath{\scriptstyle}
      \draw (0,0) node[fill,circle] (A) {} node[above left] {$(h,v)$};
      \draw (2,0) node[fill,circle] (B) {} node[above right] {$(h,\non v)$};
      \draw (0,-1.5) node[fill,circle] (C) {} node[below left] {$(\non h,v)$};
      \draw (2,-1.5) node[fill,circle] (D) {} node[below right] {$(\non h,\non v)$};
      \path[-latex] (A) edge (B) (B) edge (D);
      \path[-latex] (A) edge (C) (C) edge (D);
      \draw[dashed] (B) -- +(0:1cm);
      \draw[dashed] (D) -- +(0:1cm);
      \draw[dashed] (A) -- +(0:-1cm);
      \draw[dashed] (C) -- +(0:-1cm);
      \draw[dashed] (B) -- +(90:5mm);
      \draw[dashed] (D) -- +(-90:5mm);
      \draw[dashed] (A) -- +(90:5mm);
      \draw[dashed] (C) -- +(-90:5mm);
    \end{tikzpicture}
    \caption{The horizontal and vertical lines of a grid}\label{fig-grid}
  \end{figure}

\begin{proof}[Proof of Theorem~\ref{thm-undec}]
We~consider the following tiling problem: given a finite set of
different tiles, is~it possible to tile any finite grid? This problem
is easily shown undecidable, \eg by encoding the computations of a
Turing machine.

Now, we~encode the dual problem, expressing the existence of a finite
grid for which any tiling has a local mismatch. This can be achieved
using the formula for grids and an additional (universal)
quantification for placing tiles on that grid, for which a \CTL
formula will express the existence of a position where neighboring
tiles do not match.
\end{proof}

\paragraph{Tree semantics.}
We~prove the following result:
\begin{restatable}{theorem}{thmQCTL-sat-tree-k}
\label{thm-qctl-sat-tree-k} 
Under the tree semantics, satisfiability is \EXPTIME[(k+1)]-complete for
\AQnCTL k and \QnCTL k, and it is \EXPTIME[k]-complete for \EQnCTL k (for positive~$k$).
\end{restatable}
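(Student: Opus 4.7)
The plan is to transfer the analysis to the language-emptiness problem for the tree automata constructed in Theorem~\ref{thm-aut-qnctl}. First, using the trick from Section~\ref{tree-prog-comp}, I would translate a given formula~$\phi$ into a formula $\widehat\phi$ whose truth on binary trees (with intermediate nodes marked by a fresh proposition~$\pint$) coincides with the truth of $\phi$ on unwindings of arbitrary finite Kripke structures. This reduces satisfiability to language-emptiness of an automaton over $\{1,2\}$-trees.

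For the upper bound on \EQnCTL k, I would build a \NPTA{} recognizing the models of the inner $\psi \in \AQnCTL{k-1}$ via the recipe of Theorem~\ref{thm-aut-qnctl}: an APTA of size $(k{-}1)$-exponential and index $(k{-}2)$-exponential, turned into a \NPTA{} of size $k$-exponential and index $(k{-}1)$-exponential by Lemma~\ref{lemma-sim}, then projected to remove the outer existentially-quantified propositions by Lemma~\ref{lemma-proj}, which preserves the complexity bounds. Since \NPTA{} emptiness reduces to parity games on $n$ states with $p$ priorities and is decidable in time $n^{O(p)}$, the overall procedure runs in $k$-exponential time. For \QnCTL k and \AQnCTL k the ambient automaton is an APTA of size $k$-exponential and index $(k{-}1)$-exponential; checking its emptiness costs one more exponential (Theorem~\ref{theorem-decision-proc}), yielding \EXPTIME[(k+1)].

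For hardness I would replay the alternating-Turing-machine encoding of Theorem~\ref{thm-hard-eqkctl-t}, but without a pre-existing Kripke structure: the witness tree to satisfiability plays the role of the model, and extra existentially quantified propositions encode tape cells, control state and the tape head, while the yardstick formulas $\mathsf{yardstick}_k^n$ still enforce that configurations sit at distance $F(k,n)$ apart. For \EQnCTL k this encodes an ATM running in space~$E(k{-}1,n)$, giving \EXPTIME[k]-hardness; the outer $\exists$ from satisfiability merges seamlessly with the leading existential prefix of the formula. For \AQnCTL k the outer $\exists$ does not merge with the leading $\forall$, producing one extra alternation: together with $\mathsf{yardstick}_k^n$ this lets us encode an ATM running in space~$E(k,n)$ and obtain \EXPTIME[(k+1)]-hardness. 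The lower bound for \QnCTL k follows from the inclusion $\AQnCTL k \subseteq \QnCTL k$.

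The main obstacle I expect is correctly accounting for the extra exponential in the \AQnCTL k case: one must verify that the reduction genuinely exploits the non-collapsing $\exists\forall$ pattern at the top, rather than merely repeating the \EQnCTL k argument. Concretely, the challenge is to design a formula in \AQnCTL k whose leading $\forall$-block quantifies over a witness to the first universal choice of a $(k{+}1)$-alternation ATM, while the remaining quantifiers of the formula---interleaved with yardstick subformulas---handle the remaining choices. A related subtlety is that $\mathsf{yardstick}_k^n$ is natively an \EQnCTL k formula, so it must appear inside appropriate duals/negations to keep the global formula in~\AQnCTL k; the \QnCTL k upper bound is strong enough to cover any version, but pinning the \AQnCTL k fragment as tight requires this syntactic care.
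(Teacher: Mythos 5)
Your overall plan coincides with the paper's: reduce satisfiability to $\{1,2\}$-trees via $\widehat{\Phi}\et\Phi_2$ (Lemma~\ref{lem-sat-12tree}), use the automata of Theorem~\ref{thm-aut-qnctl} for the upper bounds --- with the outermost existential block of an \EQnCTL k formula stripped or projected away so that only a $k$-exponential emptiness test is needed, versus a $(k+1)$-exponential one for the $k$-exponential \APTA{} of a general \QnCTL k formula --- and obtain the lower bounds by recycling the Turing-machine encoding of Theorem~\ref{thm-hard-eqkctl-t}, letting the implicit existential quantification over models absorb the leading $\exists a$ of the \EQnCTL{k+1} model-checking formula so that its \AQnCTL k remainder yields \EXPTIME[(k+1)]-hardness (and \AQnCTL{k-1}\,$\subseteq$\,\EQnCTL k gives the \EXPTIME[k] lower bound).

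There is, however, one concrete missing step in your hardness argument, and it is not the one you flag as the main obstacle. The model-checking reduction crucially exploits that the unwinding of $\calS_\calM$ contains \emph{every} word of $s_0\cdot S^\omega$ as a branch: the proposition~$a$ only \emph{selects} a run tree inside this full $\size{S}$-ary tree, and the universal branching of the alternating machine is faithfully represented because all candidate successor configurations are already present. When you move to satisfiability and let ``the witness tree play the role of the model'', nothing forces the witness to have this branching: a tree with too few successors makes the $\All\G$-shaped consistency constraints (and the requirement that every $a$-branch reach an accepting state) hold vacuously on the missing branches, so the reduction would accept inputs that the machine rejects at a universal state. The paper closes this hole with an extra \AQnCTL 1 conjunct $\chi_\calM$ forcing every reachable node to have exactly one successor per letter of~$S$, so that any model of $\Phi\et\chi_\calM$ unwinds to $\calT_{\calS_\calM}$; it also invokes the regularity theorem (Theorem~\ref{theorem-decision-proc}) to pass from an arbitrary tree model back to a finite Kripke structure. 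By contrast, the obstacle you anticipate --- making the leading $\forall$-block quantify over the machine's first universal choice --- does not arise: the quantifier alternation of the formula comes entirely from the nested $\mathsf{yardstick}$ constructions measuring $F(k,n)$, while the machine's alternation is carried by the branching of the tree and the single proposition~$a$; the extra exponential for \AQnCTL k is obtained simply because the absorbed $\exists a$ lets one reuse the \EQnCTL{k+1} encoding (tape of size $E(k,n)$) rather than the \EQnCTL k one.
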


\begin{proof}
  We first prove \EXPTIME[(k+1)]-hardness for \AQnCTL k satisfiability. This
  is a straightforward adaptation of the proof of
  Theorem~\ref{thm-hard-eqkctl-t}: in that proof, from an input word~$y$ and
  an alternating Turing machine~$\calM$ equipped with a tape of
  size~$E(k,\size y)$, we built an \EQnCTL{k+1} formula of the form~$\exists
  a.\ \Phi$, where $\Phi$ is in \AQnCTL{k}, which is true in the Kripke
  structure~$\calS_\calM$ if, and only~if, $\calM$~accepts the input word~$y$.
  Now we claim that it~is also equivalent to the satisfiability of~$\Phi \et
  \chi_\calM$, where $\chi_\calM$ is an \AQnCTL 1-formula that encodes the
  Kripke structure~$\calS_\calM$. Letting
  $\AP=(Q\cup\{\epsilon\}) \times (\Sigma\cup\{\circ\}) \cup\{\#\}$,
  formula~$\chi_\calM$ expresses the fact that $\#$ is the initial state, and
  that any reachable state has exactly one successor labelled with~$p$, for
  each~$p\in \AP$ (see~the proof of Theorem~\ref{thm-hard-eqkctl-t} for the
  definition of~$\calS_\calM$). More precisely, $\chi_\calM$~can be written as
\[
\# \et 
\All\G\Bigl[\OU_{p\in \AP} \Bigl(p \et \ET_{q\in\AP\setminus\{p\}} \non q\Bigr)\Bigr]
\et
\forall z.
\ET_{p\in \AP}
  \All\G\Bigl[\Ex\X p \et (\Ex\X(p \et z) \thn \All\X(p \thn z))\Bigr].
\]
  We~write $\Phi_\calM$ for $\Phi \et \chi_\calM$. 

  First assume that $\exists a.\ \Phi$ is true in~$\calS_\calM$ (for
  the tree semantics). Then also $\exists a.\ (\Phi_\calM)$ is
  true in that Kripke structure (notice that $\chi_\calM$ does not
  depend on~$a$).  Applying Theorem~\ref{thm-aut-qnctl} to~$\Phi_\calM$ for
  $\calD=\{\size \AP\}$, we
  know that there is an alternating tree automaton~$\calA_{\Phi_\calM}$
  accepting exactly the $\calD$-trees in which~$\Phi_\calM$~holds. Since~$\exists
  a.\ (\Phi_\calM)$ is true in~$\calS_\calM$, we~know that~$\calA_\Phi$
  accepts at least one tree. From Theorem~\ref{theorem-decision-proc},
  it~accepts a regular tree. As~a consequence, there is a finite-state
  Kripke structure where~$\Phi_\calM$~holds.
  Conversely, if~$\Phi_\calM$~is satisfiable, then $\exists a.\ \Phi_\calM$
  also~is (for~the tree semantics). Pick a Kripke structure~$\calS$ satisfying
  $\exists a.\ \Phi_\calM$. Then $\calS$~satisfies $\chi_\calM$, so that the
  unwindings of~$\calS$ and of~$\calS_\calM$ are the same, and state~$\#$
  of~$\calS_\calM$ satisfies $\exists a.\ \Phi$.

      This proves the lower bound for \AQnCTL k and \QnCTL k. 
      As \EQnCTL k contains \AQnCTL {k-1}, it also gives the
      complexity lower-bound for \EQnCTL k.

\smallskip
      Now we prove membership in \EXPTIME[(k+1)] for \QnCTL k satisfiability. Let $\Phi$ be
      a \QnCTL k formula. Let $\widehat{\Phi}$ be the corresponding formula
      over $\{1,2\}$-trees, as defined at Section~\ref{tree-prog-comp}. Let $\Phi_2$ be
      the \CTL formula 
      $\non\pint \et \All\G \All\F \non \pint$.
      The following lemma allows us to reduce the satisfiability problem
      for $\Phi$ to the satisfiability problem for $\widehat{\Phi}\et\Phi_2$ for
      $\{1,2\}$-trees:
 \begin{lemma}
 \label{lem-sat-12tree}
   There exists a Kripke structure $\calS$ with a state $q$ such that $\calS,q
   \sat_t \Phi$ if, and only if, there exists a regular $\{1,2\}$-tree $\calT$
   such that $\calT,\varepsilon \sat_s \widehat{\Phi} \et \Phi_2$.
\end{lemma}

\begin{proof}
  If $\calS,q \sat_t \Phi$ then $\calS_2,q \sat_t
  \widehat{\Phi}$ (see   Section~\ref{tree-prog-comp}). Since $\calS_2,q\sat_t
  \Phi_2$, the result follows.

  We~now prove the converse direction. Assume $\calT,\varepsilon \sat_s
  \widehat{\Phi}\et\Phi_2$. Note that the root satisfies $\non\pint$ (enforced
  by~$\Phi_2$). Now consider
  the tree $\calT'$ where the intermediary state (those labeled with $\pint$)
  has been removed, and replaced with direct
  transitions from their father node to their child nodes. 
  This new tree is still regular and it can be easily proved
  to satisfy~$\Phi$ (by construction of~$\widehat\Phi$).
\end{proof}

From the previous result, it remains to build an \APTA{\{1,2\},2^\AP} for
$\widehat{\Phi}\et\Phi_2$, as described in Theorem~\ref{thm-aut-qnctl}. The
size of this automaton is $k$-exponential in~$\Phi$. Checking emptiness can be
done in time $(k+1)$-exponential.

Now consider the case of some \EQnCTL k formula~$\Phi$. Let~$\Phi'$ be the
\AQnCTL {k-1} formula obtained from~$\Phi$ by removing the outermost
existential block of quantifiers. Clearly~$\Phi'$ is satisfiable iff~$\Phi$
is: satisfiability implicitly uses an existential quantification
over atomic propositions, which can include the first block of existential
quantifications. This provides us with a \EXPTIME[k] algorithm for \EQnCTL k satisfiability.
\end{proof}

The result is lifted to fragments of \QCTL* as follows:
\begin{restatable}{theorem}{thmQCTLs-sat-tree-k}
\label{thm-qctls-sat-tree-k} 
Under the tree semantics, satisfiability is \EXPTIME[(k+2)]-complete for \AQnCTL
k* and \QnCTL k*, and it is \EXPTIME[(k+1)]-complete for \EQnCTL k*.
\end{restatable}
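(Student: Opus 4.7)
The plan is to lift the proof of Theorem~\ref{thm-qctl-sat-tree-k} from \QCTL to \QCTL*, feeding the hardness argument with the stronger Theorem~\ref{thm-k+1hard} in place of Theorem~\ref{thm-hard-eqkctl-t}, and replacing the \CTL-based automaton construction of Theorem~\ref{thm-aut-qnctl} with the \CTL*-based one underlying Theorem~\ref{thm-qkctl*-t}.

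For the lower bounds, I~start from a model-checking instance for \EQnCTL{k+1}*, which is \EXPTIME[(k+2)]-hard by Theorem~\ref{thm-k+1hard}. As in the \CTL case, the formula produced there has the form $\exists a.\,\Phi$ with $\Phi \in \AQnCTL k*$, evaluated on the fixed Kripke structure~$\calS_\calM$ introduced in the proof of Theorem~\ref{thm-hard-eqkctl-t}. I~then reduce this to the satisfiability of $\Phi \et \chi_\calM$, where $\chi_\calM \in \AQnCTL 1$ is the exact same characterising formula used in the proof of Theorem~\ref{thm-qctl-sat-tree-k}, so that $\Phi \et \chi_\calM$ lies in~\AQnCTL k*. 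The correctness argument is identical to that proof: one direction is immediate since $\chi_\calM$ does not depend on~$a$; the other uses that any satisfiable \QnCTL k* formula admits a regular tree model (Theorems~\ref{theorem-decision-proc} and~\ref{thm-qkctl*-t}), and that $\chi_\calM$ forces such a model to have the same unwinding as~$\calS_\calM$. This yields \EXPTIME[(k+2)]-hardness for \AQnCTL k*, hence for \QnCTL k*. For \EQnCTL k*, I~apply the same reduction to the \EXPTIME[(k+1)]-hard \EQnCTL k* model-checking problem: the resulting formula lies in \AQnCTL{k-1}*, which is syntactically contained in \EQnCTL k* by prefixing a dummy existential quantifier, giving the \EXPTIME[(k+1)] lower bound.

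For the upper bounds, I~reuse the construction of Section~\ref{tree-prog-comp}: given $\Phi \in \QnCTL k*$, I~form $\widehat{\Phi}$ over $\{1,2\}$-trees and conjoin it with the \CTL formula $\Phi_2 \eqdef \non\pint \et \All\G\All\F\non\pint$. The analogue of Lemma~\ref{lem-sat-12tree}, whose proof uses nothing specific to \QCTL, reduces satisfiability of~$\Phi$ to the existence of a regular $\{1,2\}$-tree model of $\widehat\Phi \et \Phi_2$. I~then build an \APTA{\{1,2\},2^\AP} for $\widehat\Phi \et \Phi_2$ following the proof of Theorem~\ref{thm-qkctl*-t}: the \CTL* base case costs one exponential on top of the \CTL case, and each additional \QCTL* nesting level adds one further exponential via the simulation Lemma~\ref{lemma-sim} and projection Lemma~\ref{lemma-proj}, giving an automaton of size $(k+1)$-exponential in~$|\Phi|$. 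Emptiness can then be tested in \EXPTIME[(k+2)] by Theorem~\ref{theorem-decision-proc}. For \EQnCTL k*, the outermost existential block of quantifiers is absorbed into the implicit existential of the satisfiability question, reducing to satisfiability of an \AQnCTL{k-1}* formula; the corresponding automaton is only $k$-exponential, so emptiness runs in \EXPTIME[(k+1)].

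The main obstacle is simply tracking the exponential levels carefully through the automata construction: one must verify that the \CTL* base case of the construction in Theorem~\ref{thm-qkctl*-t} really does add exactly one exponential above its \CTL counterpart in Theorem~\ref{thm-aut-qnctl}, and that the conjunction with~$\chi_\calM$ does not raise the quantifier alternation depth above what \AQnCTL k* allows. Both points are already essentially settled in the proofs of Theorems~\ref{thm-qkctl*-t} and~\ref{thm-qctl-sat-tree-k}, so the remainder of the argument is routine.
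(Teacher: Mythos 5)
Your proposal is correct and follows exactly the route the paper intends: its own proof of this theorem is a one-sentence remark deferring to the argument for Theorem~\ref{thm-qctl-sat-tree-k} and noting the extra exponential gained by the \CTL* yardstick, and your writeup is precisely the expansion of that remark (hardness via $\Phi\et\chi_\calM$ on $\calS_\calM$ fed by Theorem~\ref{thm-k+1hard}, upper bounds via the $\{1,2\}$-tree encoding and the $(k+1)$-exponential automaton of Theorem~\ref{thm-qkctl*-t}, with the outermost existential block absorbed for \EQnCTL k*).
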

\begin{proof}
  As explained above, using \CTL* allows us to gain an exponential level in
  the reduction: $\mathsf{yardstick}_0^n(s,t)$ can enforce that the distance
  between $s$ and~$t$ is~$2^n$.
\end{proof}

Finally we have:
\begin{restatable}{corollary}{thmQCTL-sat-tree}
\label{cor-qctl-sat-tree} 
Under the tree semantics, satisfiability is \TOWER-complete for \EQCTL, \QCTL,
\EQCTL* and \QCTL*.
\end{restatable}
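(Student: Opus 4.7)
The plan is to derive the corollary directly from Theorems~\ref{thm-qctl-sat-tree-k} and~\ref{thm-qctls-sat-tree-k}, by a standard ``take the union over~$k$'' argument that matches the very definition of \TOWER-completeness.

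For the \textbf{upper bound}, I~would observe that any formula~$\phi$ of~\QCTL* (respectively \QCTL, \EQCTL*, \EQCTL) has some finite quantifier depth~$k$, so it belongs to \QnCTL{k}* (respectively to \QnCTL{k}, or to a prenex fragment included in these). By Theorems~\ref{thm-qctl-sat-tree-k} and~\ref{thm-qctls-sat-tree-k}, its satisfiability can then be decided in time $\EXPTIME[(k+2)]$, which is bounded by a tower of exponentials whose height depends on~$\size\phi$. Hence satisfiability for all four logics lies in~\TOWER.

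For the \textbf{lower bound}, it~suffices to note that \TOWER-hardness of a problem means hardness for \EXPTIME[k] for every fixed~$k$. Since \EQCTL, \QCTL, \EQCTL* and \QCTL* syntactically contain every \QnCTL{k} and \QnCTL{k}*, and satisfiability for these fragments is already \EXPTIME[(k+1)]- and \EXPTIME[(k+2)]-hard, respectively, by Theorems~\ref{thm-qctl-sat-tree-k} and~\ref{thm-qctls-sat-tree-k}, the satisfiability problems for all four logics are \EXPTIME[k]-hard for every~$k$, hence \TOWER-hard.

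There is essentially no obstacle here: the proof is a routine combination of the parametrized complexity results established earlier with the definition of \TOWER. The only mild subtlety is ensuring the reduction used for the hardness results can indeed be expressed in the prenex fragments (\EQCTL, as opposed to the more permissive \QCTL), but this is immediate because the hard instances of \QnCTL{k} used in Theorem~\ref{thm-qctl-sat-tree-k} can be put in prenex form at the cost of one additional block of quantifiers via Proposition~\ref{prop-fnp}, which only shifts the tower height by a constant and does not affect \TOWER-hardness.
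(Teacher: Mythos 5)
Your proof is correct and takes essentially the route the paper intends: the corollary is stated without an explicit proof, as an immediate consequence of Theorems~\ref{thm-qctl-sat-tree-k} and~\ref{thm-qctls-sat-tree-k} combined with the definition of \TOWER. Two small points of precision. First, being \EXPTIME[k]-hard for every fixed~$k$ is not by itself the definition of \TOWER-hardness: as recalled in Appendix~\ref{app-complex} (following~\cite{Sch13}), one needs the family of reductions to be \emph{uniform} in~$k$; this does hold here, since the hard instances of Theorems~\ref{thm-qctl-sat-tree-k} and~\ref{thm-qctls-sat-tree-k} come from a single Turing-machine encoding parametrized by~$k$, but your argument should say so rather than treat levelwise hardness as sufficient. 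Second, the detour through Proposition~\ref{prop-fnp} for the prenex logics is unnecessary: the hardness in Theorem~\ref{thm-qctl-sat-tree-k} is already established for the prenex fragments \AQnCTL{k} and \EQnCTL{k}, which are syntactically contained in \EQCTL, so no conversion to prenex normal form is needed.
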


\begin{remark}
  In our definition, the satisfiability problem for the tree semantics asks
  for the existence of a finite Kripke structure. Another satisfiability
  problem can be considered, asking for the existence of a labeled tree~$T$
  satisfying the input formula. This version of the problem is also decidable,
  and is equivalent to the previous one. Indeed, assume that such a $T$
  exists, then there is a finitely branching tree~$T'$ satisfying~$\Phi$,
  because $\Phi$ is equivalent to some MSO formula and MSO has the
  finite-branching property. As $T' \sat_s \Phi$, we clearly have that
  $\widehat{\Phi}\et\Phi_2$ is satisfiable by some $\{1,2\}$-tree; thanks to
  Lemma~\ref{lem-sat-12tree}, we get that there exists a finite Kripke
  structure satisfying~$\Phi$ for the tree semantics.
\end{remark}

\section{Conclusions}


\begin{table}[ht]
\hfuzz=50pt
  \centerline{%
\def\arraystretch{1.2}
\begin{tabular}{|c||W|X|X|Z|}
\cline{2-5}
\multicolumn{1}{c||}{} & satisfiability & 
  model checking & formula-compl. & program-compl. \tabularnewline
\hline
\hline
\EQnCTL k & &
  \multicolumn{3}{c|}{\SSP k-complete  (Th.~\ref{thm-eqkctl-s},~\ref{thm-prog-eqkctl-s},~\ref{thm-form-eqkctl-s})} 
  \tabularnewline
\cline{1-1}\cline{3-5}
\QnCTL k & \EXPTIME-c. for \EQnCTL 1  &
   \multicolumn{3}{c|}{\DDPlogn {k+1}-complete\footnotemark{} 
  (Th.~\ref{thm-qkctl-s},~\ref{thm-prog-qkctl-s},~\ref{thm-form-qkctl-s})} \tabularnewline
\cline{1-1}\cline{3-5}
\EQnCTL k* & \EXPTIME[2]-c. for \EQnCTL 1* &
   \multicolumn{2}{c|}{} &
   \SSP k-complete  (Th.~\ref{thm-prog-eqkctl*-s}) 
   \tabularnewline
\cline{1-1}\cline{5-5}
\QnCTL k* & Undecidable otherwise & \multicolumn{2}{c|}{\PSPACE-complete}
&
  \leavevmode\hbox to 0pt{\hss\DDPlogn {k+1}-c.(Th.~\ref{thm-prog-qkctl*-s})\hss}
  \tabularnewline
\cline{1-1}\cline{5-5}
\allQCTL &  & 
     \multicolumn{2}{c|}{}
   & \PH-hard, in~\PSPACE 
  \tabularnewline
\allQCTL* & (Th.~\ref{thm-undec}) &
  \multicolumn{2}{c|}{(Th.~\ref{thm-eqctl-s},~\ref{thm-form-eqctl-s})} &
  (Th.~\ref{thm-prog-eqctl-s})
  \tabularnewline
\hline
\end{tabular}}
\centering
  \caption{Results for the structure semantics}
  \label{tab:results-structure}

\medskip
\medskip





  \centerline{%
  \def\arraystretch{1.1}
\begin{tabular}{|c||W|X|X|Z|}
\cline{2-5}
\multicolumn{1}{c||}{} & satisfiability &
  model checking & formula-compl. & program-compl. \tabularnewline
\hline\hline
\EQnCTL k &\mbox{\EXPTIME[k]-c.(Th.~\ref{thm-qctl-sat-tree-k})}  &
  \multicolumn{2}{c|}{\EXPTIME[k]-complete} &
   \tabularnewline
\cline{1-2}
\QnCTL k &\mbox{\EXPTIME[(k+1)]-c.(Th.~\ref{thm-qctl-sat-tree-k})} &
   \multicolumn{2}{c|}{(Cor.~\ref{coro-qkctl-t},Th.~\ref{thm-form-all-k-t})} & \tabularnewline
\cline{1-4}
\EQnCTL k* &\mbox{\EXPTIME[(k+1)]-c.(Th.~\ref{thm-qctls-sat-tree-k})} &
   \multicolumn{2}{c|}{\EXPTIME[(k+1)]-complete} & \PTIME-complete
   \tabularnewline
\cline{1-2}
\QnCTL k*  &\mbox{\EXPTIME[(k+2)]-c.(Th.~\ref{thm-qctls-sat-tree-k})} &
  \multicolumn{2}{c|}{(Cor.~\ref{coro-qkctl*-t}, Th.~\ref{thm-form-all-k-t})}
  &   (Th.~\ref{thm-prog-all-t}) 
  \tabularnewline
\cline{1-4}
\allQCTL &
  \multicolumn{3}{c|}{\TOWER-complete} &
  \tabularnewline
\allQCTL* &
  \multicolumn{3}{c|}{(Th.~\ref{thm-eqctl-t},~\ref{thm-form-EQCTL-t}, Cor.~\ref{cor-qctl-sat-tree})} &
  \tabularnewline
\hline
\end{tabular}}
  \caption{Results for the tree semantics.}
  \label{tab:results-tree}
\end{table}


\footnotetext{Hardness in \DDPlogn{k+1} for the formula-complexity of \QnCTL k
  model checking only holds when
  considering the DAG-size of formulas. With the standard size, the problem is
  \BH(\SSP k)-hard, and in \DDPlogn{k+1}.}

While it was introduced thirty years ago, the extension of \CTL with
propositional quantifiers had never been studied in details. Motivated by a
correspondence with temporal logics for multi-player games, we~have proposed
an in-depth study of that logic, in terms of its expressiveness (most notably,
we proved that propositional quantification fills in the gap between temporal
logics and monadic second-order logic), of its model-checking problem (which
we proved is decidable---see a summary of our results in Tables~\ref{tab:results-structure}
and~\ref{tab:results-tree}) and of its satisfiability (which is partly decidable).

Temporal logics extended with propositional quantification are simple, yet
very powerful extensions of classical temporal logics. They have a natural
semantics, and optimal algorithms based on standard automata constructions. 
Their powerful expressiveness is very convenient to encode more intricate
problems over extensions of temporal logics, as was done for multi-agent
systems in~\cite{rr-ATLsc}. We~expect that \QCTL will find more applications
in related areas shortly.

\paragraph*{Acknowledgement.} We thank Thomas~Colcombet, Olivier~Serre 
  and Sylvain~Schmitz for helpful comments during the redaction of
  this paper. We~also thank the reviewers for their many valuable
  remarks.

\bibliographystyle{myalpha}
\bibliography{bibexport}

\clearpage
\appendix
\section{The polynomial-time and exponential hierarchies}\label{app-complex}

In this section, we briefly define the complexity classes that are used in the
paper. More details about those complexity classes can be found \eg
in~\cite{Pap94}. 

\medskip

We~write \PTIME (resp.~\NP) for the class of decision problems that can be decided in
polynomial time by a deterministic (resp.~non-deterministic) Turing machine. 

Given a decision problem~$\calL$ (for instance~SAT), a~\newdef{Turing machine
  with oracle~$\calL$} is a Turing machine equipped with an extra tape (the
\emph{oracle tape}) and three special states~$q_{\text{oracle}}$, $q_{\text{yes}}$
and~$q_{\text{no}}$. During the computation, whenever the Turing machine
enters state~$q_{\text{oracle}}$, it~directly jumps to~$q_{\text{yes}}$ if the 
instance of~$\calL$ written on the oracle tape is positive, and to~$q_{\text{no}}$
otherwise.

We~write $\PTIME^{\calL}$ for the set of languages accepted by deterministic
Turing machines with oracle~$\calL$ and halting in polynomial time (in~the
size of the input). We~write $\NP^{\calL}$ for the analogous class defined
with non-deterministic machines. When $\calL$ is complete for some complexity
class~$\calC$, we~also write $\PTIME^{\calC}$ for $\PTIME^{\calL}$ (and
$\NP^{\calC}$ for $\NP^{\calL}$). Notice that these definitions do not depend
on the selected $\calC$-complete problem.
We~also define $\PTIME^{\calC[\log n]}$ as the subclass of problems of
$\PTIME^{\calC}$ that can be solved by a deterministic
Turing machine in polynomial time but with only a logarithmic number of visits
to the $q_{\text{oracle}}$-state. Finally, $\PTIME^{\calC}_{\|}$ is the
subclass of problems of $\PTIME^{\calC}$ that can be solved in polynomial time
by a deterministic Turing machine slightly different from the previous ones:
several queries can be written on the oracle tape, but once the oracle state
is visited, the oracle tape cannot be modified anymore. This can be seen as
solving all queries in parallel.

The \newdef{polynomial-time hierarchy} is the following sequence of complexity
classes, defined recursively with $\SSP0=\PTIME$ and
\begin{xalignat*}3
\SSP{i+1} &= \NP^{\SSP i} & 
\PPP{i+1} &= \co\NP^{\SSP i} &
\DDP{i+1} &= \PTIME^{\SSP i} \\
\DDPlogn{i+1} &= \PTIME^{\SSP i[\log n]}&  
\DDPpar{i+1} &= \PTIME^{\SSP i[\log n]}_{\|}
\end{xalignat*}
The classes in this hierarchy lie between \PTIME and \PSPACE (and it is an
open problem whether the hierarchy collapses). Notice for instance that
$\SSP1=\NP$, since an oracle for \PTIME problems is useless to a
non-deterministic Turing machine running in polynomial time. Similarly,
$\DDP1=\PTIME$. On the other hand, $\DDP2=\PTIME^{\NP}$ contains those
problems that can be solved in polynomial time by a deterministic Turing
machine with an \NP oracle. This includes both \NP and \co\NP.

\smallskip
A~natural problem in \SSP i is
the problem $\textsf{QSAT}_i$ of finding the truth value of
\[
\exists X_1.\ \forall X_2.\ \exists X_3 \ldots Q_i X_i.\ 
  \phi(X_1,X_2,X_3,\ldots,X_i)
\]
where $\phi$ is a boolean formula with variables in~$X_1$ to~$X_i$. The
dual problem (where the sequence of quantifications begins with a
universal~one) is \PPP i-complete. The problem   
$\textsf{SNSAT}_i$, made of several instances of $\textsf{QSAT}_i$
where the $k$-th instance uses the truth value of the $k-1$ previous
ones, is an example of a \DDP{i+1}-complete problem~\cite{LMS01}.

\smallskip
The complexity class \PH is the union of all the classes in the
polynomial-time hierarchy. Equivalently, $\PH = \bigcup_{i\in\bbN} \SSP i$.
Clearly, $\PH\subseteq \PSPACE$. Whether these two classes are equal is open.
Notice that \PH is not known to contain complete problems: if a problem is
complete for \PH, then it~is at least as hard as any other problem in \PH, but
on the other hand is belongs to~$\SSP i$ for some~$i$, which implies that the
polynomial-time hierarchy would collapse.

\medskip

The class \EXPTIME[k] (resp.~\EXPSPACE[k]) is the class of decision problems
that can be solved by a deterministic Turing machine running in time
(resp.~space) $O(\exp_k(p(n)))$ for some polynomial~$p$, where $\exp_k$ is defined
inductively as follows:
\begin{xalignat*}2
\exp_1(n)  & = 2^n &
  \exp_k(n) &= 2^{\exp_{k-1}(n)}
\end{xalignat*}
For instance, $\exp_5(n) = 2^{2^{2^{2^{2^{n}}}}}$ (and $\exp_5(1)$ is a number
with $65.536$ binary digits). It~is not difficult to prove that 
\[
\EXPTIME[k]\subseteq \EXPSPACE[k] \subseteq \EXPTIME[(k+1)],
\]
and one of the two inclusions is strict (\ie, $\EXPTIME[k]\subsetneq
\EXPTIME[(k+1)]$). 

In the same way as for the polynomial-time hierarchy, we~let \ELEM be the
union of all the classes in the exponential hierarchy: $\ELEM = \bigcup_{k\in
  \bbN} \EXPTIME[k]$. Since this hierarchy is known to be strict, \ELEM~cannot
have complete problems. 

\smallskip
In order to define classes above \ELEM, 
we~define the function $\textsf{tower}(n) = \exp_n(1)$. The class \TOWER is
then the class of problems that can be decided by a deterministic Turing
machine in time~$O(\textsf{tower}(p(n)))$ where $p$ is a polynomial. It~is
quite clear that $\ELEM\subseteq \TOWER$ (because $\exp_k(p(n)) \leq
\textsf{tower}(k+p(n))$). But now, as~argued in~\cite{Sch13}, \TOWER~has
complete problems, and \TOWER-hardness can be proved by showing
\EXPTIME[k]-hardness for all~$k$ using \emph{uniform} reductions.

\section{Characterising grid-like structures with \EQnCTL2}\label{app-grid}

Let~$\calS=\tuple{Q,R,\ell}$ be an arbitrary (finite-state) Kripke structure.  Then
$\calS$ is a grid (in a sense that will be made precise at
Prop.~\ref{prop-grid}) if it can be labelled with atomic
propositions $s$, $h$, $v$, $l$, $r$, $t$ and~$b$ in such a way that
the following conditions are fulfilled:
\begin{itemize}
\item all states have at least one successor, and at most two: 
\begin{multline}
\All\G\Ex\X\true\ \et\ \forall \alpha,\beta.\ \All\G\biggl[\bigl(\Ex\X(\alpha \et \beta)
  \et \Ex\X(\alpha \et \non \beta)\bigr) \thn \\
\biggl((\All\X \alpha)  \et 
\ET_{\substack{h_s\in\{h,\non h\}\\v_s\in\{v,\non v\}}} 
  \Bigl[(h_s \et v_s) \thn (\Ex\X(\non h_s \et v_s) \et \Ex\X(h_s\et \non v_s))\Bigr]
\biggr)\biggr]
\label{eq-2suc}
\end{multline}
  Notice that the formula also requires that when a state has two
  successors, then they are labelled differently w.r.t. both~$h$
  and~$v$.
\item the Kripke structure has exactly one self-loop, which has only
  one outgoing transition (the loop itself). It~is the role of atomic
  proposition~$s$ to mark that state:
\begin{equation}
\uniq(s)  \et  \All\G(s \thn \All\G s) \et \All\F s.
\end{equation}
The first two conjuncts impose that the state labelled with~$s$ has only a
self-loop as outgoing transition. The third conjunct requires that all paths
eventually reach~$s$, which means that the structure is acyclic (except at~$s$).

\item the atomic propositions~$h$ and~$v$ (for \emph{horizontal}
  and~\emph{vertical}) are used to define the direction of the grid. This
  contains several formulas: first, we~impose that the initial state has two
  successors:
\begin{equation}
(h \et v) \et  \Ex\X(h \et \non v) \et \Ex\X(\non v \et h)
\label{eq-init}
\end{equation}

We then impose that if a state has two successors, then those two successors
have a common successor, as depicted on Fig.~\ref{fig-grid}. This is expressed
as follows:
\begin{multline}
\forall \gamma.\ 
  \All\G\biggl[
    \smash{\ET_{d\in\{h,\non h,v,\non v\}}}
    (d \et\Ex\X d \et \Ex\X\non d) \thn \\ 
    \left( \et 
      \begin{array}{c}
        \Ex\X(d \et \All\X \gamma) \thn \Ex\X(\non d \et \Ex\X(\non d \et \gamma)) \\[2mm]
        \Ex\X(\non d \et \All\X \gamma) \thn \Ex\X(d \et \Ex\X(\non d \et \gamma)) 
      \end{array} 
    \right)
  \biggr]
\label{eq-square1}
\end{multline}

In the same vein, we~also impose another formula, which we will use as a
sufficient condition for having two successors:
\begin{equation}
\forall \gamma.\ 
  \All\G\left[
    \ET_{d\in\{h,\non h,v,\non v\}}
    d \thn \left(
      \begin{array}{c}
        \Ex\X(d \et \Ex\X(\non d \et \gamma)) \\
        \iff \\
        \Ex\X(\non d \et\non s \et \Ex\X(\non d \et \gamma))
      \end{array}
    \right)
  \right]
\label{eq-square2}
\end{equation}

We also impose globally that each state can be reached from the
initial state by two particular paths that are made of one
``horizontal part'' followed by a ``vertical part'' (and
conversely). This is expressed as follows:
\begin{multline}
\forall \gamma.\ \Bigl[\Ex\F\gamma \thn 
  \smash{\OU_{\substack{h_q\in\{h,\non h\} \\v_q\in\{v,\non v\}}}\
  \OU_{\substack{h_\gamma\in\{h,\non h\} \\v_\gamma\in\{v,\non v\}}}}
  \Ex h_q \Until (h_q \et \Ex v_\gamma \Until (v_\gamma\et \gamma)) \et \\ 
  \Ex v_q \Until (v_q \et \Ex h_\gamma \Until (h_\gamma\et \gamma)) \Bigr]
\label{eq-hv}
\end{multline}
Symmetrically, from any state, the $s$-state can be reached using
similar paths. 
Here we~have to enumerate the possible values for~$h$ and~$v$ in the $s$-state:
\begin{multline}
\smash{\OU_{\substack{h_s\in\{h,\non h\} \\v_s\in\{v,\non v\}}}}
\All\G\Bigl[
\smash{\OU_{\substack{h_q\in\{h,\non h\} \\v_q\in\{v,\non v\}}}}
  \Ex h_q \Until (h_q \et \Ex v_s \Until (v_s \et s)) \et  \\
  \Ex v_q \Until (v_q \et \Ex h_s \Until (h_s \et s))\Bigr]
\label{eq-HV}
\end{multline}

\item finally, we~label the left, right, top and bottom borders of the grid,
  which we define as follows:
\begin{multline}
 \All(v \et l)\Until(\non v \et \non l) \et \All\G(\non v \thn
  \All\G\non l)  
 \et  \All\G(r \iff (\All\G v \ou \All\G \non v)) \\
 \et\  \All(h \et t)\Until(\non h \et \non t) \et \All\G(\non h \thn
  \All\G\non t) 
 \et  \All\G(b \iff (\All\G h \ou \All\G \non h))
\label{eq-lrtb}
\end{multline}

\end{itemize}

\noindent Propositions $h$ and~$v$ are used to mark ``horizontal'' and
``vertical'' lines. A~successor~$u'$ of a state~$u$ is a
\newdef{horizontal successor} if both~$u$ and~$u'$ have the same
labelling w.r.t.~$h$. Similarly, it~is a \newdef{vertical successor}
when they have the same labelling w.r.t.~$v$.  Similarly,
a~\newdef{horizontal path} (resp.~\newdef{vertical path}) is a path
in~$\calS$ along which the truth value of~$h$ (resp.~of~$v$) is
constant.  Finally, we~let $L$, $R$, $T$ and~$B$ be the sets of states
labelled with~$l$, $r$, $t$ and~$b$, respectively.

We now give a few intermediary results that will be convenient for proving
that the conjunction of the formulas above characterises grids. These lemmas
assume that the initial state~$q$ of~$\calS$ satisfies the conjunction of
Formulas~\eqref{eq-2suc} to~\eqref{eq-lrtb}, and that all states
of~$\calS$ are reachable from~$q$.

\begin{lemma}\label{lemma-alt}
  Pick two states~$u$ and~$u'$ in~$\calS$, such that there is a transition
  between~$u$ and~$u'$ (in~either direction). Then
  \begin{itemize}
  \item either $u$ and~$u'$ are the same state (hence they are
    labelled with~$s$),
  \item or they have the same labelling with~$h$ if, and only if, they
    have different labelling with~$v$.
  \end{itemize}
\end{lemma}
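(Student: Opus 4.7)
I would proceed by case analysis on whether $u = u'$, on the out-degree of~$u$, and on which of $h, v$ is preserved by the transition. The self-loop case $u = u'$ is settled immediately: by $\uniq(s) \et \All\G(s \thn \All\G s)$ the unique $s$-labelled state loops only onto itself, and $\All\F s$ forbids any self-loop on a non-$s$ state (such a loop would yield a path that never reaches~$s$); hence $u = u'$ implies $u = u' = s$. I henceforth assume $u \neq u'$, so $u \neq s$.

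If $u$ has two successors, the second conjunct of~\eqref{eq-2suc}, instantiated with $(h_s, v_s)$ matching the labels of~$u$, forces those successors to carry labels $(\non h_u, v_u)$ and $(h_u, \non v_u)$; since $u'$ is one of them, the required equivalence is immediate. The delicate case is when $u$ has a unique successor~$u'$. My main tool here is~\eqref{eq-square2} at~$u$: since $u'$ is the only successor of~$u$, the LHS of the $\iff$ for $d = h_u$ reduces to ``$u' \models h_u$ and $u'$ has a $\non h_u$-successor'', while the RHS would force~$u$ to own a second successor labelled $\non h_u$. The equivalence therefore yields: if $u$ and $u'$ agree on~$h$, then $u'$ has no $\non h_u$-successor at all; symmetrically for~$v$. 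Assuming for contradiction that $u, u'$ agree on both $h$ and~$v$, every successor of~$u'$ is labelled $(h_u, v_u)$; then~\eqref{eq-2suc} at~$u'$ forbids two successors (they would have to flip exactly one of $h, v$), so $u'$ has a unique successor $u_2$ still labelled $(h_u, v_u)$, and the same analysis iterates. By $\All\F s$ and the finiteness of~$\calS$ the resulting chain terminates at~$s$; consequently every state reachable from~$u$ carries labels $(h_u, v_u)$.

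The final contradiction in this ``agree on both'' scenario I would draw using~\eqref{eq-square1}. Because $u$ is reachable from~$q$ and~\eqref{eq-init} ensures $q$ has two successors, some two-successor state~$w$ sits on a path from~$q$ to~$u$; its two children are forced by~\eqref{eq-2suc} to differ on exactly one of $h, v$. Let $x$ be the child leading to~$u$ (and hence eventually into the monolabel chain) and $y$ the other. A suitable fresh proposition~$\gamma$, chosen to mark descendants of~$x$ that reach the chain from~$u$, can be arranged so that $\All\X\gamma$ holds at~$x$, activating the LHS of~\eqref{eq-square1} at~$w$ along the direction separating $x$ from~$y$. The RHS would then demand a $\gamma$-labelled grandchild reachable through~$y$ by two steps in the opposite direction; but $y$ leaves the monolabel chain rooted at~$u$, so its two-step descendants carry labels incompatible with $(h_u, v_u)$ and no such $\gamma$-labelled grandchild exists, yielding the desired contradiction.

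The symmetric ``disagree on both'' case is cleaner: the same use of~\eqref{eq-square2} forces the unique successor of~$u'$ to carry labels $(h_u, v_u)$, so the chain's $(h, v)$-labels alternate between $(h_u, v_u)$ and $(\non h_u, \non v_u)$. If the chain ever reaches~$s$, the self-loop at~$s$ combined with the alternation demands two incompatible labellings on~$s$; otherwise, infinitely many distinct non-$s$ states appear, contradicting finiteness. The boundary sub-case $u' = s$ is handled via~\eqref{eq-HV}: a direct inspection shows that both the $h_u$-constant and $\non h_u$-constant paths from~$u$ fail to reach~$s$ while maintaining the inner $v_s$-invariant demanded by the formula when $u$ and~$s$ disagree on both~$h$ and~$v$. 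The main obstacle I foresee is the ``agree on both'' case: the chain from~$u$ is internally consistent with~\eqref{eq-2suc} and~\eqref{eq-square2}, so the contradiction must be drawn from the global interaction of that chain with the branching structure imposed by~\eqref{eq-init} and~\eqref{eq-square1} rather than from local inspection along the chain itself.
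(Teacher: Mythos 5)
Your treatment of the self-loop case, of the two-successor case (via~\eqref{eq-2suc}), and of the ``disagree on both'' case (alternation forced by~\eqref{eq-square2}, with the last link into the $s$-state killed by~\eqref{eq-HV}) is essentially sound. The genuine gap is in the ``agree on both'' case, which you correctly single out as the hard one but do not actually close. First, the labelling~$\gamma$ you propose --- ``descendants of~$x$ that reach the chain from~$u$'' --- is of no use: since $\All\F s$ holds and $s$ lies on the chain, \emph{every} descendant of~$x$ reaches the chain, so this $\gamma$ marks everything below~$x$ and both sides of~\eqref{eq-square1} become trivially satisfied; no contradiction follows. Second, the claim that the two-step descendants of the other child~$y$ ``carry labels incompatible with $(h_u,v_u)$'' is unjustified: in an honest grid $x$ and~$y$ share a common child, and nothing you have established constrains the labels two steps below~$y$. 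Third, and most importantly, picking ``some'' two-successor ancestor~$w$ (e.g.~$q$ itself) gives no handle on~$u$, which may sit arbitrarily deep inside the monolabelled stretch; any contradiction has to be extracted at the \emph{boundary} between the well-behaved region and that stretch, and your argument never locates that boundary.

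The paper closes exactly this case with a minimal-counterexample argument that you could import to repair your proof: choose the bad pair $(u,u')$ whose source is closest to~$q$ (it cannot be~$q$ by \eqref{eq-2suc} and~\eqref{eq-init}), and let $w$ be a predecessor of~$u$; by minimality $(w,u)$ is a \emph{good} pair, say $w\models h\et\non v$ and $u\models h\et v$. Instantiating~\eqref{eq-square2} at~$w$ with $d=\non v$ and $\gamma$ marking~$u'$ forces a second successor~$w'$ of~$w$ with $w'\models\non h\et\non v$ and with $u'$ among its children; then \eqref{eq-square1} at~$w$ with $d=h$ and $\gamma$ marking the children of~$u$ forces $u$ to have a child labelled~$\non h$ in addition to $u'\models h\et v$, contradicting~\eqref{eq-2suc} at~$u$. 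Note that this single argument also subsumes your separate ``disagree on both'' analysis, so the downstream chain construction, while correct, is not needed.
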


In~other terms, a~successor of a state is either a horizontal successor or a
vertical successor, and not both (except for the state carrying the
self-loop).

\begin{proof}
  We proceed by contradiction: pick a transition~$(u,u')$ (not the
  self-loop) such that $u$ and~$u'$ are both labelled with~$h$ and~$v$
  (the other cases would be similar).
  We~assume that~$u$ is (one~of) the minimal
  such~$u$, with ``minimal'' here being defined w.r.t. the sum of the
  length of all paths from the initial state~$q$ of~$\calS$ to~$u$.

  First notice that it~cannot be the case that~$u=q$, because of
  Formulas~\eqref{eq-2suc} and~\eqref{eq-init}. Hence $u$~must have a
  predecessor~$w$. By~minimality of~$u$, that state satisfies the condition in
  the lemma. We~assume that $w$ is labelled with~$h$ and~$\non v$ (the~other
  case, with $\non h$ and~$v$, would be similar). From
  Formula~\eqref{eq-square2} (with~$d=\non v$), there must be a successor~$w'$
  of~$w$ labelled with~$\non v$, and having $u'$ as successor (this is the
  common successor with~$u$). According to Formula~\eqref{eq-2suc}, $w'$~is
  labelled with~$\non h$. Now, applying Formula~\eqref{eq-square1} in~$w$ for
  proposition~$h$, there must be another common successor~$u''$ to~$u$
  and~$w'$, labelled with~$\non h$. We~get a contradiction, since $u$~now has
  two successors but does not satisfies Formula~\eqref{eq-2suc}.
\end{proof}

\begin{lemma}\label{lemma-predlrtb}
  If a state is in~$L$ and not in~$T$ (or~in~$T$ and not in~$L$) then it has
  exactly one predecessor. Only the initial state is both in~$L$ and~$T$.
  Symmetrically, any state in~$R$ or in~$B$ has only one successor, and the
  only state in~$R \cap B$ is labelled with~$s$.
\end{lemma}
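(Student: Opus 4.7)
The plan is to combine Lemma~\ref{lemma-alt} with a careful reading of the labelling constraints imposed by Formula~\eqref{eq-lrtb}. I will treat the $L$/$T$ part in detail and then invoke symmetry for $R$/$B$, replacing ``paths from~$q$'' by ``paths to~$s$''.

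The first step is to pin down $L$ and~$T$. The subformula $\All (v\et l)\Until(\non v\et\non l)$ together with $\All\G(\non v\thn\All\G\non l)$ forces $l$ to hold at a state~$u$ exactly when there is a path from~$q$ to~$u$ along which $v$ is preserved. Lemma~\ref{lemma-alt} identifies such (non-loop) edges with the vertical ones, so $L$ is the set of states reachable from~$q$ by a vertical path. Symmetrically, $T$ is the set of states reachable from~$q$ by a horizontal path. Formula~\eqref{eq-init} and Formula~\eqref{eq-2suc} also ensure that each state has at most one vertical successor and at most one horizontal successor, so these paths are uniquely determined as linear sequences $q=u_0,u_1,u_2,\ldots$ and $q=v_0,v_1,v_2,\ldots$, whose labels alternate in~$h$ (resp.\ in~$v$) by Lemma~\ref{lemma-alt}.

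For $L\cap T=\{q\}$, one direction is immediate. For the other, suppose $u\in L\cap T$ with $u\neq q$. Then $l$ forces $v(u)$, $t$ forces $h(u)$, so $u$ has label $(h,v)$; matching this label against the alternation along the vertical and horizontal paths yields $u=u_k$ for some even $k\geq 2$ and $u=v_j$ for some even $j\geq 2$. I will then apply Formula~\eqref{eq-hv} at~$u$: it furnishes, in particular, a horizontal-then-vertical path from~$q$ to~$u$, whose horizontal prefix is by uniqueness a prefix of $v_0,v_1,\ldots$, and whose vertical suffix is determined likewise. An induction on distance from~$q$, exploiting the squareness axiom~\eqref{eq-square1} to control how the two initial branches $u_1$ and~$v_1$ can reconverge, shows that the only common reachable state of the vertical and horizontal paths is~$q$ itself, giving the required contradiction. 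The main obstacle is this last induction: intuitively the vertical and horizontal paths stay in the first column and the first row of the grid, and so can only meet at the top-left corner, but formalising this requires pushing Formula~\eqref{eq-square1} together with the acyclicity coming from $\uniq(s)\et\All\G(s\thn\All\G s)\et\All\F s$.

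For the predecessor claim, I fix $u\in L\setminus T$ with label $(h_u,v)$. Since $u\in L\setminus\{q\}$, the previous state on the vertical path from~$q$ is a vertical predecessor~$u'$ of~$u$, labelled $(\non h_u,v)$. Suppose for contradiction that $u$ has a second predecessor~$u''$. By Lemma~\ref{lemma-alt}, $u''$ is either another vertical predecessor labelled $(\non h_u,v)$, or a horizontal predecessor labelled $(h_u,\non v)$. In the first case, tracing the two vertical paths $q\leadsto u'\to u$ and $q\leadsto u''\to u$ back to~$q$ step by step, and using uniqueness of the vertical successor together with acyclicity, yields $u'=u''$. In the second case, Formula~\eqref{eq-hv} applied at~$u''$ produces a horizontal-then-vertical path from~$q$ to~$u''$; since $u''\to u$ is itself a horizontal edge, appending it to the horizontal prefix yields a horizontal path from~$q$ ending at~$u$, placing $u\in T$ and contradicting $u\in L\setminus T$. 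Hence $u$ has exactly one predecessor. Finally, the $R$/$B$ statements follow by dualisation. The defining equivalence $\All\G(r\iff(\All\G v\ou\All\G\non v))$ characterises $r$-states as those from which every forward path is $v$-constant, i.e.,~uses only vertical edges until reaching the self-loop at~$s$; so the $r$-states form a vertical chain ending at~$s$, and the arguments above transpose directly, with successors in place of predecessors and Formula~\eqref{eq-HV} in place of Formula~\eqref{eq-hv}. Throughout, the hard part is upgrading the intuitive grid picture into formal arguments from the quantified-propositional formulas alone, combining the squareness axioms~\eqref{eq-square1}--\eqref{eq-square2} with acyclicity in an induction on the distance from~$q$ (or to~$s$).
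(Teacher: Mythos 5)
There are two genuine gaps here. First, your argument that $L\cap T=\{q\}$ is not actually a proof: you reduce it to ``an induction on distance from~$q$, exploiting the squareness axiom~\eqref{eq-square1}'' and explicitly flag that induction as the main obstacle without carrying it out. No such machinery is needed. By Formula~\eqref{eq-init} the two successors of~$q$ are labelled $(h,\non v)$ and $(\non h,v)$; by Formula~\eqref{eq-lrtb} the first, being $\non v$, satisfies $\All\G\non l$, and the second, being $\non h$, satisfies $\All\G\non t$. Every state other than~$q$ is reached through one of these two successors, hence fails~$l$ or fails~$t$; this settles $L\cap T=\{q\}$ in two lines, and is exactly the paper's argument. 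Second, your treatment of a hypothetical \emph{horizontal} second predecessor~$u''$ of $u\in L\setminus T$ is broken: you propose to append the edge $u''\to u$ to ``the horizontal prefix'' of the horizontal-then-vertical path to~$u''$ supplied by Formula~\eqref{eq-hv}, but that prefix ends at an intermediate state rather than at~$u''$, so the concatenation is not a path; appending to the whole path instead yields a horizontal-vertical-horizontal path, which does not place~$u$ in~$T$. The case is in fact vacuous for a simpler reason you already note elsewhere: since $u\models l$, Formula~\eqref{eq-lrtb} forces every path from~$q$ to~$u$ to visit only $v$-states, so every predecessor of~$u$ is $v$-labelled and is therefore a vertical predecessor by Lemma~\ref{lemma-alt}. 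This is how the paper argues: any two predecessors give a divergence state with two $v$-labelled successors, contradicting Formula~\eqref{eq-2suc} --- an argument that also subsumes your (essentially correct) handling of two vertical predecessors.

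A smaller issue: the $R$/$B$ half is dismissed as ``dualisation''. The single-successor claim for~$R$ does transpose (both successors of an $r$-state share their $v$-label because of $\All\G v\ou\All\G\non v$, contradicting~\eqref{eq-2suc}), but the claim that the only state of $R\cap B$ is the $s$-state needs its own one-line argument --- all successors of such a state share both its $h$- and $v$-labels, so Lemma~\ref{lemma-alt} forces the self-loop --- and it certainly does not follow by transposing your unfinished $L\cap T$ induction.
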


\begin{proof}
According to Formula~\eqref{eq-lrtb}, the initial state must be
labelled with~$l$ and~$t$. Moreover, the initial state has two
successors, labelled with~$(h,\non v)$ and~$(\non h, v)$, according to
Formula~\eqref{eq-init}. Formula~\eqref{eq-lrtb} enforces that the
former state satisfies~$\non l$, and the latter satisfies~$\non
t$. Hence no other state will ever satisfy~$l \et t$.

Now, consider a state~$u$ labelled with~$l$ and not with~$t$, and
assume it~has two predecessors~$w$ and~$w'$.  From
Formula~\eqref{eq-lrtb}, any path between the initial state~$q$ and
state~$u$ can only visit $v$-states. So there must be some state
between~$q$ and~$u$ having two $v$-successors, which is forbidden by
Formula~\eqref{eq-2suc}. The proof for~$t$ is similar.

\smallskip
Now, assume that some state~$u$ in~$R$ has two successors. From
Formula~\eqref{eq-lrtb}, both successors will be labelled with~$v$ or both
with~$\non v$, which again contradicts Formula~\eqref{eq-2suc}.

Finally, if a state~$u$ is labelled with both~$r$ and~$b$, then
Formula~\eqref{eq-lrtb} imposes that all its successors must have the
same labelling as~$u$ w.r.t.~$h$ and~$v$. From Lemma~\ref{lemma-alt},
$u$~is labelled with~$s$.
\end{proof}

\begin{lemma}\label{lemma-1hpred}
Pick a state~$u$, different from the initial state. 
Then $u$~is in~$L$ (resp.~in~$T$) if, and only~if, it~has no horizontal (resp.
vertical) predecessor.

Similarly, pick a state~$u$ not labelled with~$s$. Then $u\in R$ (resp. $u\in
B$) if, and only~if, it~has no horizontal (resp. vertical) successor.
\end{lemma}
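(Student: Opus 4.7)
The plan is to prove the four equivalences in symmetric pairs: the statement about~$T$ follows from the one about~$L$ by swapping $h\leftrightarrow v$ (hence $l\leftrightarrow t$ and horizontal~$\leftrightarrow$~vertical) throughout, and likewise $B$ follows from~$R$. So I only detail the arguments for~$L$ and~$R$. For the forward direction of the $L$-statement, I first observe that Formula~\eqref{eq-lrtb} forces $l$ to imply~$v$ on reachable states, and more strongly forces every path from~$q$ to an $l$-state to remain in $v$-states (since an $l$-labelled position must precede the first $\non v\et\non l$-position witnessing the until). If $u\in L$ had a horizontal predecessor~$w$, then by Lemma~\ref{lemma-alt}, $w$~would carry the opposite $v$-label to~$u$, hence~$\non v$; concatenating any finite path from~$q$ to~$w$ with the transition $w\to u$ would give a path from~$q$ to~$u$ passing through a $\non v$-state, contradicting the previous remark.

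The converse is the technical core of the lemma. Suppose $u\neq q$ has no horizontal predecessor. Since the self-loop at~$s$ makes~$s$ a horizontal predecessor of itself, $u\neq s$, and consequently no intermediate state on a path from~$q$ to~$u$ can equal~$s$ (otherwise all later states on that path would be~$s$, forcing $u=s$). Applying Formula~\eqref{eq-hv} with $\gamma$ labelling only~$u$, the vertical-then-horizontal witness path from~$q$ to~$u$ must have empty horizontal part---otherwise its penultimate state would be a horizontal predecessor of~$u$---so it is purely vertical, showing that $u$~is $v$-labelled and that \emph{some} path from~$q$ to~$u$ visits only $v$-states. The main obstacle is upgrading ``some'' to ``every''. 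Given any path $\pi=q_0\to\dots\to q_n=u$ visiting some $\non v$-state, let $q_j$~be the last such state, so $j<n$, the transition $q_j\to q_{j+1}$ is horizontal, and $q_{j+1}\to q_{j+2}$ is vertical (Lemma~\ref{lemma-alt}). Formula~\eqref{eq-square2} applied at~$q_j$ with $d=\non v$ and $\gamma$ marking only~$q_{j+2}$ can be invoked, since its right-hand-side witness $q_j\to q_{j+1}\to q_{j+2}$ holds with $q_{j+1}$ being $v$-labelled and distinct from~$s$; it produces a vertical successor~$q'_{j+1}$ of~$q_j$ with $q'_{j+1}\to q_{j+2}$. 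Substituting $q'_{j+1}$ for~$q_{j+1}$ in~$\pi$ yields a new path whose last $\non v$-state has shifted one step closer to~$u$; iterating, the horizontal transition is pushed to the final position $q_{n-1}\to u$, exhibiting a horizontal predecessor of~$u$ and the desired contradiction.

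For the $R$-statement, the forward direction is immediate: $u\in R$ forces $\All\G v$ or $\All\G\non v$ at~$u$ by Formula~\eqref{eq-lrtb}, so no successor of~$u$ may differ from it in its $v$-label, ruling out horizontal successors. For the converse, assume $u\neq s$ has no horizontal successor; by Formula~\eqref{eq-2suc}, $u$~has exactly one successor~$u'$, which must be vertical. If $u'\neq s$ and $u'$ had a horizontal successor~$w$ (necessarily of opposite $v$-label to~$u$), Formula~\eqref{eq-square2} at~$u$ with $d$ set to~$u$'s $v$-label and $\gamma$ marking only~$w$ would produce a horizontal successor of~$u$ also reaching~$w$, contradicting the hypothesis. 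Hence either $u'=s$, in which case every further descendant of~$u$ equals~$s$ and shares~$u$'s $v$-label, or $u'\neq s$ has no horizontal successor and the same reasoning applies inductively. Either way, every descendant of~$u$ shares its $v$-label, establishing $\All\G v$ or $\All\G\non v$ at~$u$, hence $u\in R$.
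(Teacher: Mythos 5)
Your proof is correct, and it differs from the paper's in a couple of instructive ways. The paper proves the $L$-equivalence by a four-way case analysis on the $(h,v)$-labelling of~$u$, using Lemma~\ref{lemma-predlrtb} to get a unique predecessor in the ``only if'' direction; you argue both directions directly from Formulas~\eqref{eq-lrtb} and~\eqref{eq-hv} plus Lemma~\ref{lemma-alt}, which works equally well. The one place where you over-engineer is the converse of the $L$-statement: the ``main obstacle'' you identify---upgrading ``some path from~$q$ to~$u$ visits only $v$-states'' to ``every path does''---is not an obstacle at all. The first conjunct of Formula~\eqref{eq-lrtb} is $\All(v\et l)\Until(\non v\et\non l)$, a \emph{universally} path-quantified until, so it applies in particular to the single all-$v$ path you already extracted from Formula~\eqref{eq-hv}: along that path the witness of the until must occur strictly after~$u$, so $u\models v\et l$ and $u\in L$ immediately. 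This is how the paper concludes, and it also supplies the closing step ``hence $u\in L$'' that your paragraph leaves implicit (you end on the contradiction without drawing the conclusion). Your path surgery via Formula~\eqref{eq-square2} is correct, just redundant. Conversely, for the $R$-statement you are \emph{more} careful than the paper: its one-sentence argument silently identifies ``$u$ has only vertical successors'' with the hereditary condition $\All\G v\ou\All\G\non v$ that actually characterises~$r$ in Formula~\eqref{eq-lrtb}, and your inductive propagation of ``no horizontal successor'' down the unique successor chain (via Formulas~\eqref{eq-2suc} and~\eqref{eq-square2}, terminating at the $s$-loop) is exactly the argument needed to justify that identification. A last small point: in the forward $L$-direction, applying Lemma~\ref{lemma-alt} to a horizontal predecessor~$w$ of~$u$ requires $w\neq u$, i.e., $u\neq s$; this does hold because Formula~\eqref{eq-lrtb} forces $s\notin L$, but you only rule out $u=s$ in the converse direction.
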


\begin{proof}
  We begin with proving the equivalence for the horizontal case. The vertical
  case is similar. We~consider the four possible labellings of~$u$ w.r.t.~$h$
  and~$v$:
\begin{itemize}
\item if~$u\models h \et v$: from Formula~\eqref{eq-hv}, there exists
  a path from~$q$ to~$u$ that is made of two parts: first a vertical path
  visiting only $v$-states, followed by a horizontal path visiting only
  $h$-states. In~case $u$ has no $h$-predecessor, it~must be the case
  that the second part is trivial, so that $u$~can be reached from~$q$
  by a path visiting only $v$-states. From Formula~\eqref{eq-lrtb},
  $u$~is labelled with~$l$. 

  Conversely, if $u$~is in~$L$, from
  Lemma~\ref{lemma-predlrtb} it has only one predecessor. That
  predecessor must be labelled with~$v$ (Formula~\eqref{eq-lrtb}),
  hence it cannot be labelled with~$h$ (Lemma~\ref{lemma-alt}).
  So $u$~has no $h$-predecessor.

\item if $u\models \non h \et v$: the same arguments apply, replacing $h$
  with~$\non h$. 

\item if $u\models h \et \non v$: again from Formula~\eqref{eq-hv},
  we~get the existence of a path from~$q$ to~$u$ visiting only
  $v$-states first, and only $h$-states in a second part. Now, because
  $u\models \non v$, the second part must contain at least two states,
  so that $u$ has a predecessor labelled with~$h$. Moreover, as $u$ is
  labelled with~$\non v$, it~cannot be in~$L$ (Formula~\eqref{eq-lrtb}).

\item if $u\models \non h \et \non v$, the same arguments apply.
\end{itemize}

\smallskip
 Now, the proof for~$R$ (and~$B$) is even simpler:
Formula~\eqref{eq-lrtb} precisely says that the states labelled with~$r$
(resp.~$b$) are precisely those that have only vertical (resp. horizontal)
successors. Apart for the $s$-state, this entails that those states do not
have horizontal (resp. vertical) successors.
\end{proof}

\begin{lemma}\label{lemma-orderlt}
Pick two states~$u$ and~$u'$ in~$L$ (resp.~in~$T$). Then there is a vertical
(resp. horizontal) path between~$u$ and~$u'$ (in~one or the other direction). 
\end{lemma}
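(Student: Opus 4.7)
The plan is to show that the set of $L$-states forms a single linear chain starting at~$q_0$ whose edges are all vertical; the conclusion for~$L$ is then immediate, and the case of~$T$ is handled symmetrically using the conjunct $\All(h \et t)\Until(\non h \et \non t)$ of Formula~\eqref{eq-lrtb} in place of $\All(v \et l)\Until(\non v \et \non l)$.

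First I would exploit the until-formula $\All(v \et l)\Until(\non v \et \non l)$, which holds at~$q_0$ by Formula~\eqref{eq-lrtb}, to prove that every $L$-state satisfies~$v$: along any path from~$q_0$ to a state~$u$ with $u\models l$, the ``$v \et l$''-phase cannot yet be over at~$u$, so $u\models v$. Applying the same observation one step earlier, if $u\in L\setminus\{q_0\}$ then the unique predecessor~$w$ of~$u$ supplied by Lemma~\ref{lemma-predlrtb} still lies in the $v\et l$-phase, hence $w\in L$; moreover, since both $u$ and~$w$ satisfy~$v$, the transition $(w,u)$ is vertical.

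Next I would observe that each $L$-state has at most one $L$-successor: by Formula~\eqref{eq-2suc}, whenever a state has two successors they carry opposite $v$-labels, so at most one of them satisfies~$v$, and only the $v$-labelled successor can possibly lie in~$L$. Combining these facts, iterating the (at most one) $L$-successor from~$q_0$ defines a vertical chain $q_0 = u_0 \to u_1 \to \cdots$ contained in~$L$; conversely, iterating the unique $L$-predecessor from any $u \in L$ yields a chain in~$L$ reaching back to~$q_0$, which must coincide with an initial segment of the forward chain by uniqueness at each step. Hence every $L$-state sits on this single chain, and given any $u, u' \in L$ one of them is an ancestor of the other, producing the desired vertical path.

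The main subtlety is to argue that the forward chain from~$q_0$ and the backward chains from individual $L$-states agree globally, so that $L$ has no ``branches'' off the main chain. This is precisely guaranteed by the combination of uniqueness of $L$-predecessors (Lemma~\ref{lemma-predlrtb}) with the at-most-one-$L$-successor property extracted from Formula~\eqref{eq-2suc}; once this bookkeeping is in place, the rest of the argument is routine.
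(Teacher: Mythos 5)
Your proof is correct and follows essentially the same route as the paper's: both rest on Formula~\eqref{eq-lrtb} forcing every $L$-state to lie on a purely vertical path from the initial state, and on Formula~\eqref{eq-2suc} forbidding a state from having two vertical successors. The paper packages this as a longest-common-prefix argument on the two vertical paths from~$q$ to~$u$ and to~$u'$, while you assemble the same facts (together with the predecessor uniqueness of Lemma~\ref{lemma-predlrtb}) into the statement that $L$ forms a single vertical chain --- a presentational difference only.
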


\begin{proof}
From Formula~\eqref{eq-lrtb}, there exist vertical paths from the initial
state~$q$ to~$u$ and to~$u'$. Let~$\pi$ be the longest common prefix of these
two paths: it~contains at least~$q$. Consider its last state~$x$: if~$x$
is~$u$ or~$u'$, then our result follows. Otherwise, there are two distinct
vertical paths from~$x$ to~$u$ and~$u'$, which means that~$x$ has two vertical
successors, contradicting Formula~\eqref{eq-2suc}. The proof for~$T$ is similar.
\end{proof}

Following Lemma~\eqref{lemma-orderlt}, we~can define a binary
relation~$\orderL$ on~$L$ (resp.~$\orderT$ on~$T$) by letting~$u \orderL u'$
if, and only~if, there is a (vertical) path from~$u$ to~$u'$ (resp.~$u \orderT
u'$ if, and only~if, there is a horizontal path from~$u$ to~$u'$). These are
easily seen to be ordering relations, because $\calS$~is (mostly) acyclic.
Lemma~\ref{lemma-orderlt} entails that these orders are total, with~$q$ as
minimal element.

\medskip 

Now, pick a state~$u$. By~Formulas~\eqref{eq-hv} and~\eqref{eq-lrtb}, there
exist (at~least) one $l$-state~$x$ such that there is a horizontal path
from~$x$ to~$u$. Similarly, there is (at~least) one $t$-state~$y$ with a
vertical path from~$y$ to~$u$. Notice that from Formula~\eqref{eq-2suc},
we~know that there is only one (maximal) horizontal (resp. vertical) path
starting in any given state. We~now prove that the states~$x$ and~$y$ above
are uniquely determined from~$u$. 

\begin{lemma}\label{lemma-uniquexy}
  Let~$u$ be any state of~$\calS$. Let~$x$ and~$x'$ be $l$-states
  (resp.~$t$-states) such that $u$ is on the horizontal (resp. vertical) paths
  from~$x$ and from~$x'$. Then $x=x'$.
\end{lemma}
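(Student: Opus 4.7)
The plan is to argue by contradiction, treating the $L$-case (the $T$-case follows symmetrically by swapping the roles of $h/v$ and $L/T$). Suppose $x \neq x'$ are both in $L$ with horizontal paths reaching $u$. Since horizontal paths preserve the $h$-label, $h(x)=h(x')=h(u)$; and since all $L$-states carry~$v$ (by Formula~\eqref{eq-lrtb}), the vertical path inside~$L$ connecting $x$ and~$x'$ supplied by Lemma~\ref{lemma-orderlt} must have even length $2k \geq 2$, because the $h$-label alternates along any vertical path (Lemma~\ref{lemma-alt}) while $x$ and~$x'$ share an $h$-label.

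The heart of the proof is a commuting-square propagation using Formula~\eqref{eq-square2}. Denote $x$'s horizontal path by $z_0 = x \to z_1 \to \cdots \to z_m = u$, and let $y$ be $x$'s (unique) vertical successor in~$L$. I~would show by induction on~$i$ that $y$ admits a horizontal path whose $i$-th state is the vertical successor of~$z_i$: the inductive step applies Formula~\eqref{eq-square2} at~$z_i$ (with $d=h_u$, and $\gamma$ taken to be $z_{i+1}$'s vertical successor) to equate the horizontal-then-vertical corner with the vertical-then-horizontal one, forcing the horizontal successor of $y$'s $i$-th state to coincide with the vertical successor of~$z_{i+1}$. Iterating this row-shift $2k$ times along the vertical path from~$x$ through~$L$ down to~$x'$ then shows that the $m$-th state on $x'$'s horizontal path (unique by Formula~\eqref{eq-2suc}) is the $2k$-th iterated vertical successor of~$u$. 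Since $\calS$ is acyclic outside the unique self-loop at~$s$ (a consequence of $\All\F s$ in the formula defining~$s$) and $u \neq s$ (as $u$ has horizontal predecessors), this $2k$-th vertical successor of~$u$ is distinct from~$u$ itself; yet, by the commuting-square correspondence, $u$ must appear at position~$m$ on $x'$'s horizontal path, yielding the contradiction $u = u^{v^{2k}}$.

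The main obstacle is ensuring that the propagation step is valid at every intermediate index: each application of Formula~\eqref{eq-square2} requires that $z_{i+1}$ genuinely has a vertical successor and that the corresponding state on $y$'s row is not the self-loop state~$s$. Both side conditions are ensured by the very hypothesis that $x'$'s horizontal path already traverses to~$u$'s column at a row lying $2k$ steps below~$x$ in~$L$: if any $z_j$ lay in~$B$ or some shifted row crossed~$s$ prematurely, then the rectangular region bounded by $x$, $x'$, and $u$'s column could not be fully populated, and no horizontal path from~$x'$ to~$u$ would exist. Extracting these side conditions rigorously from Formula~\eqref{eq-lrtb} (which characterises the border set~$B$) and from the $\All\F s$-uniqueness of the self-loop state is the main technical bookkeeping of the proof.
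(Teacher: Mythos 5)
Your row-shift construction is sound and is essentially the first half of the paper's own argument (the paper builds the same ``grid'' $(u_{i,j})$ by iterating Formula~\eqref{eq-square2} from the vertical path joining the two $l$-states and the horizontal path out of the upper one). The gap is in how you close the argument. What the propagation actually gives you is that the $m$-th state of $x'$'s horizontal path equals the $2k$-th iterated vertical successor of~$u$, call it~$w$. It does \emph{not} give you that $u$ sits at position~$m$ of that path: the hypothesis only says $u$ occurs \emph{somewhere} on $x'$'s horizontal path, at some position~$m'$. Your sentence ``by the commuting-square correspondence, $u$ must appear at position~$m$'' is exactly the unjustified step. If $m'>m$ you do get a cycle ($w\to u$ horizontally and $u\to w$ vertically) and can conclude by acyclicity; but if $m'<m$ you are left with a state ($u$) that has both a non-trivial horizontal path and a non-trivial vertical path to the same state~$w$, and this configuration is not, by itself, contradictory --- none of \eqref{eq-2suc}--\eqref{eq-square2} nor acyclicity rules it out directly.

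That residual case is precisely what the second half of the paper's proof is for, and it is entirely absent from your proposal: one follows the unique vertical path from~$u$ down to a $b$-state~$z$ guaranteed by Formula~\eqref{eq-HV}, builds a second grid from the competing horizontal and vertical paths using Formula~\eqref{eq-square1}, and observes that $z$ would then occur at a position where it still has a vertical successor while not being the $s$-state --- contradicting Lemma~\ref{lemma-1hpred}. Without some argument of this kind (or another way to exclude the $m'\ne m$ configurations), the proof does not go through. A secondary remark: your closing paragraph tries to discharge the side conditions of Formula~\eqref{eq-square2} by appealing to the fact that ``no horizontal path from $x'$ to $u$ would exist'' otherwise, but the existence of that path is the hypothesis you are trying to contradict, and it does not by itself populate the rectangle at the rows you need; this part also needs the border lemmas (\eqref{eq-lrtb}, Lemma~\ref{lemma-1hpred}) rather than the circular appeal.
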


\begin{proof}
  We~prove the ``horizontal'' case, the other one being similar. So we~assume
  we~have two different states~$x$ and~$x'$, and w.l.o.g. that $x' \orderL x$.
  We show that there exists a ``grid'' containing~$q$, $x$
  and~$u$, \ie, a sequence of states $(u_{i,j})_{0\leq i\leq m, 0\leq j\leq
    n}$, having the following properties:
  \begin{enumerate}
  \item  for all~$0\leq i\leq m-1$ and~$0\leq j\leq
    n$, $(u_{i,j},u_{i+1,j})$ is a horizontal transition;
  \item for all~$0\leq i\leq m$ and~$0\leq j\leq n-1$, $(u_{i,j},u_{i,j+1})$
    is a vertical transition; 
  \item\label{item3} $u_{0,0}=q$, $u_{0,n}=x$ and $u_{m,n}=u$.
  \end{enumerate}
  The grid is built by repeatedly applying Formula~\eqref{eq-square2} as
  follows: first, there is a unique path from~$q$ to~$x$, which defines the
  values $(u_{0,j})_{0\leq j\leq n}$. Similarly, the unique path from~$x$
  to~$u$ defines the values $(u_{i,n})_{0\leq i\leq m}$. Notice that
  \eqref{item3} is fulfilled with this definition.

  Now we~apply Formula~\eqref{eq-square2} to~$u_{0,n-1}$, which has a vertical
  successor~$u_{0,n}$ followed by a horizontal successor~$u_{1,n}$. Hence
  there must exist a horizontal successor~$u_{1,n-1}$ of~$u_{0,n-1}$ of which
  $u_{1,n}$ is a vertical successor. The same argument applies to all states
  between~$u_{0,n-2}$ and~$u_{0,0}$, thus forming a vertical
  path~$(u_{1,j})_{0\leq j\leq n}$. The same argument applies again to form
  the subsequent vertical paths, until building path $(u_{m,j})_{0\leq j\leq
    n}$. 

  Since $x'\orderL x$, there must exist an integer~$k$ for which $u_{0,k}=x'$.
  By~construction, we~know that $u$~appears on the horizontal path
  originating from~$x'$. But it~cannot be the case that there is a~$p$ such
  that $u_{p,k}=u$: this would give rise to a cycle on~$u$. Hence there must
  be a horizontal path from~$u_{m,k}$ to~$u$.

  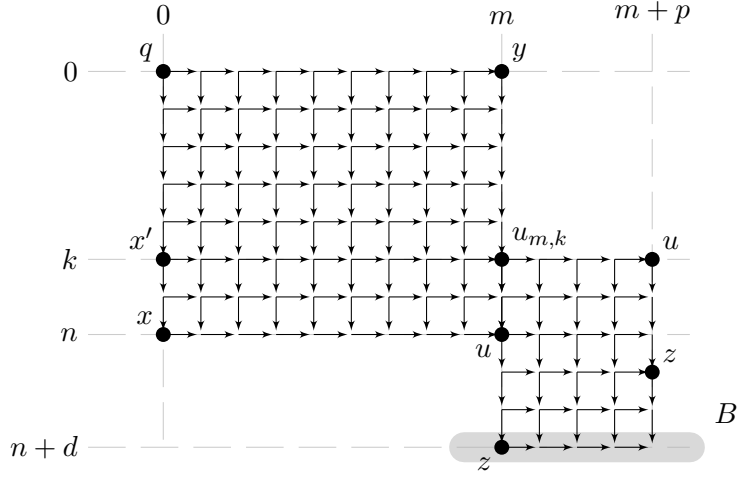
\begin{figure}[!ht]
    \centering
    \begin{tikzpicture}
      \foreach \x/\n in {0/0,2.5/k,3.5/n,5/n+d}
        {\draw[dash pattern=on 5mm off 2mm,black!20!white] (-1,-\x) 
            node[left,black] {$\n$} -- +(8,0);}
      \foreach \x/\n in {0/0,4.5/m,6.5/m+p}
        {\draw[dash pattern=on 5mm off 2mm,black!20!white] (\x,.5) 
        node[above,black] {$\n$} -- +(0,-5.5);}
      \draw[line width=4mm,black!30!white,opacity=.5,line cap=round] (4,-5) --
        (7,-5) node[above right,opacity=1,black] {$B$};
      \draw (0,0) node[fill=black,circle,inner sep=.7mm] (Q) {} 
        node[above left] {$q$};
      \draw (0,-2.5) node[fill=black,circle,inner sep=.7mm] (X') {} 
        node[above left] {$x'$};
      \draw (0,-3.5) node[fill=black,circle,inner sep=.7mm] (X) {} 
        node[above left] {$x$};
      \draw (4.5,0) node[fill=black,circle,inner sep=.7mm] (Y) {} 
        node[above right] {$y$};
      \draw (4.5,-3.5) node[fill=black,circle,inner sep=.7mm] (U) {}
        node[below left] {$u$};
      \draw (6.5,-2.5) node[fill=black,circle,inner sep=.7mm] (U') {}
        node[above right] {$u$};
      \draw (4.5,-5) node[fill=black,circle,inner sep=.7mm] (Z) {}
        node[below left] {$z$};
      \draw (6.5,-4) node[fill=black,circle,inner sep=.7mm] (Z') {}
        node[above right] {$z$};
      \draw (4.5,-2.5) node[fill=black,circle,inner sep=.7mm] (Umk) {} 
        node[above right] {$u_{m,k}$};
      \foreach \x in {0,.5,...,4}
        {\foreach \y in {0,.5,...,3}
          {
            \draw[-latex'] (\x,-\y) -- +(.45,0);
            \draw[-latex'] (\x,-\y) -- +(0,-.45);
          }
        }
      \foreach \x in {0,.5,...,4}
        {\draw[-latex'] (\x,-3.5) -- +(.45,0);}
      \foreach \y in {0,.5,...,3}
        {\draw[-latex'] (4.5,-\y) -- +(0,-.45);}
      \foreach \x in {4.5,5,5.5,6}
        {\foreach \y in {2.5,3,...,4.5}
          {
            \draw[-latex'] (\x,-\y) -- +(.45,0);
            \draw[-latex'] (\x,-\y) -- +(0,-.45);
          }
        }
      \foreach \x in {4.5,5,5.5,6}
        {\draw[-latex'] (\x,-5) -- +(.45,0);}
      \foreach \y in {2.5,3,...,4.5}
        {\draw[-latex'] (6.5,-\y) -- +(0,-.45);}
      \end{tikzpicture}
    \caption{Overview of the construction of the proof of Lemma~\ref{lemma-uniquexy}}
    \label{fig-u}
  \end{figure}
  Let~us recap the situation: we~have a state ($u_{m,k}$) from which there is
  a non-trivial vertical path $(u_{m,k+j})_{0\leq j\leq n-k}$ to~$u$,
  as~well as a non-trivial horizontal path to~$u$, which we~write
  $(u_{m+i,k})_{0\leq i\leq p}$ (see~Fig.~\ref{fig-u}). 
  Now, from Formula~\eqref{eq-HV}, from~$u$ there is a vertical path to
  a~$b$-state~$z$. 
  That path is unique, thanks to Formula~\eqref{eq-2suc}.
  Hence there is a unique integer~$d$ and a unique sequence of states
  $(u_{m,n+i})_{0\leq i\leq d}$ that forms a vertical path from~$u_{m,n}=u$ to
  a $b$-state~$u_{m,n+d}=z$. 
  Notice that $z$ is not labelled with~$s$, because it~has vertical
  predecessors that have horizontal successors.
  Now, starting from the horizontal path
  $(u_{m+i,k})_{0\leq i\leq p}$ and the vertical path $(u_{m,k+j})_{0\leq
    j\leq n-k+d}$ and applying Formula~\eqref{eq-square1}, we~build a grid 
  $(u_{m+i,k+j})_{0\leq i\leq p, 0\leq j\leq n-k+d}$. But since there is only
  one maximal vertical path from~$u$, it~must be the case that $u_{m+p,k+d}=z$:
  hence this state is not labelled with~$s$, but it has a vertical successor
  and belongs to~$B$, which is a contradiction.
\end{proof}

\medskip
We are now ready for proving our result:

\begin{proposition}\label{prop-grid}
Write~$\phi$ for the conjunction of all formulas above. Then $\calS,
q\models\phi$ if, and only if, the part of~$\calS$ that is reachable
from~$q$ is a (two-dimensional) grid (\ie, it~can be defined as the
product of two finite-state ``linear'' Kripke structures).
\end{proposition}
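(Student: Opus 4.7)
The plan is to prove both implications of the equivalence. For the easy direction (grid $\Rightarrow \phi$), I would equip the product $L_{m+1}\times L_{n+1}$ with the canonical labelling: label $(i,j)$ with $h$ iff $i$ is even, with $v$ iff $j$ is even, label the single corner $(m,n)$ carrying the self-loop with $s$, and label the four borders with $l,r,t,b$ in the obvious way. Each of Formulas~\eqref{eq-2suc}--\eqref{eq-lrtb} then reduces to a routine verification: alternation of $h,v$ between neighbouring cells yields~\eqref{eq-2suc}; the diamond shape of a $2\times 2$ block gives~\eqref{eq-square1} and~\eqref{eq-square2}; the fact that each cell is reachable by going right-then-down or down-then-right from~$q$ gives~\eqref{eq-init},~\eqref{eq-hv} and~\eqref{eq-HV}; and the border definitions match~\eqref{eq-lrtb} directly.

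For the hard direction, I would assemble the lemmas already proved to construct an explicit isomorphism between the reachable part of~$\calS$ and a product $L_{m+1}\times L_{n+1}$. Using Lemma~\ref{lemma-orderlt}, enumerate $L=\{q=x_0,x_1,\ldots,x_n\}$ in increasing $\orderL$-order and $T=\{q=y_0,y_1,\ldots,y_m\}$ in increasing $\orderT$-order; these are finite because $\calS$~is finite. By Formula~\eqref{eq-hv} every reachable state~$u$ lies on a horizontal path from some $l$-state and on a vertical path from some $t$-state, and by Lemma~\ref{lemma-uniquexy} those witnesses are unique, which provides a well-defined map $u\mapsto(x(u),y(u))\in L\times T$. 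I would then define the candidate inverse $\Phi\colon (i,j)\mapsto$ ``the state with $x(\cdot)=x_j$ and $y(\cdot)=y_i$''.

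The main technical step is the bijectivity of~$\Phi$. Surjectivity is shown by the same rectangle-filling argument used inside the proof of Lemma~\ref{lemma-uniquexy}: starting from the (unique) vertical path $q\rightsquigarrow x_j$ and the (unique) horizontal path $q\rightsquigarrow y_i$, iterate Formula~\eqref{eq-square2} to close off the $(i{+}1)\times(j{+}1)$ grid whose bottom-right corner is the required~$\Phi(i,j)$. Injectivity follows because two states sharing coordinates $(x_j,y_i)$ both lie at the $i$-th position along the horizontal path issued from~$x_j$, and that path is unique by Formula~\eqref{eq-2suc}. Finally, Lemma~\ref{lemma-alt} ensures every non-loop transition is either horizontal or vertical, and the square formulas force horizontal (resp.\ vertical) transitions to increment the $T$-coordinate (resp.\ the $L$-coordinate) by exactly one, so the transition relation of~$\calS$ matches that of $L_{m+1}\times L_{n+1}$.

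The hardest part will be showing that $\Phi$ is actually onto the whole reachable part of~$\calS$, i.e.\ that no reachable state lies ``outside'' the rectangle $\{x_0,\ldots,x_n\}\times\{y_0,\ldots,y_m\}$. For this I would combine Lemma~\ref{lemma-1hpred} with the characterisation of $B$ and~$R$ in Formula~\eqref{eq-lrtb}: any state trying to escape the rectangle would have to be a new $l$- or $t$-state (by the lack of an $h$- or $v$-predecessor), contradicting maximality of~$n$ or~$m$. The degenerate one-dimensional case (where $m=0$ or $n=0$) is handled uniformly, since Formula~\eqref{eq-lrtb} then forces $\calS$ to reduce to a single linear chain ending at the $s$-state.
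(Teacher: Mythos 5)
Your proof follows essentially the same route as the paper's: the easy direction is a direct verification on the canonically labelled product, and the hard direction assigns to each reachable state the unique $l$-state and $t$-state given by Formula~\eqref{eq-hv} and Lemma~\ref{lemma-uniquexy}, then uses the square formulas to fill rectangles and establish the bijection with $L\times T$ (the paper phrases the coordinates as $h$- and $v$-equivalence classes, but this is the same construction). The only quibble is your closing remark about the one-dimensional case: Formula~\eqref{eq-init} forces the initial state to have two successors, so degenerate grids are excluded from the models of~$\phi$ rather than ``handled uniformly''; the paper treats linear structures with a separate formula outside this proposition.
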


\begin{proof}
One the one hand, it~is clear that a grid can be labelled with~$h$,
$v$, $l$, $r$, $t$, $b$ and~$s$ in such a way that $\phi$~holds.

We~now prove the converse, assuming that all the states of~$\calS$ are
reachable from~$q$.  We assume that~$\calS$ is labelled with~$h$, $v$,
$l$, $r$, $t$, $b$ and~$s$ in a way that witnesses all the formulas
constituting~$\phi$.
 
Using~$h$ and~$v$, we~define the following relations: two states~$u$
and~$u'$ are \emph{$h$-equivalent} if there is a
sequence~$(u_i)_{0\leq i\leq k}$ of states such that $u_0=u$,
$u_k=u'$, the labelling of~$(u_i)_I$ is constant w.r.t.~$h$, and for
all~$0\leq i\leq k-1$, there is a transition $(u_i,u_{i+1})$ or
$(u_{i+1},u_i)$.  Notice that in particular $u$ and~$u'$ have the same
$h$-labelling. $v$-equivalence is defined similarly. Clearly enough,
these are equivalence relations, and we~write $\calH$ and~$\calV$ for
the sets of equivalence classes of these relations. Each set forms a
partition of the set of states of~$\calS$.

We~also define the sets~$L$, $R$, $T$ and~$B$ as the sets of states labelled
with the corresponding atomic propositions ($l$, $r$, $t$ and~$b$,
respectively). First notice that $L$ and~$R$ are sets of~$\calV$, and $T$
and~$B$ are in~$\calH$:

\begin{itemize}
\item $L\in \calV$: the initial state must be labelled with~$l$
  (and~$v$). We~prove that $L$~is the $v$-equivalence class of~$q$:
  first, $L$~contains precisely those states that are reachable
  from~$q$ via a vertical path. Since any vertical predecessor of a
  state in~$L$ is in~$L$ (Lemma~\ref{lemma-predlrtb}), we~get the result.

\item $R\in\calV$: the $s$-state is in~$R$, and we~prove that $R$~is
  the $v$-equivalence class of the $s$-state. For more clarity,
  we~assume that the $s$-state is labelled with~$v$. Then
  $R$~contains exactly the $v$-states from which only $v$-states are
  reachable. This proves the result.
\end{itemize}
The proof for~$T$ and~$B$ is similar. 

\medskip
Now, from Formula~\eqref{eq-hv} and Lemma~\ref{lemma-uniquexy}, any
$H\in\calH$ contains exactly one element from~$L$, which we~write
$l(H)$. Similarly, any $V\in\calV$ is the equivalence class of a
unique element of~$T$, denoted with $t(V)$.  Then any $H\in\calH$ and
$V\in\calV$ have non-empty intersection: this can be proven by
building a grid containing~$q$, $l(H)$ and~$t(V)$. If some~$H$ and~$V$
were to have two states in common, we~would be in a similar situation
as in the proof of Lemma~\ref{lemma-uniquexy}. As~a consequence, all
the elements of~$\calH$ contain the same number of states (namely, the number
of elements of~$\calV$). Similarly
for the elements of~$\calV$. Finally, since each state~$u$ in a
set~$H$ of~$\calH$ also belongs to~$\calV$, it~can be associated with
an element~$t(u)$ of~$T$. This gives rise to an order in each
set~$H$. One easily sees that these orders are compatible with the
``vertical successor'' relation, meaning that if~$u \prec_H u'$, then
their vertical successors~$x$ and~$x'$ satisfy $x \prec_{H'} x'$.
Hence the graph of~$\calS$ is isomorphic to the product of $L$ and~$T$
(augmented with a self-loop on their last states).
\end{proof}

Notice that we only characterised two-dimensional
grids. One-dimensional grids can be characterised by
\[
\exists r.\ \forall\gamma. \All\G(\Ex\X\true \et (\Ex\X\gamma \thn \All\X\gamma))
 \et (\Ex\F(r \et \gamma) \thn \All\G(r\thn \gamma))\et \Ex\F\All\G r .
\]
The first conjunct enforces that each state as exactly one successor;
the second conjunct expresses the fact at most one state is labelled
with~$r$.  The last conjunct enforces that the $r$-state is eventually
reached and never escaped, thus requiring that it~has a self-loop.

\end{document}